  \providecommand\BibTeX{{%
    \normalfont B\kern-0.5em{\scshape i\kern-0.25em b}\kern-0.8em\TeX}}}
\definecolor{MyRed}{rgb}{0.65,0.07,0.09}
\definecolor{MyGreen}{rgb}{0.18,0.55,0.09}
\newtheorem{theorem}{Theorem}[section]
\theoremstyle{definition}
\newcolumntype{Y}{>{\centering\arraybackslash}X}
\newcommand{\Mod}[1]{\ (\mathrm{mod}\ #1)}
\newtheorem{remark}{Remark}
\newcommand{\revv}[1]{{#1}}
\newcommand{\revvdel}[1]{}
\newcommand{\revvv}[1]{{#1}}
\newcommand{\junweirev}[1]{{#1}}
\newcommand{\WrapFig}[5]
{
\begin{wrapfigure}{r}{{#4}\textwidth}
	\vspace{-.5\baselineskip}
	\centering
	\includegraphics[height={#5}in]{#1}
	\caption{#2}
	\vspace{-.25\baselineskip}
	\label{#3}
\end{wrapfigure}
}
\newif\ifverbose
\newcommand{\vb}[1]{\textcolor{red}{#1}}
\newcommand{\vb}[1]{}
\definecolor{newcolor}{rgb}{.8,.349,.1}
\begin{document}

\author{Junwei Zhou}
\email{zhoujw@umich.edu}
\affiliation{
\institution{University of Michigan}
\country{USA}
}

\author{Duowen Chen}
\email{dchen322@gatech.edu}
\affiliation{
\institution{Georgia Institute of Technology}
\country{USA}
}

\author{Molin Deng}
\email{mdeng47@gatech.edu}
\affiliation{
\institution{Georgia Institute of Technology}
\country{USA}
}

\author{Yitong Deng}
\email{yitongd@stanford.edu}
\affiliation{
\institution{Stanford University}
\country{USA}
}

\author{Yuchen Sun}
\email{yuchen.sun.eecs@gmail.com}
\affiliation{
\institution{Georgia Institute of Technology}
\country{USA}
}

\author{Sinan Wang}
\email{wsn1226@connect.hku.hk}
\affiliation{
\institution{University of Hong Kong}
\country{China}
}

\author{Shiying Xiong}
\email{shiying.xiong@zju.edu.cn}
\affiliation{
\institution{Zhejiang University}
\country{China}
}

\author{Bo Zhu}
\email{bo.zhu@gatech.edu}
\affiliation{
\institution{Georgia Institute of Technology}
\country{USA}
\state{(Junwei Zhou's work was done during a remote intern with Georgia Institute of Technology).}
}


\title{Eulerian-Lagrangian Fluid Simulation on Particle Flow Maps}

\begin{abstract}
\revv{We propose a novel Particle Flow Map (PFM) method to enable accurate long-range advection for incompressible fluid simulation.} The foundation of our method is the observation that a particle trajectory generated in a forward simulation naturally embodies a perfect flow map. Centered on this concept, we have developed an Eulerian-Lagrangian framework comprising four essential components: Lagrangian particles for a natural and precise representation of bidirectional flow maps; a dual-scale map representation to accommodate the mapping of various flow quantities; a particle-to-grid interpolation scheme for accurate quantity transfer from particles to grid nodes; and a hybrid impulse-based solver to enforce incompressibility on the grid. The efficacy of PFM has been demonstrated through various simulation scenarios, highlighting the evolution of complex vortical structures and the details of turbulent flows. Notably, compared to NFM, PFM reduces computing time by up to 49 times and memory consumption by up to 41\%, while enhancing vorticity preservation as evidenced in various tests like leapfrog, vortex tube, and turbulent flow.
\end{abstract}

\begin{CCSXML}
<ccs2012>
   <concept>
       <concept_id>10010147.10010341</concept_id>
       <concept_desc>Computing methodologies~Modeling and simulation</concept_desc>
       <concept_significance>500</concept_significance>
       </concept>
 </ccs2012>
\end{CCSXML}
\ccsdesc[500]{Computing methodologies~Modeling and simulation}

\keywords{Fluid Simulation, Eulerian-Lagrangian Method, Particle Flow Map}

\begin{teaserfigure}
 \centering
 \includegraphics[width=.99\textwidth]{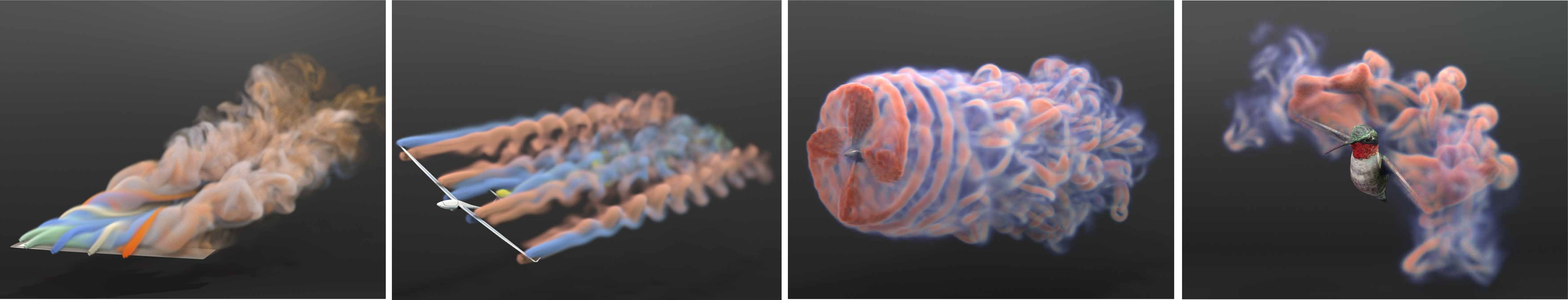}
 \caption{\textcolor{black}{Our novel particle flow map (PFM) methods can preserve vorticity by the state-of-\revvv{the}-art standard and excel in memory usage and computation time among methods that can give similar quality. Four simulation snapshots are shown using our PFM method including a delta wing plane at an angle-of-attack (\textit{left}), a flat wing plane (\textit{middle left}), a rotating propeller (\textit{middle right}) and a flapping hummingbird (\textit{right}).}}
 \label{fig:teaser}
\end{teaserfigure}

\maketitle

\section{Introduction}

\revv{Preserving vortical flow details during fluid simulations has garnered extensive attention in the fields of computer graphics and computational physics. Over recent years, researchers have developed various strategies to address this challenge, notably through advancements on three fronts: the development of conservative numerical schemes, such as the Affine Particle-in-Cell (APIC) method \cite{jiang2015affine}; vorticity-adapted geometric representations, for instance, the covector \revvv{fluids} (CF) approach \cite{nabizadeh2022covector}; and the creation of accurate, long-range flow maps, exemplified by the bidirectional mapping (BiMocq) method \cite{qu2019efficient}.}

\revv{A common foundational idea underpinning these three categories of works, which motivates the further design of structural preserving numerical schemes, is the development of a \textit{long-range}, \textit{bidirectional} flow map capable of accurately transporting \textit{vorticity-related physical quantities} between time frames.} 
\revv{The recent work on the \textit{Neural Flow Maps} method (NFM) \cite{deng2023fluid} further exemplifies this methodological philosophy (i.e., geometric representation + conservative numerical scheme + bidirectional mapping) by constructing accurate flow maps that transport fluid impulses across a long-range spatiotemporal domain. This method, requiring the online training of a neural network to compress the 4D spatiotemporal velocity field, facilitates a time-reversed marching scheme capable of achieving state-of-the-art vorticity preservation.}


\revv{Along this line of research, we explore low-cost algorithms to enable long-range, bidirectional flow maps. The key problem we aim to address lies in finding the most fitting discrete representation for flow map geometries in the spatiotemporal domain.}  
In particular, while \citet{deng2023fluid} champions a neural representation that can be integrated to recover flow maps at Eulerian grid points, we believe that Lagrangian particles make for the most natural representation that promises to generate accurate flow maps with a drastically reduced cost. To see this, one can consider that the initial and final positions of a forward-simulated particle, trivially known during the simulation process, form a near-perfect sample of the actual bidirectional flow maps created by the fluid flow. In other words, accurate flow maps are obtained \textit{for free} in Lagrangian simulations.

However, although particles trivially offer accurate flow map samples, the fact that the final positions of these sampled Lagrangian paths are not aligned with the grid nodes leads to a core issue. 
To counter this issue, ``virtual particle'' approaches, including the ones \revvv{employed by} NFM \cite{deng2023fluid}\revv{,} have been proposed, which trace virtual particles backward in time so that the samples will have their final positions aligned with grid points. Although virtual (backward) and real (forward) particles offer equally accurate samples of the flow map, the only difference being whether or not the endpoints align with the grid nodes, the amounts of computation that go into both are night-and-day different. While real particles are given as byproducts of the simulation, requiring no additional memory or computation, virtual particles require both a long history buffer and \revv{$O(n)$} number of substeps\revv{, where $n$ is the length of the flow map,} since solutions from past timesteps cannot be reused.

Our work takes a different path from the "virtual particle" approach. Instead of devoting excessive computational resources only to recover fluid impulse at grid points, we make use of the free samples of forward-simulated particles, compute accurate impulse at their non-grid end locations, and transfer these values onto grid points. The two perspectives\revv{, "virtual
particle" versus real particle,} are symmetric in construction: virtual particles combine accurate flow maps with a lossy grid-to-particle interpolation process since their \textit{initial} positions are not grid-aligned; real particles combine accurate flow maps with a lossy particle-to-grid process since their \textit{end} positions are not grid-aligned. Hence, from the accuracy standpoint, both methods are equivalent, whereas our proposed approach is more desirable by a large margin from the efficiency standpoint. Moreover, \revv{inspired by the conservative transfer approaches suggested by APIC \cite{jiang2015affine} and Taylor-PIC \cite{nakamura2023taylor},} we propose a novel adaptive flow-map scheme to refine the accuracy of the particle-to-grid process by employing flow maps to transport not only fluid quantities but also their gradients. 

On this foundation, we introduce Particle Flow Maps (PFM), a simulation method that achieves accurate flow maps on particles with substantially reduced computational demands compared to NFM. By adopting particles as direct representatives of flow maps, we avoid the backtrack substeps and the training process of the neural buffer. This efficiency enables PFM to operate around \textit{10-20 times faster} on average and up to \textit{49 times faster} than NFM while delivering comparable or superior simulation outcomes. Moreover, by eliminating the neural buffer, PFM achieves a \textit{memory savings of 29\%-41\% over NFM}. We validated our approach in multiple examples, including leapfrogging vortices, vortex-tube reconnections, and solid-driven vortices. In each case, PFM consistently exhibits comparable or superior vortex preservation, energy conservation, and visual intricacy \revvv{to those} achieved by NFM, all while demanding significantly less computational and memory costs than NFM. 

The key contributions of our work can be summarized as follows:
\begin{enumerate}
    \item We introduced a particle representation for long-range, bi-directional flow maps.
    \item We proposed a two-scale flow-map scheme, defined on a single particle trajectory, for mapping flow quantities with different reinitialization requirements.
    \item We developed a particle-to-grid interpolation scheme to transfer flow maps from particles to grid nodes.
    \item We presented a comprehensive Eulerian-Lagrangian solver based on the impulse fluid model, achieving state-of-the-art vorticity preservation capabilities while maintaining both low memory cost and high computational speed.
\end{enumerate}

\section{Related Work}

\paragraph{Advection Schemes}

The dissipation error inherent in the Stable Fluid framework \cite{stam1999stable} not only leads to unrealistic behavior in viscous fluid surfaces but also \revvv{causes} significant vorticity dissipation, especially when vortices are the primary focus. Various approaches have been adopted to address this issue. High-order interpolation schemes were proposed by \citet{losasso2006spatially} and \citet{nave2010gradient}, while \citet{kim2006advections} and \citet{selle2008unconditionally} introduced error compensation methods. The accuracy of backtracking was enhanced by \citet{jameson1981numerical, cho2018dual}, offering improvements over single-step semi-Lagrangian tracing. \citet{mullen2009energy} developed an unconditionally stable time integration scheme to combat numerical dissipation. Specifically focusing on vorticity, \citet{fedkiw2001visual} and \citet{zhang2015restoring} suggested grid-based approaches to conserve vorticity.
Subsequently, the advection-reflection method \cite{zehnder2018advection, narain2019second} was introduced as a simple yet effective modification to the original advection-projection scheme, resulting in significant advancements. More recently, an advection scheme based on the Kelvin circulation theorem \cite{nabizadeh2022covector} has shown state-of-the-art results.
\revv{The following methods} can also be viewed as advecting a material element, which includes point elements \cite{Chern2016,yang2021clebsch,Xiong2022Clebsch}, line elements \cite{Xiong2022Vortex}, and surface elements \cite{feng2022impulse}, rather than focusing solely on the components of the velocity field.
%

\paragraph{Flow Map Methods}
The concept of a flow map, initially known as the method of characteristic mapping (MCM), was first introduced by \citet{wiggert1976numerical}. This method diminishes numerical dissipation with a long-range mapping to track fluid quantities, thereby reducing the frequency of interpolations required. This idea was later adapted for the graphics community by \citet{tessendorf2011characteristic}. Subsequent research \cite{hachisuka2005combined, sato2017long, sato2018spatially, tessendorf2015advection} in this area utilized virtual particles to track the flow map yet faced substantial computational demands in terms of memory and time.
A bidirectional flow map approach was developed by \citet{qu2019efficient}, drawing inspiration from BFECC \cite{kim2006advections}, to enhance the mapping's accuracy. Building on this, \citet{nabizadeh2022covector} combined the impulse fluid model \cite{cortez1996impulse} with the flow map concept. In NFM \cite{deng2023fluid}, the authors introduced the concept of a \textit{perfect flow map} and employed a neural network to efficiently compress the storage of velocity fields required for reconstructing the flow map with high precision. These methods extensively use virtual particles to trace the flow map and employ various techniques to minimize the costs and errors associated with this approach.
A community focusing on flow data visualization studies particle trajectory as a visualization tool instead. One prominent example is the Lagrangian coherent structures (LCS) \cite{haller2000lagrangian, sun2016detection, macmillan2021most} and, specifically, finite time Lyapunov exponent (FTLE) \cite{leung2013backward, leung2011eulerian}. Recently, \citet{kommalapati2021machine} applied Deep Learning techniques to achieve super-resolution visualization of LCS. However, none of these data structures have been leveraged to facilitate simulation.
Moreover, utilizing features of Lagrangian elements to help shrink storage and speed up running time has proved its effectiveness in 3D Gaussian Splatting \cite{kerbl20233d} by reducing the training time of NeRF\cite{mildenhall2021nerf}.

\paragraph{Eulerian-Lagrangian Methods} 
To leverage the rapid convergence rate of grid-based Poisson solvers, such as MGPCG \cite{mcadams2010parallel}, a hybrid Eulerian-Lagrangian representation is needed. Such a choice is commonly made in the graphics community due to the significant viscosity reduction it offers with Lagrangian representation. Since the seminal work of PIC \cite{harlow1962particle} and FLIP \cite{brackbill1986flip} was introduced to graphics community \cite{zhu2005animating}, hybrid Eulerian-Lagrangian representations are widely used in fluid simulation \cite{gao2009simulating, hong2008adaptive, zhu2010creating, raveendran2011hybrid, ando2011particle, deng2022moving}. MPM, which can be seen as \revv{a} generalization of PIC/FLIP, has been utilized to simulate a variety of \revv{continuum} behaviors including snow \cite{stomakhin2013material}, phase changes \cite{stomakhin2014augmented}, foam \cite{yue2015continuum, ram2015material}, magnetized flow \cite{sun2021material}, fluid-structure interactions \cite{fei2017multi, fei2018multi, yan2018mpm, han2019hybrid, fang2020iq}, and sediment flow \cite{tampubolon2017multi, gao2018animating}. Further research has enhanced the accuracy of transferring between Lagrangian and Eulerian representations, focusing on momentum conservation \cite{jiang2015affine, fu2017polynomial}, handling discontinuous velocities \cite{hu2018moving}, \revv{energy dissipation \cite{fei2021revisiting}} and maintaining volume conservation \cite{qu2022power}. 

\paragraph{Impulse Fluid and Gauge Methods}

Flow maps are particularly relevant when considering the concept of the \textit{impulse variable}. This term was first introduced in \cite{buttke1992lagrangian} and offers an alternative formulation of the incompressible Navier-Stokes Equation through the use of a gauge variable and gauge transformation \cite{oseledets1989new, roberts1972hamiltonian, buttke1993velicity}. Various aspects of gauge freedom have been explored, including its application to surface turbulence \cite{buttke1993velicity, buttke1993turbulence}, numerical stability \cite{weinan2003gauge}, and fluid-structure interactions \cite{cortez1996impulse, summers2000representation}. More recent research by \citet{saye2016interfacial, saye2017implicit} has utilized gauge freedom to address interfacial discontinuities in density and viscosity in free surface flows and fluid-structure coupling.
The concept of \textit{gauge freedom} was revisited in the field of computer graphics by \citet{feng2022impulse}, \citet{yang2021clebsch}, and \citet{nabizadeh2022covector}. However, these methods encountered limitations due to the advection schemes' constraints. NFM \cite{deng2023fluid} addresses this issue by introducing a state-of-the-art neural hybrid advection scheme using flow maps. Despite its advancements, NFM is constrained by the requirements for neural buffer storage and extended training time.
%

\section{Physical Model}
\paragraph{Naming convention} 
We adopt the following notation conventions: lowercase letters for scalars (e.g., $t, n, w$), bold letters for vectors (e.g., $\bm X, \bm x, \bm \phi, \bm \psi$), and calligraphic font for matrices (e.g., $\mathcal{F}, \mathcal{T}$). Subscripts are used in two contexts: in a continuous setting, subscripts (typically \revv{$a, b, c$}) denote specific time instants; for example, \revv{$\mathcal{T}_b$} represents the backward map Jacobian at time instant \revv{$b$}. \junweirev{In a discrete setting, subscripts (typically \revv{$i, j, k, p$}) indicate primitive indices, such as \revv{$\bm x_p$} signifying the position of the \revv{$p$}-th particle.} The notation $[:,:]$ is used to specify a time period over which a map holds, for instance, \revv{$\mathcal{T}_{[a,c]}$} denotes the backward map Jacobian from time \revv{$c$} to time \revv{$a$}. \junweirev{However, mathematically, if \revv{$a < c$}, the backward map Jacobian from time \revv{$c$} to time \revv{$a$} should be denoted as \revv{$\mathcal{T}_{[c, a]}$}. For simplicity, in this paper, we adopt a convention where time in subscripts progresses from smaller to larger values, thus \revv{$\mathcal{T}_{[a, c]}$} is used to denote the backward map Jacobian from time \revv{$c$} to time \revv{$a$}. Similarly, the backward map from time \revv{$c$} to time \revv{$a$} is represented as \revv{$\bm \psi_{[a, c]}$}. In addition, when we refer to the time period $[0, t]$, we will simplify it as $t$. For instance, $\mathcal{T}_{[0, t]} = \mathcal{T}_t$.} We have summarized these important notations in Table~\ref{tab: notation_table}. We refer to Table~\ref{tab:grid_particle_quantity} for variables specific to discrete settings. 

\newcolumntype{z}{X}
\newcolumntype{s}{>{\hsize=.25\hsize}X}
\begin{table}
\caption{Summary of important notations used in the paper.}
\centering\small
\begin{tabularx}{0.47\textwidth}{scz}
\hlineB{3}
Notation & Type & Definition\\
\hlineB{2.5}
\hspace{12pt}$\bm X$ & vector & material point position at initial state\\
\hlineB{1}
\hspace{12pt}$\bm x$ & vector & material point position at terminated state\\
\hlineB{1}
\hspace{12pt}$t$ & scalar & time\\
\hlineB{1}
\hspace{12pt}$\bm \phi$ & vector & forward map\\
\hlineB{1}
\hspace{12pt}$\bm \psi$ & vector & backward map\\
\hlineB{1}
\hspace{12pt}$\mathcal{F}$ & matrix & forward map Jacobian\\
\hlineB{1}
\hspace{12pt}$\mathcal{T}$ & matrix & backward map Jacobian\\
\hlineB{1}
\hspace{12pt}$\bm \nabla\mathcal{T}$ & matrix & backward map Hessian\\
\hlineB{1}
\hspace{12pt}$\bm{u}$ & vector & velocity\\
\hlineB{1}
\hspace{12pt}$\bm{m}$ & vector & impulse\\
\hlineB{1}
\hspace{12pt}$\bm \nabla \bm{m}$ & vector & impulse gradients\\
\hlineB{1}
\hspace{12pt}$w$ & scalar & interpolation weights \\
\hlineB{1}
\hspace{12pt}$\bm \nabla w$ & \revv{vector} & interpolation weights gradients \\
\hlineB{1}
\hspace{12pt}$n$ & \revv{integer} & reinitialization steps\\
\hlineB{3}
\end{tabularx}
\vspace{5pt}
\label{tab: notation_table}
\end{table}

\subsection{Mathematical Foundation}

\paragraph{Flow Map Preliminaries}
We define a velocity field $\bm u(\bm x,t)$ in the fluid domain $\Omega$ which specifies the velocity at a given location $\bm x$ and time $t$. Consider a material point $\bm X \in \Omega$ at time $t=0$, We define the forward flow map $\bm \phi(:,t):\Omega\rightarrow\Omega$ as
\begin{equation}
    \label{eq:phi_def}
    \begin{dcases}
    \frac{\partial \bm \phi(\bm X,t)}{\partial t}=\bm u[\bm \phi(\bm X,t),t],\\
    \bm \phi(\bm X,0) = \bm X, \\
   \bm  \phi(\bm X,t) = \bm x,
    \end{dcases}
\end{equation}
which traces the trajectory of the point, moving from its initial position $\bm X$ at time $0$ to its location at time $t$, represented by $\bm x$. Its inverse mapping $\bm \psi(:,t):\Omega\rightarrow\Omega$ is defined as
\begin{equation}
    \label{eq:psi_def}
    \begin{dcases}
    \bm \psi(\bm x,0) = \bm x, \\
   \bm  \psi(\bm x,t) = \bm X.
    \end{dcases}
\end{equation}
which maps $\bm x$ at $t$ to $\bm X$ at $0$.


In order to characterize infinitesimal changes in the flow map and its inverse mapping, we compute their Jacobian matrices as:
\begin{equation}
    \begin{dcases}
        \mathcal{F}(\bm \phi, t)= \frac{\partial \bm \phi (\bm x, t)}{\partial \bm x}, \\
        \mathcal{T}(\bm x, t) = \frac{\partial \bm \psi (\bm x, t)}{\partial \bm x}.
    \end{dcases}
    \label{eq:FT}
\end{equation}
As proven in Appendix \ref{sec:Jacobian_prov}, the evolution of $\mathcal{F}$ and $\mathcal{T}$ satisfies the following equations:
\begin{equation}
\begin{dcases}
\frac{D \mathcal{F}}{D t}=\bm \nabla\bm{u}\mathcal{F},\\
\frac{D \mathcal{T}}{D t}=-\mathcal{T}\bm \nabla\bm{u}. \label{eq:T_mapping}
\end{dcases}
\end{equation}

We further define the backward map Hessian as:
\begin{equation}
    \bm \nabla \mathcal{T} = \frac{\partial \mathcal{T}}{\partial \bm x}.
\end{equation}

\paragraph{Perfect Flow Map}
A perfect flow map \cite{deng2023fluid} refers to a \revv{bidirectional} map satisfying the following two \revv{remarks}:
\begin{remark}
\label{remark:roundtrip_psi_phi}
After undergoing a backward map, denoted as $\bm \psi_t \equiv \bm \psi(:,t)$, followed by a forward map $\bm \phi_t \equiv \bm \phi(:,t)$, a point is anticipated to return to its original position. This principle holds true when the sequence is reversed. In essence,
\begin{equation}
    \begin{dcases}
        \bm X =\bm \psi_t\circ\bm \phi_t(\bm X), \\
        \bm x= \bm \phi_t \circ\bm \psi_t (\bm x).
    \end{dcases}
    \label{eq:roundtrip_psi}
\end{equation}

\revvdel{
\begin{proof}
Derived from the expressions in \eqref{eq:phi_def} and \eqref{eq:psi_def}, we obtain the following equations
\begin{equation}
\begin{dcases}
\bm \psi_t\circ\bm \phi_t(\bm X)=
\bm \psi[\bm \phi(\bm X, t),t]=\bm \psi[\bm \phi(\bm X, t),t]=\bm \psi(\bm x,t) = \bm X\\
\bm \phi_t \circ\bm \psi_t (\bm x)=\bm \phi[\bm \psi(\bm x, t),t]=\bm \phi[\bm \psi(\bm x, t),t]=\bm \phi(\bm X,t) = \bm x
\end{dcases}
\end{equation}
\end{proof}
}

\end{remark}

\begin{remark}
\label{remark:roundtrip_FT}
A coordinate frame that undergoes deformation specified initially by a forward map Jacobian $\mathcal{F}_t(\bm X)\equiv\mathcal{F}(\bm X,t)$ and then by its backward map Jacobian $\mathcal{T}_t(\bm x)\equiv\mathcal{T}(\bm x,t)$ should remain unchanged. The same principle holds when the order of application is reversed. In matrix notation,
\begin{equation}
    \begin{dcases}
        \bm I = \mathcal{F}_t (\bm X)\mathcal{T}_t(\bm x),\\
        \bm I = \mathcal{T}_t(\bm x)\mathcal{F}_t(\bm X).
    \end{dcases}
    \label{eq:roundtrip_T}
\end{equation}

\revvdel{
\begin{proof}
By taking the gradients of the first and second lines of \eqref{eq:roundtrip_psi} with respect to $\bm X$ and $\bm x$ respectively, and employing the chain rule, we obtain
\begin{equation}
\begin{dcases}
\bm I = \frac{\partial \bm \psi(\bm \phi, t) }{\partial \bm \phi} \frac{\partial \bm \phi(\bm X, t) }{\partial \bm X} \\
\bm I = \frac{\partial \bm \phi(\bm \psi, t) }{\partial \bm \psi} \frac{\partial \bm \psi(\bm x, t) }{\partial \bm x} \\
\end{dcases}
\label{eq:IIXx}
\end{equation}
Substituting equations \eqref{eq:phi_def}, \eqref{eq:psi_def}, and \eqref{eq:FT} into the expressions for \eqref{eq:IIXx} results in the equation \eqref{eq:roundtrip_T}.
\end{proof}
}

\end{remark}

\subsection{Impulse Fluid on Flow Maps}
The Euler equations can be written in the form of the impulse as follows
\begin{equation}
\begin{dcases}
 \frac{D \bm m }{D t} =-\left(\bm \nabla \bm u\right)^T \bm m,\\
\nabla^2 \varphi = \bm \nabla \cdot \bm m,\\
\bm u = \bm m - \bm \nabla \varphi,
\end{dcases}
\label{eq:impulse_euler_eq}
\end{equation}
Here, \revv{$\varphi$} serves as a gauge variable employed to depict the gradient field distinction between the impulse $\bm m$ and the divergence-free velocity field $\bm u$.

As shown in \citet{cortez1996impulse,nabizadeh2022covector,deng2023fluid}, the evolution of impulse $\bm m$ can be written as a flow map from time $0$ to time $t$ as:
\begin{equation}
    \label{eq:evolve_imp}
    \bm m(\bm x,t)=\mathcal{T}^{T}_t(\bm x)\,\bm{m}(\bm \psi(\bm x),0),
\end{equation}
Similarly, we describe the evolution of the impulse gradient, $\bm \nabla \bm m$, using the flow map as follows:
\begin{equation}
    \label{eq:evolve_grad_imp}
    \bm \nabla \bm m(\bm x,t) = \mathcal{T}^{T}_t\,\bm \nabla_{\bm \psi} \bm{m}(\bm \psi(\bm x),0)\,\mathcal{T}_t + \bm \nabla \mathcal{T}^{T}_t\,\bm{m}(\bm \psi(\bm x),0),
\end{equation}
where $\bm \nabla_{\bm \psi}$ denotes the gradient with respect to the variable $\bm \psi$. For a mathematical proof for Equations~\ref{eq:evolve_imp} and \ref{eq:evolve_grad_imp}, we direct readers to the Appendix \ref{sec:mdm}.

Consequently, by utilizing the backward map Jacobian $\mathcal{T}_t$ and the backward map Hessian $\bm \nabla \mathcal{T}_t$, we can obtain the impulse $\bm m(\bm x,t)$ and its gradients $\bm \nabla \bm m(\bm x,t)$ at time $t$. This is achieved by evolving both the initial impulse $\bm{m}(\bm \psi(\bm x),0)$ and the initial impulse gradients $\bm \nabla_{\bm \psi} \bm{m}(\bm \psi(\bm x),0)$ from time $0$ to $t$.

\section{Particle Flow Map}
\subsection{Particle Trajectory is a Perfect Flow Map}

\WrapFig{./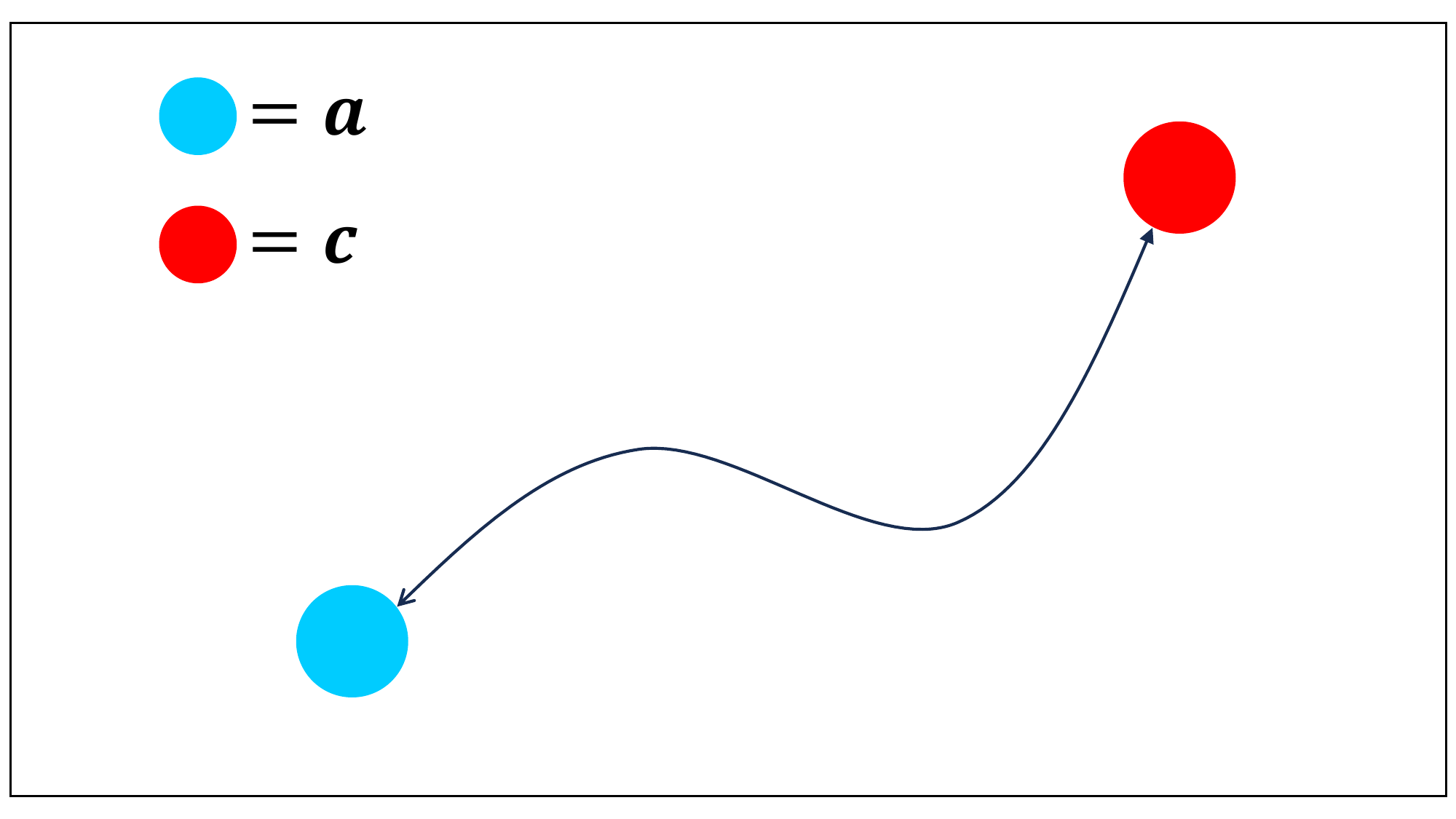}{Particle is a flow map.}{fig:pfm}{.275}{1.15}
Particles inherently embody flow maps. As depicted in Figure~\ref{fig:pfm}, consider a particle that flows according to a spatiotemporal velocity field $\bm u(\bm x,t)$. \junweirev{Its trajectory, originating from time \revv{$a$} and culminating at time \revv{$c$}, characterizes both the forward map \revv{$\bm \phi_{[a,c]}$} and the backward map \revv{$\bm \psi_{[a,c]}$}.} 
Quantities like $\mathcal{F}$ and $\mathcal{T}$, which are determined through path integrals along a particle's trajectory (as exemplified by the path integrals of Equation~\ref{eq:T_mapping} for obtaining $\mathcal{F}$ and $\mathcal{T}$), can be calculated by conducting integrals while tracking a moving particle on the trajectory.

Next, we demonstrate that a particle flow map constitutes a perfect flow map. To establish this, we will show that a particle flow map adheres to Remarks~\ref{remark:roundtrip_psi_phi} and ~\ref{remark:roundtrip_FT}. 

First, consider a particle \revv{$p$} that moves according to a flow map \revv{$\bm \phi(\bm X,t)$}, starting from point $\bm X$ at time $0$ and ending at point $\bm x$ at time $t$. If we reverse this process, allowing the particle to move from $\bm x$ at time $t$ using \revv{the} backward flow map $\bm \psi(\bm x,t)$, the particle will follow precisely the same trajectory in reverse order. This alignment of the start and end points ensures that a particle flow map inherently satisfies Remark~\ref{remark:roundtrip_psi_phi}.

Next, we show a particle flow map also satisfies Remark~\ref{remark:roundtrip_FT}.  
We carried out a numerical experiment to illustrate this: As shown in Figure~\ref{fig:motivational_experiment}(a) and (b), we compute $\mathcal{F}\mathcal{T}$ on particles moving in a steady velocity field defined as $\omega(r) = -0.01(1 - e^{-r^2/0.02^2})/r$  and $u(x, y, r) = \omega(r)\begin{bmatrix}-y\\ x\end{bmatrix}$ where $x$ and $y$ are point locations and $r$ is the distance from point to rotation center at $(0.5, 0.5)$. As in Figure~\ref{fig:motivational_experiment}(c), we observed that $\mathcal{F}\mathcal{T}$ closely approximates the identity, with errors on the
order of $10^{-6}$ within 200 steps. 
These findings confirm that flow maps on particles are in alignment with Remark~\ref{remark:roundtrip_FT}.

\begin{figure}[t]
 \centering
 \includegraphics[width=.47\textwidth]{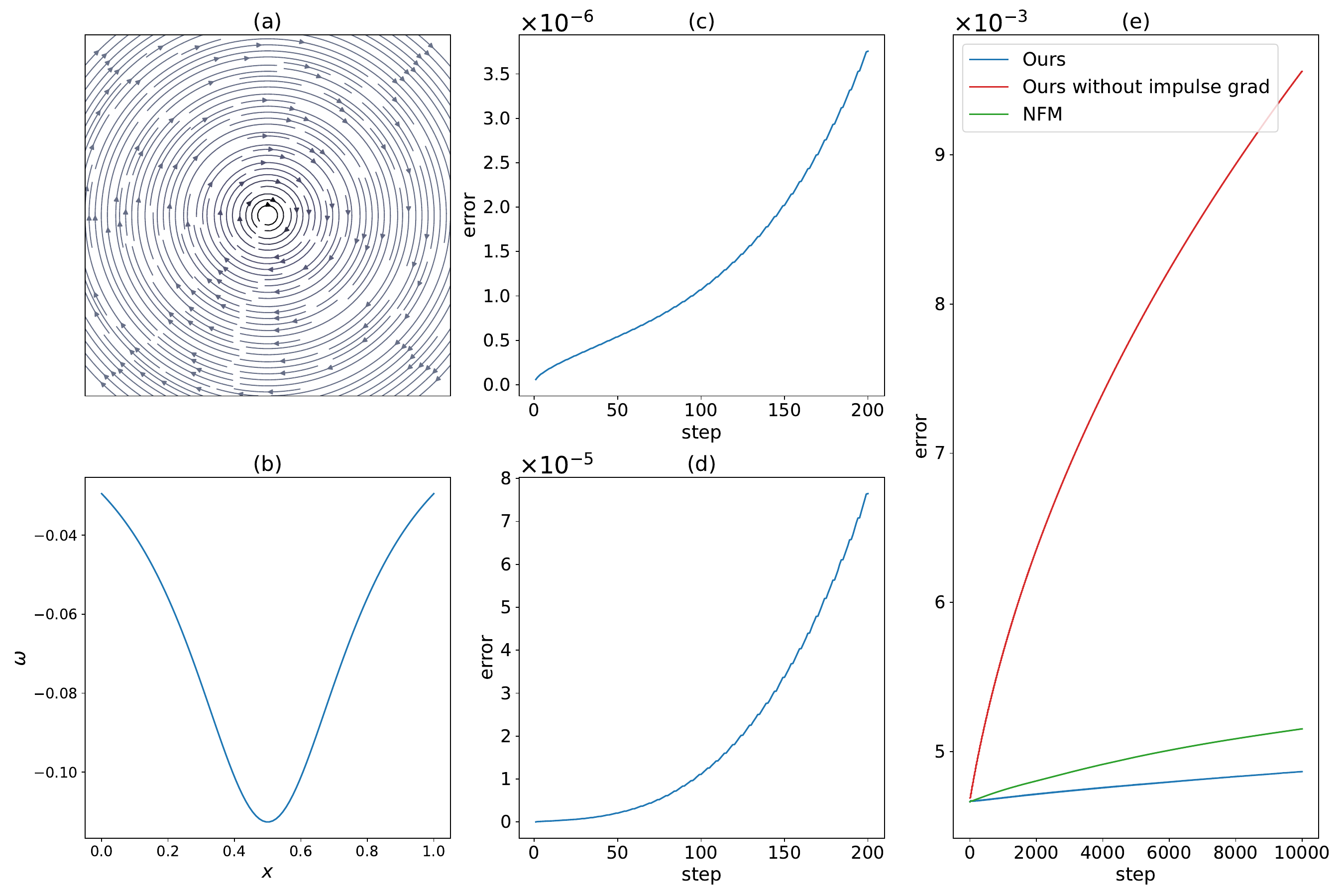}
 \caption{(a) Velocity field streamlines of a single vortex. (b) Angular velocity along $y$ = 0.2 of the velocity field in (a). (c) Deviation of $\mathcal{F}\mathcal{T}$ relative to the identity matrix. (d) Discrepancy between the forward-evolved $\mathcal{T}$ and the backward-evolved $\mathcal{T}$. In both (c) and (d), the errors correspond to the average of the individual errors across all particles, where the error on each particle is quantified by the Frobenius norm of the disparities between the respective matrices of each particle. (e) \revv{Disparity between the analytical velocity field and the velocity field reconstructed using our method. It demonstrates that incorporating impulse gradients in particle-to-grid process plays a crucial role in reducing errors.}}
 \label{fig:motivational_experiment}
\end{figure}


\subsection{Forward Evolution of $\mathcal{T}$}

In NFM \cite{deng2023fluid}, the trajectories of the forward and backward maps do not coincide. Consequently, at each step, it's essential to backtrack the path of a "virtual particle" and correspondingly backward evolve $\mathcal{T}$ along this path. Practically, this procedure employs the upper part of Equation~\ref{eq:T_mapping}, but with the time dimension inverted. The requirement to backward evolve $\mathcal{T}$ from the current step back to the initial step \revv{necessitates} the usage of a neural buffer for recording the velocity field at each timestep, imposing significant computational and storage costs.


Given that both the forward and backward maps in the particle flow map align with the same particle trajectory, there's no necessity to backward evolve $\mathcal{T}$ as is done in NFM.  Instead, we opt for a direct forward evolution of $\mathcal{T}$, as demonstrated in the bottom part of Equation~\ref{eq:T_mapping}.

We further conducted an experiment to demonstrate that the forward evolution of $\mathcal{T}$ on particles yields the same results as the backward evolution. We establish a steady velocity field, as depicted in Figure~\ref{fig:motivational_experiment}(a) and (b). Particles are placed and move within this velocity field, and both backward and forward evolution of $\mathcal{T}$ are conducted on each particle. 
As illustrated in Figure~\ref{fig:motivational_experiment}(d), these two evolutions of $\mathcal{T}$ are nearly identical, with discrepancies on the order of $10^{-5}$ within 200 steps, which aligns with the theoretical proof that these two methods of evolution are equivalent, shown in Appendix~\ref{sec:proof_T_forward_backward}.



\begin{figure*}[h]
\centering
\begin{minipage}{.49\linewidth}
  \includegraphics[width=\linewidth]{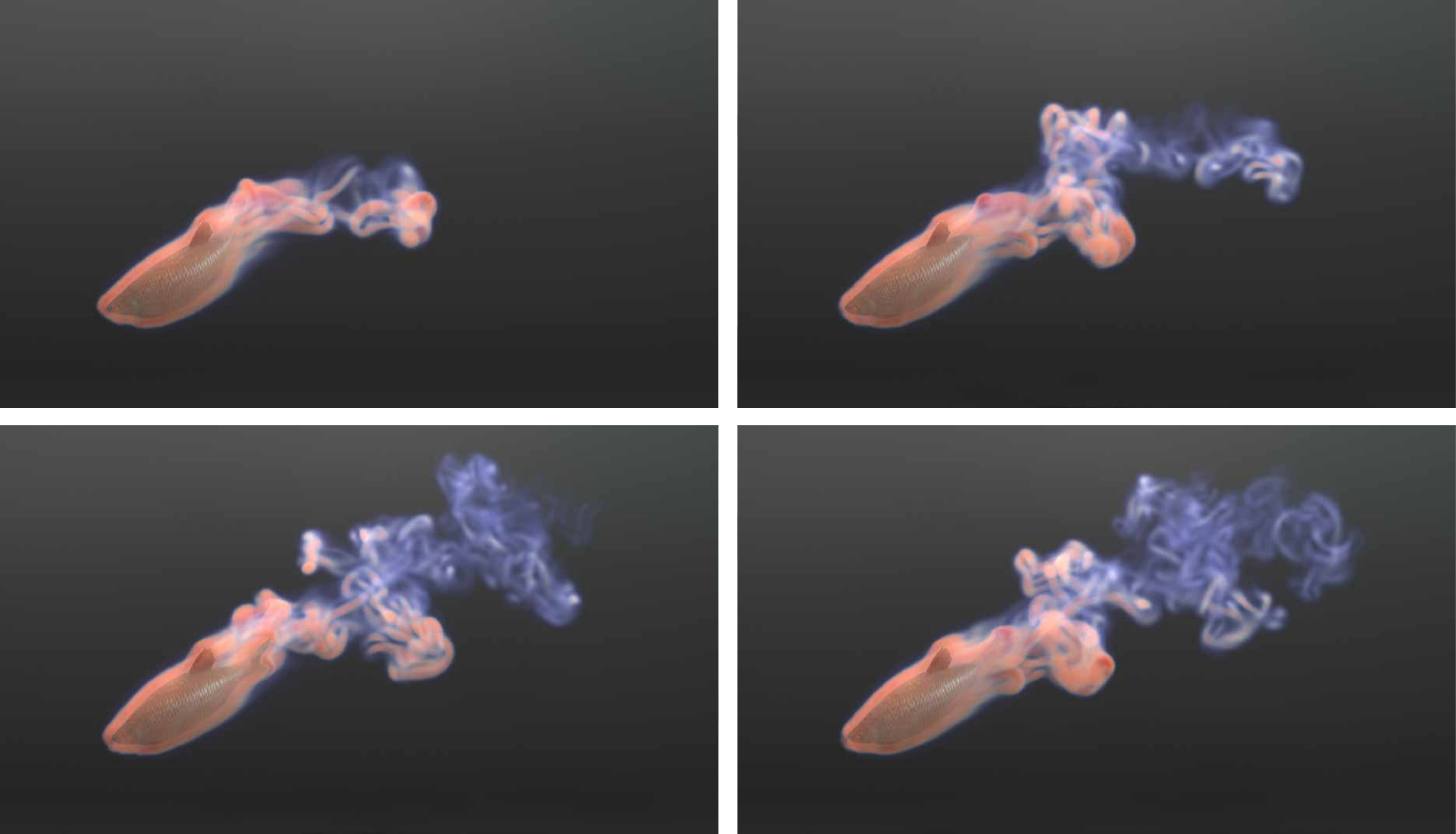}
 \caption{These images depict the vortices created by a fish's tail as it swims through water. The fish's periodic tail movements generate cyclical vortices, and the nesting of these vortex tubes creates a layered and complex structure.}
 \label{fig:fish}
\end{minipage}
\hspace{.01\linewidth}
\begin{minipage}{.49\linewidth}
  \includegraphics[width=\linewidth]{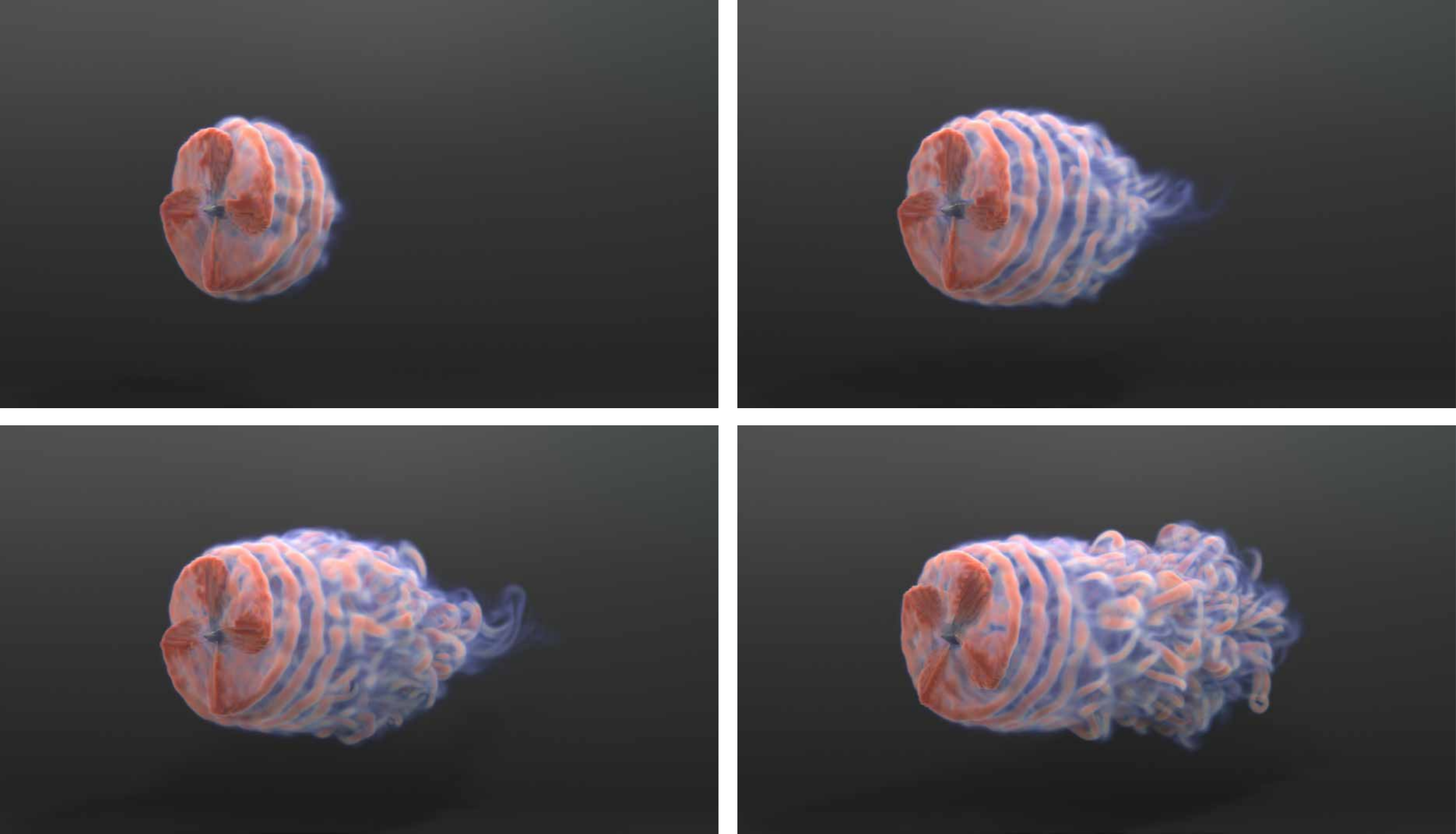}
 \caption{These images shows vortex formation process initiated by a propeller as it rotates under wind influence. These visuals provide a clear view of the intricate, spiral-shaped patterns formed by the interconnected vortex tubes.}
 \label{fig:propeller}
\end{minipage}
\end{figure*}

\subsection{Enabling Adaptivity on a Particle Flow Map}
\label{sec:short_long_map}
In real simulation scenarios, it is often necessary to use flow maps of varying lengths to transport different flow quantities \revv{as mentioned in \cite{sato2018spatially}}. For example, a long-range flow map can be used to transport the fluid impulse, while a short-range flow map may be necessary for its gradients. Generally, a quantity that is more sensitive to distortions in the background flow field, such as a high-order tensor, benefits from a shorter map. Experimenting with the length scales of these flow maps can optimize their transport effectiveness in the simulation. These needs necessitate the design of an adaptivity mechanism within our particle flow map representation.


\WrapFig{./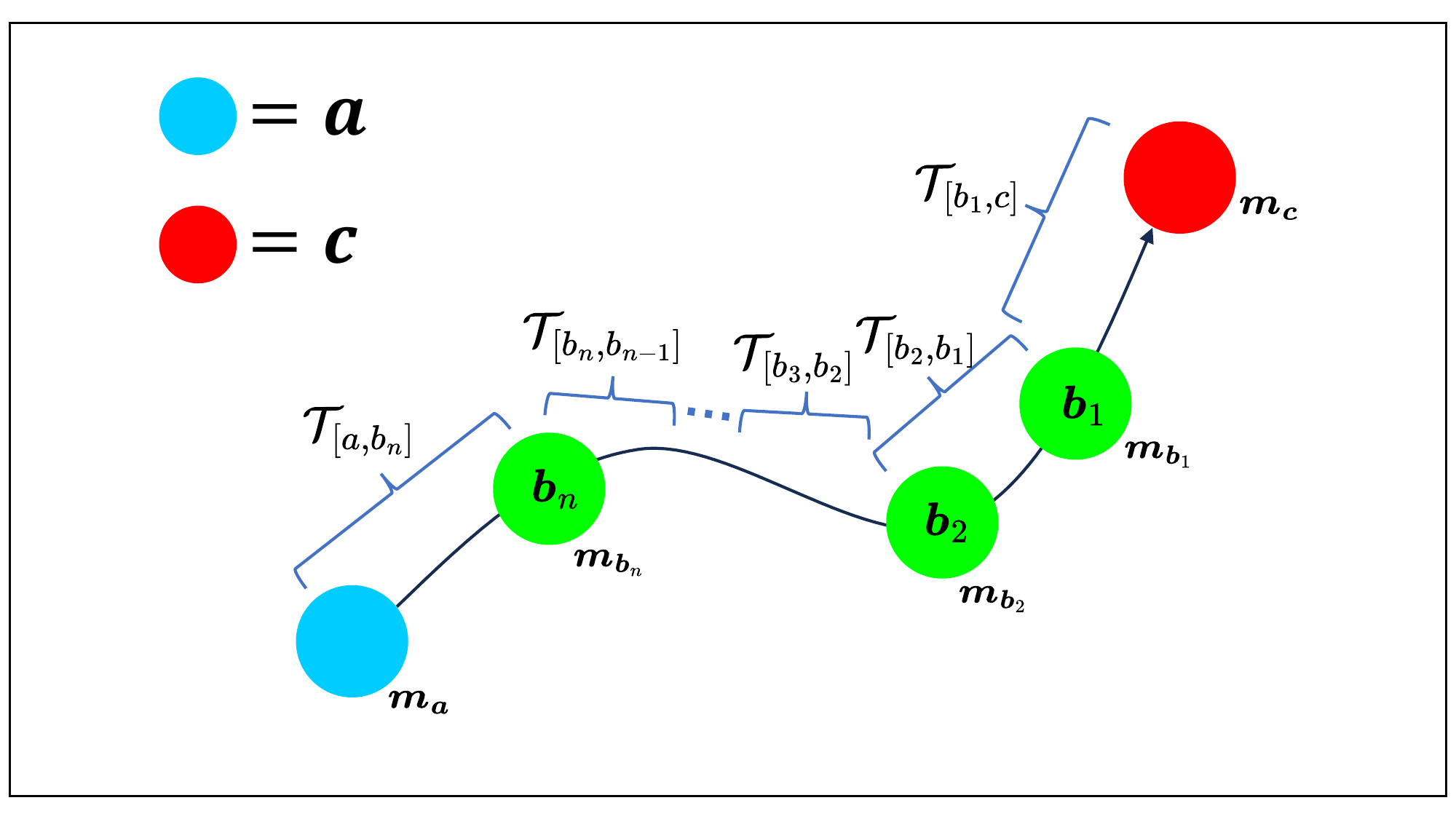}{Adaptive flow map from time $a$ to time $c$ with temporal samples $[b_{n}, ..., b_2, b_1]$. Impulse is stored on each time sample and backward map Jacobian is stored between adjacent time samples.}{fig:adaptive_PFM}{.275}{1.15}

\subsubsection{Adaptive Flow Map with Temporal Samples}
Fortunately, the adaptivity required can be effectively achieved through a simple idea by leveraging the Lagrangian nature of particle trajectories. As illustrated in Figure~\ref{fig:adaptive_PFM}, we can construct an adaptive particle flow map by storing samples at different time instants \revv{$[a, b_{n}, ..., b_2, b_1, c]$}, where \revv{$a < b_{n} < ... < b_2 < b_1 < c$}, along the particle's trajectory from start time \revv{$a$} to end time \revv{$c$}. We capture snapshots of quantities at each time sample (e.g., \revv{$\bm m_{b_n}, ..., \bm m_{b_2}, \bm m_{b_1}$}), and store the backward map Jacobian between each pair of adjacent time samples (e.g., \junweirev{\revv{$\mathcal{T}_{[b_i,b_{i-1}]}$} is stored between samples \revv{$b_i$} and \revv{$b_{i-1}$}).} With these time-axis samples created for each particle flow map, we can flexibly construct flow maps of different lengths by selecting different sample points along the trajectory as the start point, while all flow maps converge at the same endpoint, which is the same endpoint as the particle's trajectory. \junweirev{The backward Jacobians from the endpoint to a chosen sample point} can be computed by concatenating all Jacobians along the trajectory. For instance, for a flow map starting from the sample \revv{$b_i$}, the backward Jacobian \revv{$\mathcal{T}_{[b_i,c]}$} can be calculated as \revv{$\mathcal{T}_{[b_{i},b_{i-1}]}\mathcal{T}_{[b_{i-1},b_{i-2}]}...\mathcal{T}_{[b_1,c]}$}. The default flow map $[a,c]$ is a special case with zero sample points on the trajectory.

\revvdel{
\begin{figure}[t]
 \centering
 \includegraphics[width=.47\textwidth]{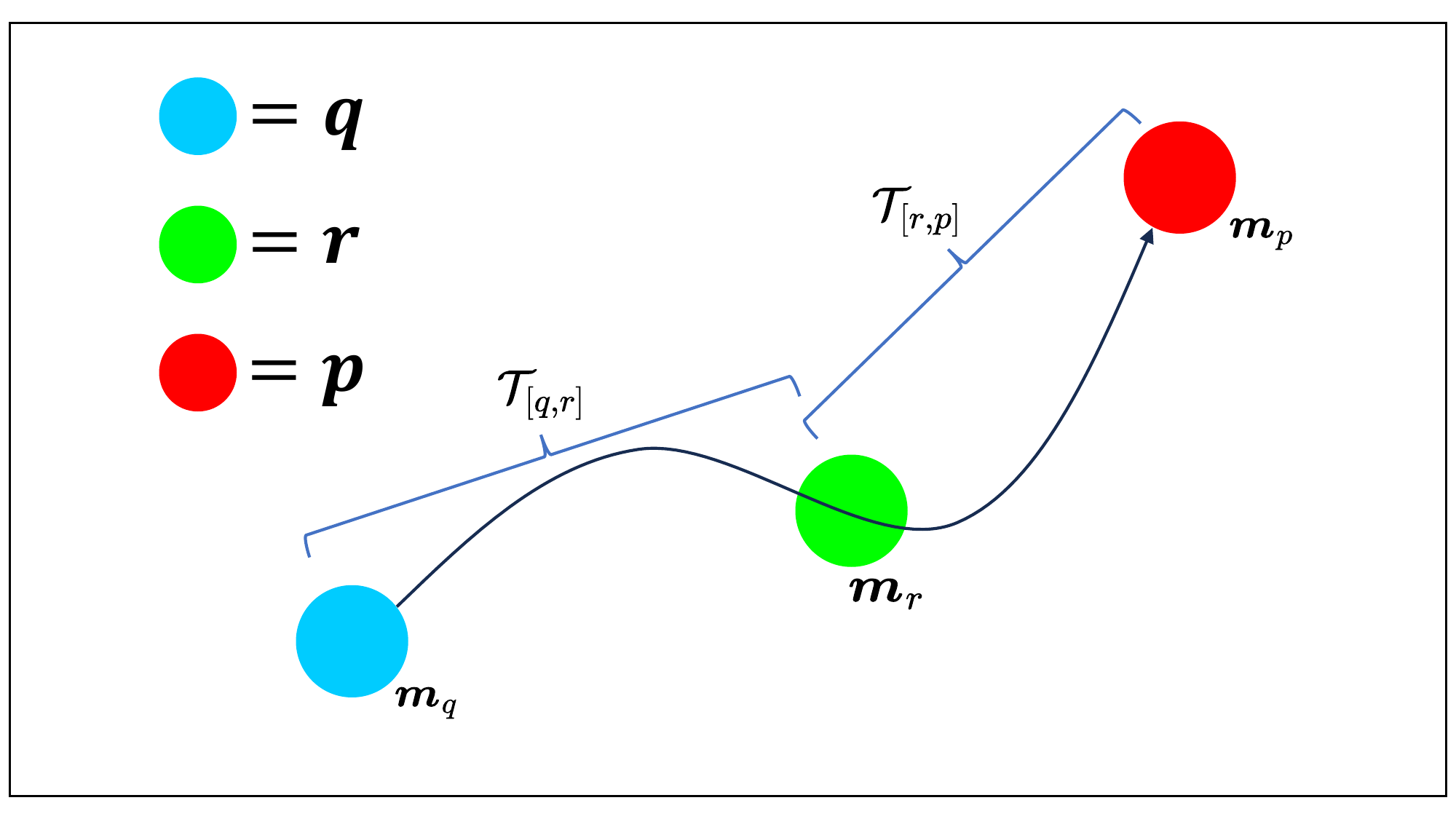}
 \caption{Long-Short Flow Map. The long map is from time $q$ to time $p$, and the short map is from time $r$ to time $p$. We only store the backward map Jacobian $\mathcal{T}_{[q, r]}$ and $\mathcal{T}_{[r, p]}$ and use them to compute $\mathcal{T}_{[q, p]}$. Impulses at time $q$, $r$ and $p$ are $\bm m_q$, $\bm m_r$, $\bm m_p$, respectively.
 }
 \label{fig:PFM_two_map}
\end{figure}
}

\subsubsection{A Single-Sample Case: Long-Short Flow Map}
\label{sec:long_short_flow_map}
We have implemented a single-sample strategy based on the aforementioned description to enable a long-short flow map mechanism. \revvdel{As depicted in Figure~\ref{fig:adaptive_PFM}, we position a single sample point $r$ between $q$ and $p$, with $r$ being closer to $p$ to construct a short map near the endpoint.} \revv{Between $a$ and $c$, we position one time stamp $b$ which is closer to $c$ to construct a short map near the endpoint.} This arrangement naturally produces two flow maps, \revv{$[a,c]$} and \revv{$[b,c]$}, where \revv{$[a,c]$} serves as the long flow map and \revv{$[b,c]$} as the short flow map. We store two backward Jacobians: \revv{$\mathcal{T}_{[a, b]}$} and \revv{$\mathcal{T}_{[b,c]}$}. The short backward Jacobian \revv{$\mathcal{T}_{[b,c]}$} is stored between \revv{$b$} and \revv{$c$}, and the long backward Jacobian \revv{$\mathcal{T}_{[a,c]}$} can be easily computed as shown in the following equation
\revv{
\junweirev{
\begin{equation}
\label{eq:evolve_T_long}
    \mathcal{T}_{[a, c]} = \mathcal{T}_{[a, b]}\mathcal{T}_{[b, c]}.
\end{equation}
}
}

This long-short flow map mechanism was designed to transport both the impulse and its gradient. The long flow map is used for the long-range transport of the impulse, while the short flow map addresses the gradient, which is sensitive to flow distortion. We will specify details in \junweirev{Section~\ref{sec:impulse_transport}}.


\junweirev{At each step, we update the short-range flow map by marching \revv{$\mathcal{T}_{[b, c]}$} by one step in parallel with the particle's advection, using our custom $4^{th}$ order of Runge-Kutta (RK4) integration scheme, as detailed in Alg.~\ref{alg:RK4}. Subsequently, the long-range flow map is updated according to Equation~\ref{eq:evolve_T_long}, with the updated \revv{$\mathcal{T}_{[b, c]}$} and \revv{$\mathcal{T}_{[a, b]}$} which has been reinitialized at time \revv{$b$}. We will demonstrate the reinitialization of the flow map in Section~\ref{sec:pfm_reinit}.}

\subsection{Particle Flow Map Reinitialization}
\label{sec:pfm_reinit}

\junweirev{Owing to the distortion that occurs in the $\mathcal{T}$ when the flow map's range becomes excessively elongated, we reinitialize the flow map at regular intervals. Specifically, for the long-range map, a reinitialization is conducted every $n^L$ steps, during which time \revv{$a$} is reset to time \revv{$c$}, and \revv{$\mathcal{T}_{[a, b]}$} is set to identity:
\revv{
\begin{equation}
\label{eq:reinit_T_q_r_long}
    \mathcal{T}_{[a, b]} \gets \bm{I}.
\end{equation}
}

In addition, to prevent the clustering of particles over time, we will uniformly redistribute particles throughout the entire simulation domain at regular intervals of $n^L$ steps.  Insights into the impact of this particle redistribution, as well as a comparison among different redistribution strategies, is presented in Section~\ref{sec:redistribute_particle}.
}

\junweirev{For the short-range map, a reinitialization is performed every $n^S$ steps, resetting time \revv{$b$} to time \revv{$c$}. 
Given that \revv{$\mathcal{T}_{[a, c]}$} is indirectly updated via \revv{$\mathcal{T}_{[a, b]}$} with the evolved \revv{$\mathcal{T}_{[b, c]}$}, it becomes essential to recalibrate \revv{$\mathcal{T}_{[a, b]}$} to align with the latest \revv{$\mathcal{T}_{[a, c]}$} upon each reinitialization of the short-range map, that is
\revv{
\begin{equation}
\label{eq:reinit_T_q_r_short}
    \mathcal{T}_{[a, b]} \gets \mathcal{T}_{[a, b]}\mathcal{T}_{[b, c]}.
\end{equation}
}
Additionally, \revv{$\mathcal{T}_{[b, c]}$} is reset to the identity:
\revv{
\begin{equation}
\label{eq:reinit_T_r_p}
    \mathcal{T}_{[b, c]} \gets \bm{I}.
\end{equation}
}
Moreover, particle redistribution is not performed when short-range flow map is reinitialized.
}

\junweirev{In most instances, $n^L$ is selected as an integer multiple of $n^S$, ensuring that each reinitialization of the long-range flow map coincides with a reinitialization of the short-range flow map. However, there are situations where $n^L$ is not an integer multiple of $n^S$. In such cases, to prevent the misalignment in the progression of the two maps, the short-range map is still reinitialized at the reinitialization points of the long-range map, even if it hasn't reached its $n^S$ interval. An exploration of how these reinitialization intervals influence the process is detailed in Section~\ref{sec:reinit_steps}.
}

\begin{figure*}[h]
\centering
\begin{minipage}{.49\linewidth}
  \includegraphics[width=\linewidth]{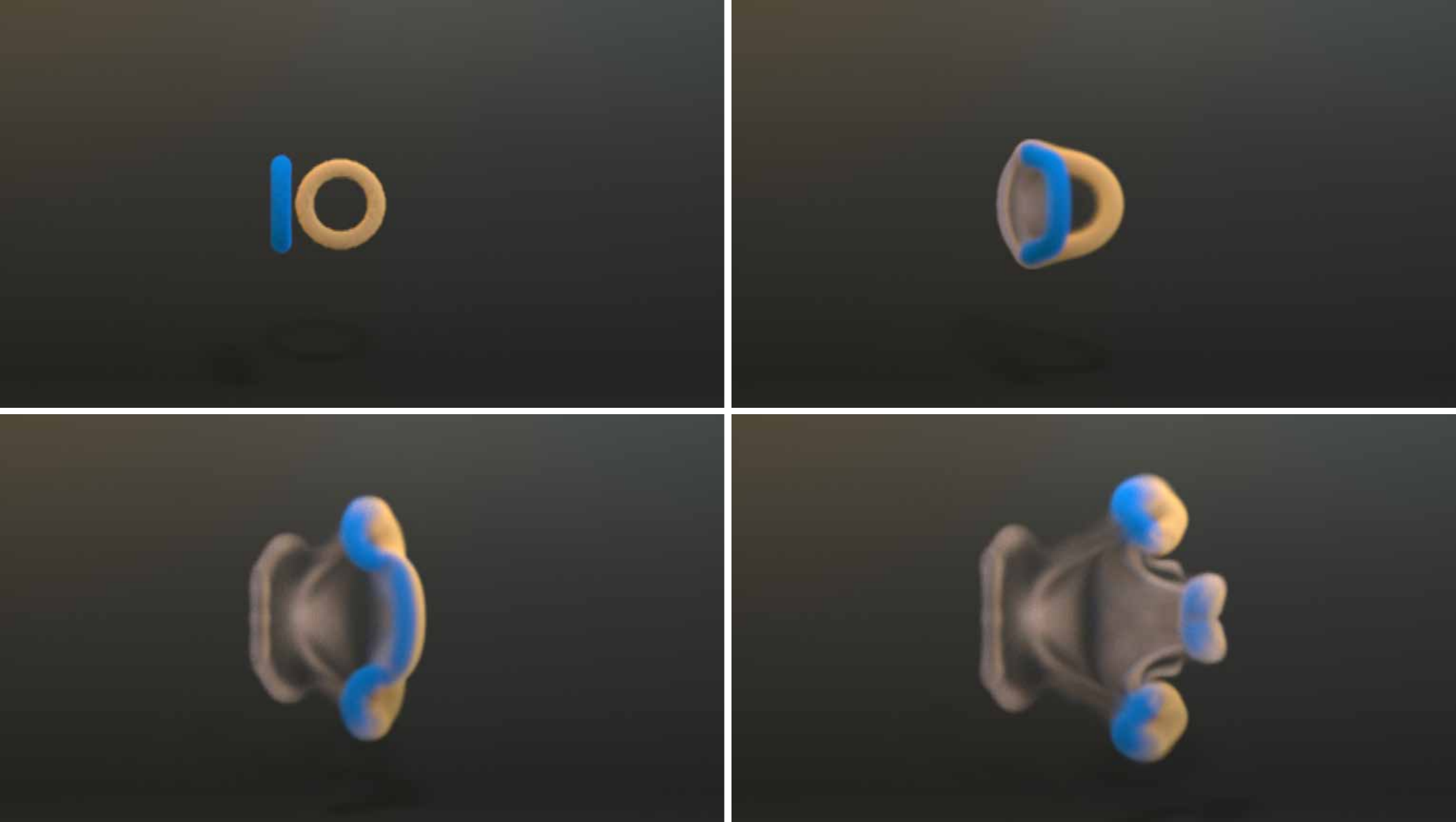}
  \captionof{figure}{{Simulation of oblique vortex rings. The transformation of a set of oblique vortex rings involves a process where the two vortices initially connect on the left side, experience several changes in their structure, and ultimately transform into three separate vortex rings.}}
  \label{fig:oblique}
\end{minipage}
\hspace{.01\linewidth}
\begin{minipage}{.49\linewidth}
  \includegraphics[width=\linewidth]{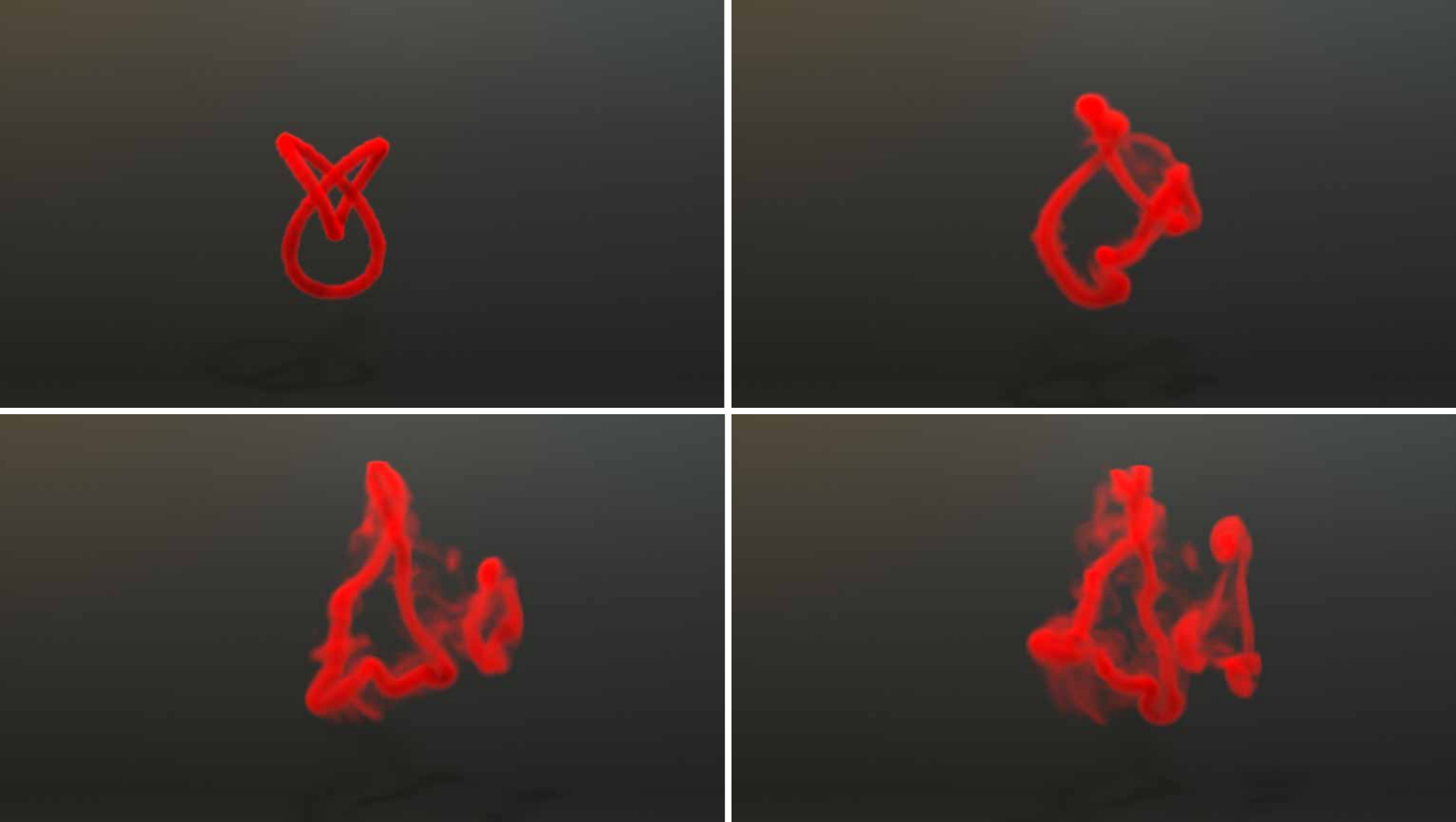}
  \captionof{figure}{{Evolution of a trefoil knot. Initially, the knot moves rightward while rotating. This motion causes collisions and reconnections nearby, breaking the knot into two distinct, unlinked rings. This pattern aligns well with the results seen in the referenced experiments \cite{kleckner2013creation}.}}
  \label{fig:trefoil}
\end{minipage}
\end{figure*}

\section{Eulerian-Lagrangian Framework}
\label{sec:Eulerian_Lagrangian_Framework}
\newcolumntype{z}{X}
\newcolumntype{s}{>{\hsize=.25\hsize}X}
\begin{table}
\caption{Summary of quantities stored on particles and grid.}
\centering\small
\begin{tabularx}{0.47\textwidth}{scz}
\hlineB{3}
Notation & Location & Definition\\
\hlineB{2.5}
\junweirev{\hspace{12pt}\revv{$\bm x_p$}} & particle & \junweirev{position of the \revv{$p$}-th particle} \\
\hlineB{1}
\hspace{12pt}\revv{$\mathcal{T}_{[a, b]}$} & particle & backward map Jacobian from time \revv{$b$} to \revv{$a$} \\
\hlineB{1}
\hspace{12pt}\revv{$\mathcal{T}_{[b, c]}$} & particle & backward map Jacobian from time \revv{$c$} to \revv{$b$} \\
\hlineB{1}
\hspace{12pt}\revv{$\bm{m}_a$} & particle & impulse on particles at time \revv{$a$} \\
\hlineB{1}
\hspace{12pt}\revv{$\bm{m}_b$} & particle & impulse on particles at time \revv{$b$} \\
\hlineB{1}
\hspace{8pt}\revv{$\nabla\bm{m}_b$} & particle & impulse gradients on particles at time \revv{$b$} \\
\hlineB{2.5}
\hspace{12pt}$\bm{u}_i$ & grid faces & \junweirev{velocity of the $i$-th cell}\\
\hlineB{1}
\junweirev{\hspace{12pt}$w_i$} & \junweirev{grid faces} & \junweirev{weights sum of the $i$-th cell}\\
\hlineB{3}
\end{tabularx}
\vspace{5pt}
\label{tab:grid_particle_quantity}
\end{table}

Equipped with the design of our particle flow map, we aim to develop a hybrid Eulerian-Lagrangian framework for simulating incompressible flow. Specifically, we plan to solve the impulse-based fluid model as governed by Equation~\ref{eq:impulse_euler_eq}, utilizing the impulse flow map outlined in Equation~\ref{eq:evolve_imp}. The overarching goal is to leverage particles for the accurate transport of fluid impulse and use the background grid to solve the Poisson equation and enforce incompressibility.

To construct this hybrid fluid solver, we must address three key issues: \textit{(1)} The discretization of different physical quantities on particles and grids, \textit{(2)} The construction of accurate flow maps for the transport of impulse, and \textit{(3)} The transfer of fluid impulse from particles to the grid for the incompressibility solution. We will elaborate our numerical solutions w.r.t. these aspects as follows. 

\subsection{Discretization}
We utilize particles to transport the fluid impulse and employ the grid for computing the divergence-free velocity field. 
\junweirev{For each particle \revv{$p$}, we store its current position \revv{$\bm x_p$}.} We utilize the long-short flow map mechanism introduced in Section~\ref{sec:long_short_flow_map} to enable the implementation of dual-layer flow maps on each particle. In particular, we maintain two backward Jacobians, \revv{$\mathcal{T}_{[a, b]}$} and \revv{$\mathcal{T}_{[b, c]}$}, we store snapshots of the impulse values, \revv{$\bm{m}_a$} and \revv{$\bm{m}_b$}, which are sampled at times \revv{$a$} (start point of the long-range map) and \revv{$b$} (start point of the short-range map) respectively. We also store the impulse gradient \revv{$\nabla\bm{m}_b$} for time \revv{$b$}.
On the grid side, we discretize the velocity field $\bm u$ on a MAC grid, with its velocity components stored on grid faces. We also store a scalar field \junweirev{$w_i$} on grid faces to specify particle-to-grid interpolation weights \junweirev{summation of cell $i$}\revvdel{, as well as an array $I_i$ on each cell to catalog the indices of particles located within the cell}. 
Table~\ref{tab:grid_particle_quantity} outlines all the attributes stored on particles and the background grid. 
It is worth noting that a few other particle quantities, such as velocity \revv{$\bm u_p$}, endpoint impulse \revv{$\bm m_c$}, and endpoint backward Jacobian \revv{$T_{[a,c]}$}, can be calculated dynamically during the simulation. As a result, these quantities do not require dedicated storage as particle attributes.


\subsection{Impulse Transport}
\label{sec:impulse_transport}

\subsubsection{Impulse Mapping}
Building upon the particle flow map updated at each timestep, we first advance the fluid impulse on each particle. This process is straightforwardly executed by computing the impulse values at time \revv{$c$}, using the backward Jacobian of the long-range flow map \revv{$\mathcal{T}_{[a,c]}$}. Specifically, this can be expressed as:
\revv{
\begin{equation}
\label{eq:evolve_imp_long}
    \bm m_c \gets \mathcal{T}_{[a, c]}^T \bm m_{a}.
\end{equation}
}
Here, \revv{$\mathcal{T}_{[a, c]}$} is calculated on-the-fly by multiplying \revv{$\mathcal{T}_{[a, b]}$} and \revv{$\mathcal{T}_{[b, c]}$}. 

\setlength{\abovecaptionskip}{12pt}
\begin{figure*}[t]
\centering
\begin{minipage}{.49\linewidth}
  \includegraphics[width=\linewidth]{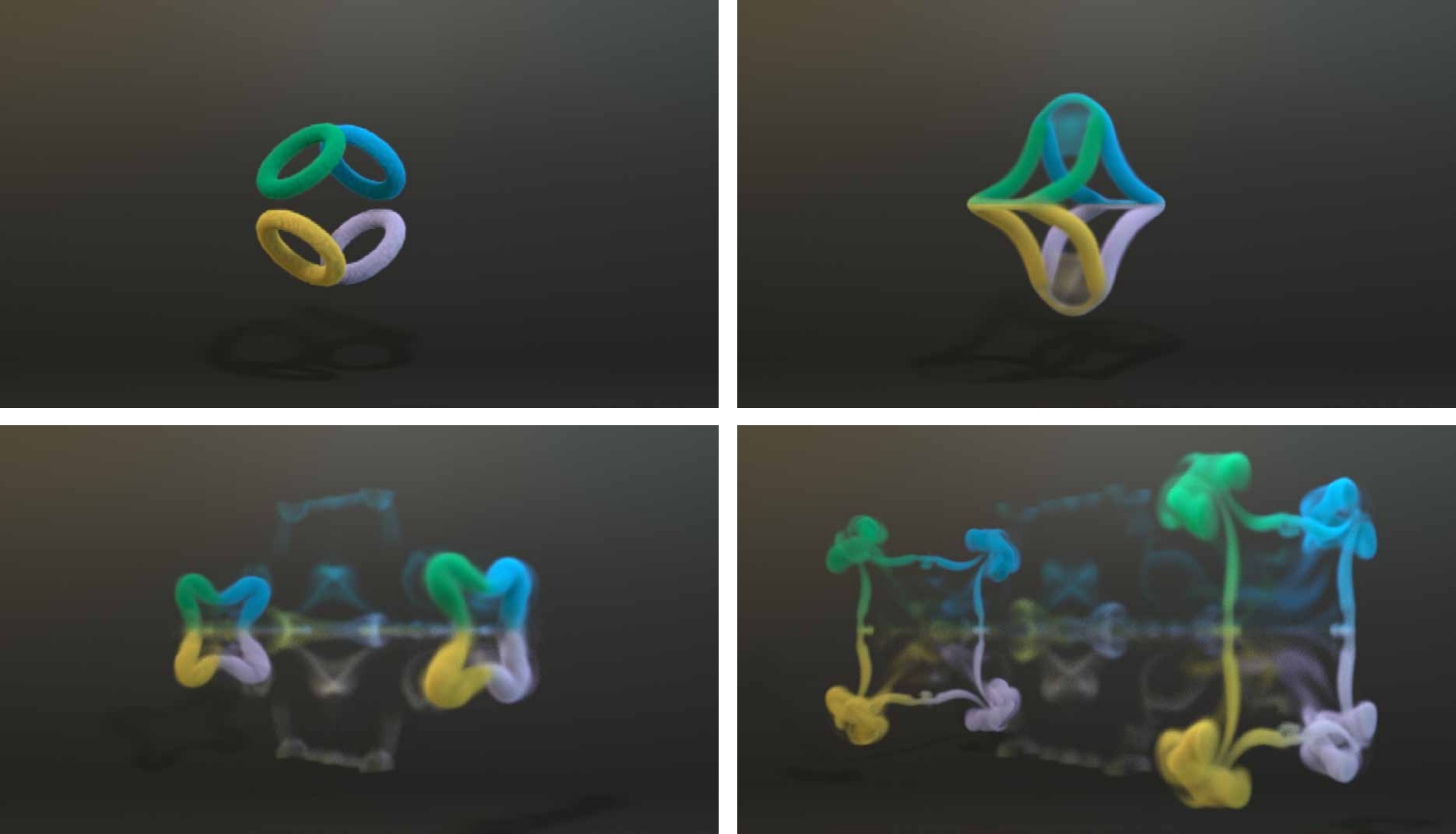}
  \caption{{The interaction and merging of four vortices, each positioned at right angles to the adjacent ones, lead to their eventual collision and reconnection. This process results in the formation of two bigger vortices, each resembling a four-pointed star. These enlarged vortices then move towards the left and right boundaries and subsequently divide into four separate vortex tubes.}}
 \label{fig:four_vorts}
\end{minipage}
\hspace{.01\linewidth}
\begin{minipage}{.49\linewidth}
  \includegraphics[width=\linewidth]{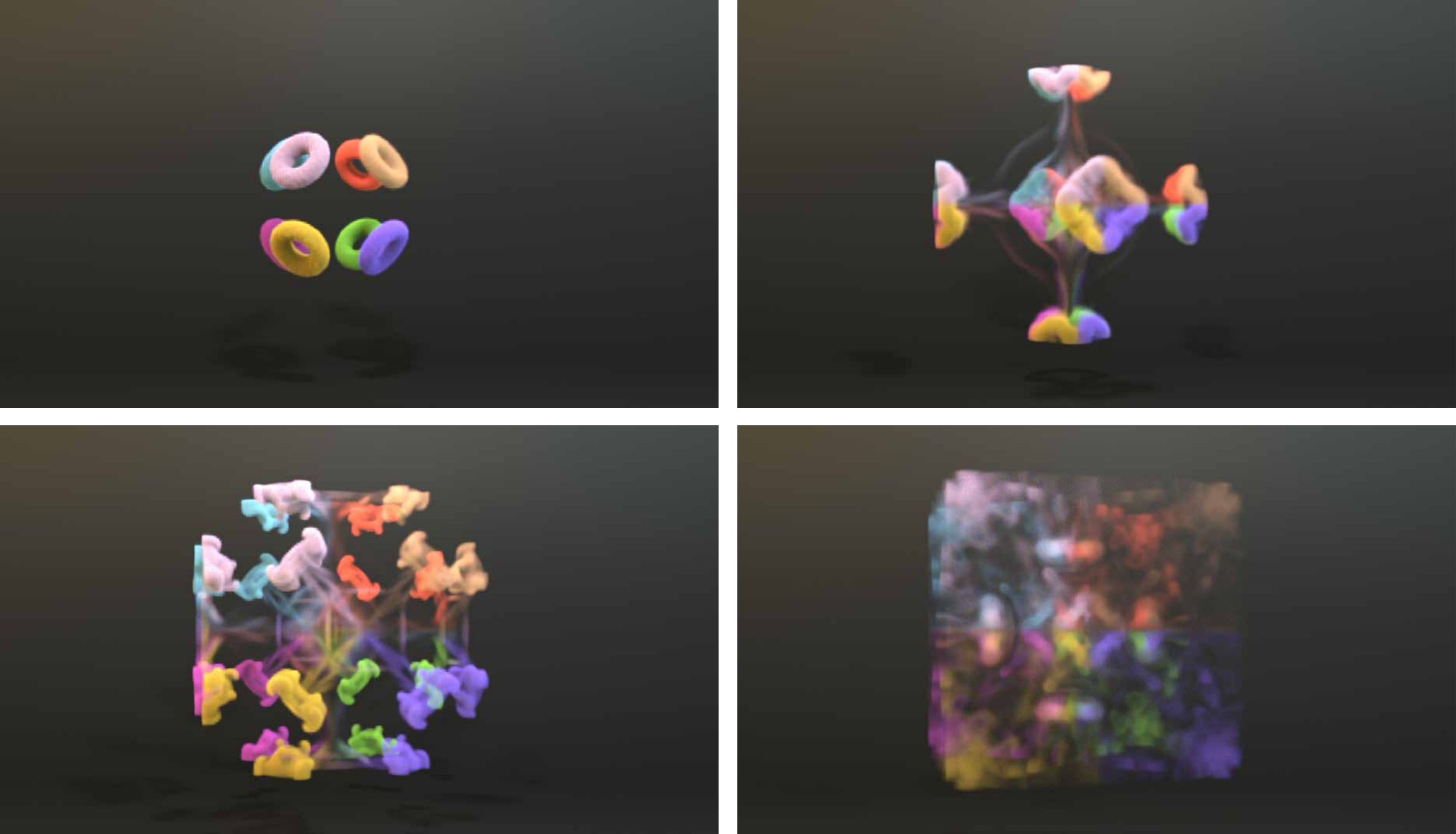}
 \caption{{Simulation of eight vortices collision. In a cubic simulation space, eight rings forming an octahedron at a 35.26° angle to the z-axis collide and reform into six rings. These rings move orthogonally to the cube's walls, splitting into vortex tubes that collide and recombine. Eventually, they form six rings again moving towards the center and finally split into eight separate parts.}}
 \label{fig:eight_vorts}
\end{minipage}
\end{figure*}
\subsubsection{Impulse Gradient Mapping}
Next, we evolve the impulse gradient using our particle flow map. 
\revvdel{
We opt for the short flow map to calculate the impulse gradient, recognizing that it is a high-order quantity highly sensitive to distortions in the flow map. Therefore, the focus of our calculations for the impulse gradient is exclusively on the short\junweirev{-range} flow map $[r,p]$. As specified in Equation~\ref{eq:evolve_grad_imp}, the computation of \junweirev{$\nabla m$} entails not only the Jacobian of the backward flow map, $\mathcal{T}$, but also its Hessian, $\nabla \mathcal{T}$. To this end, we have developed a particle-based scheme for calculating this term. \junweirev{Specifically, for particle $k$, $\nabla\mathcal{T}_{[r, p]}^k$ is computed by accumulating the $\mathcal{T}_{[r, p]}^l$ values of other particles, each weighted by the quadratic interpolation kernel $w_{kl}$ defined in Appendix~\ref{sec:interp_kernel}, as described by:
\begin{equation}
\label{eq:compute_grad_T}
    \nabla \mathcal{T}_{[r, p]}^k \gets \sum_{l} \nabla w_{kl}\mathcal{T}_{[r, p]}^l,
\end{equation}
where the superscript of $\mathcal{T}$ denotes the particle index, thereby $\nabla\mathcal{T}_{[r, p]}^k$ and $\mathcal{T}_{[r, p]}^l$ represents the backward map Hessian of particle $k$ and the backward map Jacobian of particle $l$, respectively.} 

As $w_{kl}$ diminishes to zero beyond 1.5 times the cell width from particle $k$, this method effectively aggregates the $\mathcal{T}_{[r, p]}^l$ values of all particles $l$ within a neighborhood 1.5 times the cell width around particle $k$. For practical computation, this step is executed in two substeps: \junweirev{firstly, indices of particles located within each cell $i$ is collected in an array $I_i$}; secondly, for each particle, the summation is carried out over particles located within two hops of neighboring cells relative to its residing cell. This scheme was inspired by the Particle Strength Exchange (PSE) method, commonly used for viscosity treatment in vortex-in-cell method literature \cite{rivoalen1999numerical, zhang2016resolving}. 

After computing $\nabla \mathcal{T}_{[r, p]}$ for each particle,}
\revv{We map $\nabla \bm m_b$ from time $b$ to time $c$ utilizing $\mathcal{T}_{[b, c]}$, as depicted by
\begin{equation}
\label{eq:evolve_grad_imp_short}
    \nabla \bm m_c \gets \mathcal{T}_{[b, c]}^T \nabla \bm m_{b} \mathcal{T}_{[b, c]} 
\end{equation}
This equation employs the first term from the right-hand side of Equation~\ref{eq:evolve_grad_imp}, while the second term is omitted due to the overhead and difficulty of accurately calculating \revvv{it. Additionally, excluding the term does not reduce the desired simulation qualities.} This issue is further discussed in Section~\ref{sec:discussion_hessian}.
}

\subsubsection{Impulse Particle-to-Grid Transfer}
After calculating \revv{$\bm m_c$} and \revv{$\nabla \bm m_c$}, we execute a particle-to-grid transfer using values on particles to compute the impulse field $\bm m_i$ on the grid as: 
\begin{equation}
\label{eq:imp_P2G}
    \bm m_i \gets \sum_p w_{ip}(\bm m_c^p + \nabla \bm m_c^p (\bm x_i - \bm x_p)) \,/\, \sum_p w_{ip}, 
\end{equation}
where the superscript of \revv{$\bm m_c$} and \revv{$\nabla \bm m_c$} means they are stored on particle \revv{$p$}. The impulse field $\bm m_i$ will be projected by solving Poisson equation, ultimately yielding the velocity field $\bm u_i$ on grid.

\paragraph{Validation Experiment} 
We execute an experiment to validate the efficacy of our proposed approach. Initially, we set up a velocity field as illustrated in Figure~\ref{fig:motivational_experiment}(a) and (b). 
Subsequently, particles are placed within this velocity field, and both initial velocity and velocity gradients are interpolated from the grid to the particles to establish the initial impulse and impulse gradients. The particles are then advected within the field, and both impulse and impulse gradients are evolved through the utilization of the two maps as previously described. The evolved values are then applied in a particle-to-grid process to compute the velocity field at each step. Figure~\ref{fig:motivational_experiment}(e) depicts the discrepancy between the analytical velocity field and the velocity field generated by our method. We also assess the outcomes when only impulse is transferred to the grid. It is evident that incorporating impulse gradients in the particle-to-grid process significantly reduces the error. \revv{In addition, we conduct this experiment using NFM, and the results of our method show a smaller discrepancy from the analytical velocity field compared to NFM, further demonstrating the efficacy of our approach.}

\subsection{Impulse Reinitialization}
Every $n^L$ steps and $n^S$ steps, reinitialization is undertaken for long-range and short-range flow maps, respectively, as described in Section~\ref{sec:pfm_reinit}. During the reinitialization of the long-range map, since time \revv{$a$} is reset to time \revv{$c$}, \revv{particles $p$'s impulse $\bm m_a^p$} is recalibrated by interpolating the velocity field at time \revv{$c$}:
\revv{
\begin{equation}
\label{eq:reinit_imp_long}
    \bm m_{a}^p \gets \sum_i w_{ip} \bm u_i.
\end{equation}
}
Similarly, when the short-range flow map is reinitialized with time \revv{$b$} reset to time \revv{$c$}, \revv{particles $p$'s impulse gradient $\nabla \bm m_b^p$} is recalibrated via interpolation from the velocity field at time \revv{$c$}:
\revv{
\begin{equation}
\label{eq:reinit_imp_short}
    \nabla \bm m_{b}^p \gets \sum_i \nabla w_{ip} \bm u_i.
\end{equation}
}

\section{Time Integration}
\revvdel{
\begin{figure*}[t]
 \centering
 \includegraphics[width=.98\textwidth]{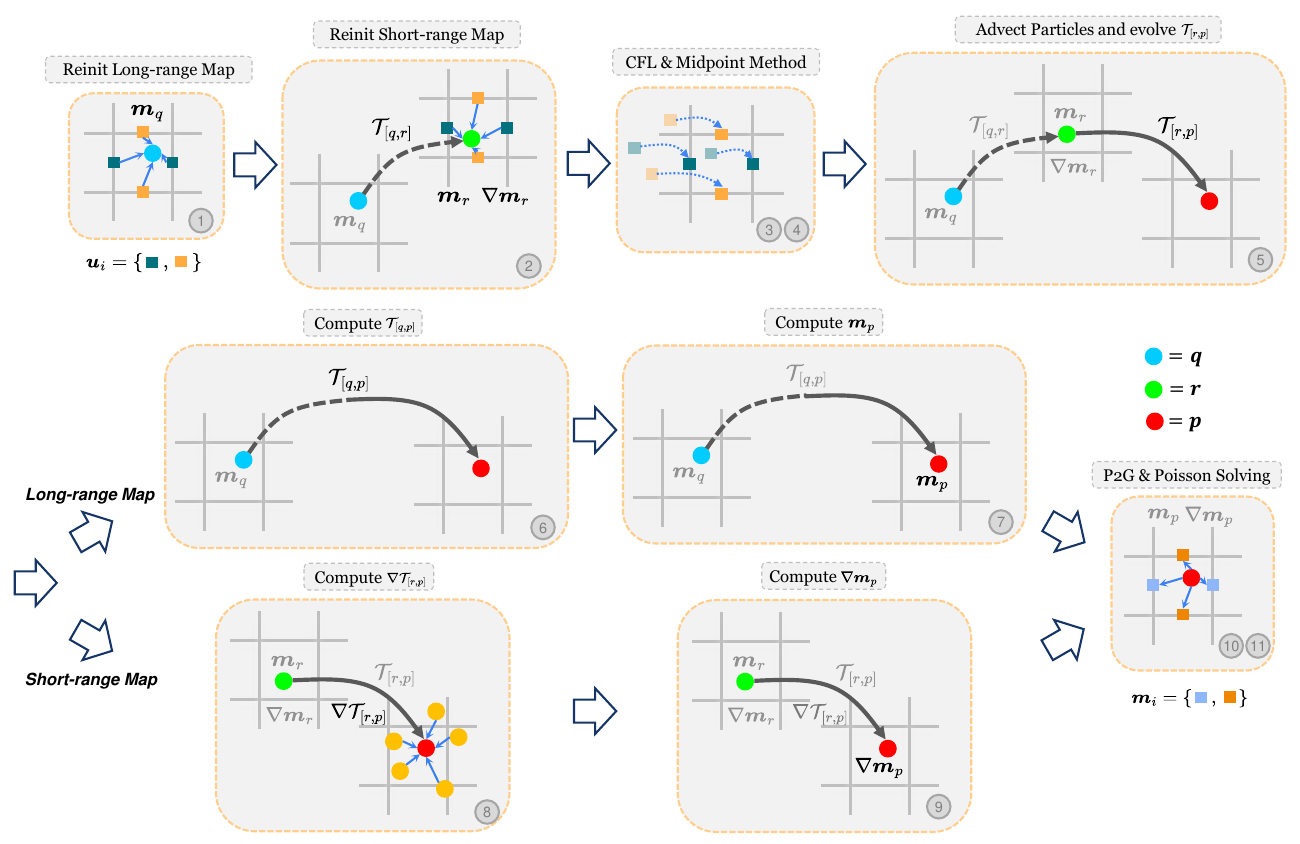}
 \caption{Illustration of time integration: The process begins at step 1, with step 1 and step 2 recurring every $n^L$ and $n^S$ steps, respectively. From step 3 onwards, we proceed in sequence through to step 5. Subsequently, steps 6 and 7 are associated with the long-range map, whereas steps 8 and 9 pertain to the short-range map. Execution of these two maps can occur concurrently. Following this parallel processing, the results derived from both maps converge in steps 10 and 11. The short-range map from time $r$ to time $p$, is depicted using a solid line, while the map from time $q$ to time $r$, is represented by a dashed line. By connecting these solid and dashed lines, the long-range map from time $q$ to time $p$ is formed. At each stage, the quantities computed in the current step are presented distinctly, devoid of transparency, contrasting with the previous steps’ quantities, which are displayed with transparency for differentiation. Moreover, circles symbolize particles, while squares denote grid faces.
 }
 \label{fig:time_integration}
\end{figure*}
}

\begin{algorithm}[t]
\caption{Particle Flow Map Simulation}
\label{alg:pfm_simulation}
\begin{flushleft}
        \textbf{Initialize:} $\bm u_i$ to initial velocity; \revv{$\mathcal{T}_{[a, b]}$}, \revv{$\mathcal{T}_{[b, c]}$} to $\bm{I}$
\end{flushleft}
\begin{algorithmic}[1]
\For{$k$ in total steps}
\State $j \gets k \Mod {n^L}$;
\State $l \gets k \Mod {n^S}$;
\If{$j$ = 0}
\State Uniformly distribute particles;
\State Reinitialize \revv{$\bm m_a$}; \hfill $\triangleright$ eq.~(\ref{eq:reinit_imp_long})
\State Reinitialize \revv{$\mathcal{T}_{[a,b]}$}\; \hfill $\triangleright$ eq.~(\ref{eq:reinit_T_q_r_long})
\EndIf
\If{$l$ = 0}
\State Reinitialize \revvdel{$\bm m_r$ and }\revv{$\nabla \bm m_b$}; \hfill $\triangleright$ eq.~(\ref{eq:reinit_imp_short})
\State Reinitialize \revv{$\mathcal{T}_{[a, b]}$} and \revv{$\mathcal{T}_{[b, c]}$}; \hfill $\triangleright$ eq.~(\ref{eq:reinit_T_q_r_short}, \ref{eq:reinit_T_r_p})
\EndIf
\State Compute $\Delta t$ with $\bm{u}_i$ and the CFL number;
\State Estimate midpoint velocity $\bm{u}_i^\text{mid}$; \hfill $\triangleright$ Alg. \ref{alg:midpoint}
\State March $\bm x_p$, \revv{$\mathcal{T}_{[b, c]}$} with $\bm{u}_i^\text{mid}$ and $\Delta t$; \hfill $\triangleright$ Alg.~\ref{alg:RK4}
\State Compute \revv{$\mathcal{T}_{[a, c]}$} with \revv{$\mathcal{T}_{[a, b]}$} and \revv{$\mathcal{T}_{[b, c]}$}; \hfill $\triangleright$ eq.~(\ref{eq:evolve_T_long})
\State Compute \revv{$\bm m_c$} with \revv{$\bm m_a$} and \revv{$\mathcal{T}_{[a, c]}$}; \hfill $\triangleright$ eq.~(\ref{eq:evolve_imp_long})
\revvdel{\State Compute $\nabla \mathcal{T}_{[r, p]}$; \hfill $\triangleright$ eq.~(\ref{eq:compute_grad_T})}
\State \revv{Compute $\nabla \bm m_c$ with \revvdel{$\bm m_b$, }$\nabla \bm m_b$ and $\mathcal{T}_{[b, c]}$ \revvdel{and $\nabla \mathcal{T}_{[b, c]}$}; \hfill $\triangleright$ eq.~(\ref{eq:evolve_grad_imp_short})}
\State Compute $\bm m_i$ by transfering \revv{$\bm m_c$}, \revv{$\nabla \bm m_c$} to grid; \hfill $\triangleright$ eq.~(\ref{eq:imp_P2G})
\State $\bm u_i \gets \textbf{Poisson}(\bm m_i)$;

\EndFor{}
\end{algorithmic}
\end{algorithm}


\revv{In this section, we delineate the time integration scheme of our method, and the pseudocode of our method is outlined in Alg~\ref{alg:pfm_simulation}.}
\revvdel{which integrates both a long-range and short-range map, reinitialized at intervals of $n^L$ and $n^S$ steps, respectively. Next, the calculation of the time step $\Delta t$ and midpoint velocity is undertaken. Utilizing the long-range and short-range maps, we evolve the impulse and its gradients from their points of reinitialization to the current step. Finally, we transfer these evolved values from particles to grid faces, creating the impulse field on the grid faces, which is subsequently projected by solving the Poisson equation, resulting in the derivation of the velocity field. The pseudocode of our method is outlined in Alg~\ref{alg:pfm_simulation}. 
We will elucidate these steps in further detail as follows. }
\begin{enumerate}[leftmargin=*]
    \item \textit{Reinitialize Long-range Map}. \ 
    Every $n^L$ step, we will uniformly redistribute particles throughout the entire simulation domain. Then, we reinitialize each particle's initial long-range map impulse \revv{$\bm m_a$} by interpolating the velocity field $\bm{u}_i$ on grid faces, according to Equation~\ref{eq:reinit_imp_long}. And backward map Jacobian \revv{$\mathcal{T}_{[a, b]}$} is reset to identity, outlined in Equation~\ref{eq:reinit_T_q_r_long}.
    
    \item \textit{Reinitialize Short-range Map}. \ \revv{Every $n^S$ step, each particle's \revvdel{initial short-range map impulse $\bm m_r$ and }impulse gradients $\nabla \bm m_b$ are reinitialized through interpolation from the grid's velocity field $\bm{u}_i$, as depicted in Equation~\ref{eq:reinit_imp_short}.}
    In addition, each particle's backward map Jacobian \revv{$\mathcal{T}_{[a, b]}$} and \revv{$\mathcal{T}_{[b, c]}$} are also reset by Equations~\ref{eq:reinit_T_q_r_short} and~\ref{eq:reinit_T_r_p}.
    
    \item \textit{CFL Condition}. \ We compute $\Delta t$ with velocity field $\bm{u}_i$ and the CFL number.

    \item \textit{Midpoint Method}. \ 
    Similar to NFM \cite{deng2023fluid}, we employ the second-order, midpoint method as outlined in Alg.~\ref{alg:midpoint} to substantially diminish the truncation error associated with temporal integration. 
    
    \item \textit{Advect Particles and Evolve \revv{$\mathcal{T}_{[b, c]}$}}. \ We march $\bm x_p$, \revv{$\mathcal{T}_{[b, c]}$} with Alg.~\ref{alg:RK4}, using $\bm{u}_i^\text{mid}$ and $\Delta t$.

    \item \textit{Compute \revv{$\mathcal{T}_{[a, c]}$}}. \ We employ \revv{$\mathcal{T}_{[a, b]}$} and \revv{$\mathcal{T}_{[b, c]}$} to calculate \revv{$\mathcal{T}_{[a, c]}$}, according to Equation~\ref{eq:evolve_T_long}.

    \item \textit{Compute \revv{$\bm m_c$}}. \ We evolve \revv{$\bm m_a$} from time \revv{$a$} to time \revv{$c$} using long-range map to calculate \revv{$\bm m_c$}, according to Equation~\ref{eq:evolve_imp_long}.

    \revvdel{\item \textit{Compute $\nabla \mathcal{T}_{[r, p]}$}. \ For each particle $k$, we aggregate the backward map gradient $\mathcal{T}_{[r, p]}^k$ of other particles $l$, weighted by $w_{kl}$ to calculate $\nabla \mathcal{T}_{[r, p]}$, as outlined by Equation~\ref{eq:compute_grad_T}.}
    
    \item \textit{Compute \revv{$\nabla \bm m_c$}}. \ \revv{We evolve \revvdel{$\bm m_{r}$ and }$\nabla \bm m_{b}$ from time $b$ to time $c$ using short-range map to calculate impulse gradients $\nabla \bm m_c$, according to Equation~\ref{eq:evolve_grad_imp_short}.}

    \item \textit{Particle-to-grid}. \ We transfer impulse \revv{$\bm m_c$} and impulse gradients \revv{$\nabla \bm m_c$} from particles to grid faces, resulting in impulse field $\bm m_i$ on grid, according to Equation~\ref{eq:imp_P2G}.

    \item \textit{Poisson Solving}. \ We obtain velocity field $\bm u_i$ on the grid from impulse field $\bm m_i$ by solving Poisson equation.
\end{enumerate}

\begin{figure*}[t]
 \centering
 \includegraphics[width=.985\textwidth]{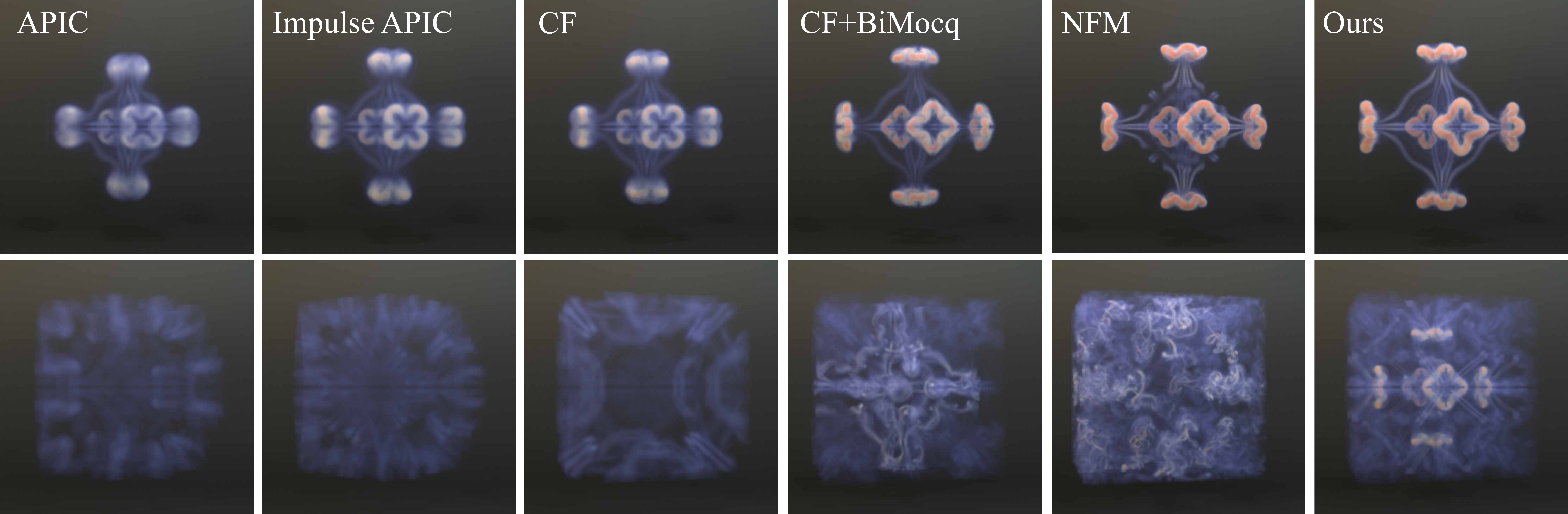}
 \caption{\revv{We show the comparison of collision between eight vortices. \revvv{The figures on the first row display the results at $t=10s$, and the figures on the second row display the results at $t=55s$}. From this experiment, we show our method is the only one capable of recovering the six vortex tubes after the vortices collide with the walls and reflect. This \revv{manifests} the ability of our method to preserve vorticity and the correctness of our treatment with solid \revv{boundaries}.}}
 \label{fig:eight_vort_compare}
\end{figure*}
\section{Validation}

\setlength{\abovecaptionskip}{12pt}

The goal of this section is to validate the effectiveness of our proposed method. We will first introduce a variation of PFM, which will be used for comparison. Then, we will compare PFM with \revv{five} methods to validate its effectiveness. Next, we will conduct an ablation study to validate some steps of our method.
\label{sec:validation}

\subsection{Comparison to Other Approaches}
\label{sec:compare_to_other_methods}

In this section, we assess the effectiveness of our method against established benchmarks, including methods of \revv{CF \cite{nabizadeh2022covector}, CF+BiMocq \cite{nabizadeh2022covector,qu2019efficient}}, NFM \cite{deng2023fluid}, APIC \cite{jiang2015affine}, and an impulse-modified APIC \revv{(IM APIC)} as we specified below. Our primary focus lies in comparing our method with these techniques regarding vortex preservation capabilities, energy conservation, visual intricacy, and the computational time and memory costs involved in the simulations.

\paragraph{Impulse-modified APIC}
\label{sec:impulse_apic}
To showcase that the particle flow map mechanism plays an essential role, we implemented an impulse-modified APIC solver to conduct ablation tests. By configuring both $n^L$ and $n^S$ to 1, effectively causing the long-range map and short-range map to be reinitialized at every step, PFM transitions into a methodology closely resembling APIC \cite{jiang2015affine}, with the primary distinction being the substitution of impulse for velocity. We refer to this variation as \textit{impulse-modified APIC}. In this approach, particles at each step interpolate the grid's velocity field to acquire impulse and impulse gradients. Then, the \revv{backward map Jacobian $\mathcal{T}$}\revvdel{ and $\nabla \mathcal{T}$} is calculated similarly to the approach in PFM. These elements are utilized to evolve the impulse and impulse gradients by a single step during the particle advection process. Subsequently, the evolved values are conveyed to the grid via a particle-to-grid process mirroring PFM's. An evaluative comparison between impulse-modified APIC and PFM is detailed in the following experiments.

\revvdel{
\setlength{\abovecaptionskip}{12pt}
\begin{figure}[t]
 \centering
 \includegraphics[width=.48\textwidth]{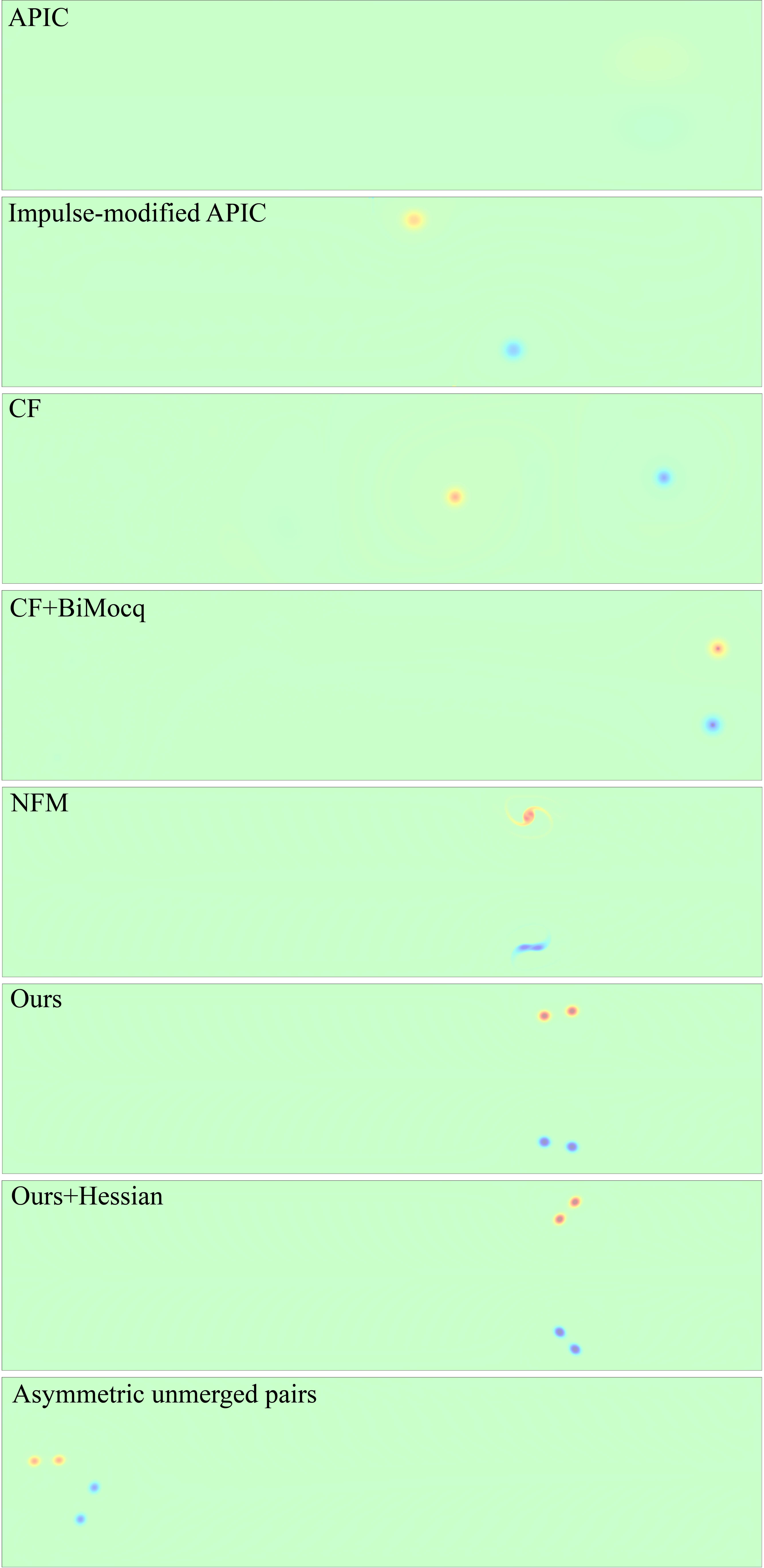}
 \caption{\revv{The top seven subfigures are snapshots capturing the state of 2D leapfrogging vortices at 234 seconds. The subfigure at the bottom shows two pairs of vortices in a 2D leapfrogging scenario that are asymmetric about the $y$-axis.}}
 \label{fig:2D_leapfrog_vort_compare}
\end{figure}
}


\revvdel{
\begin{figure}[t]
 \centering
 \includegraphics[width=.48\textwidth]{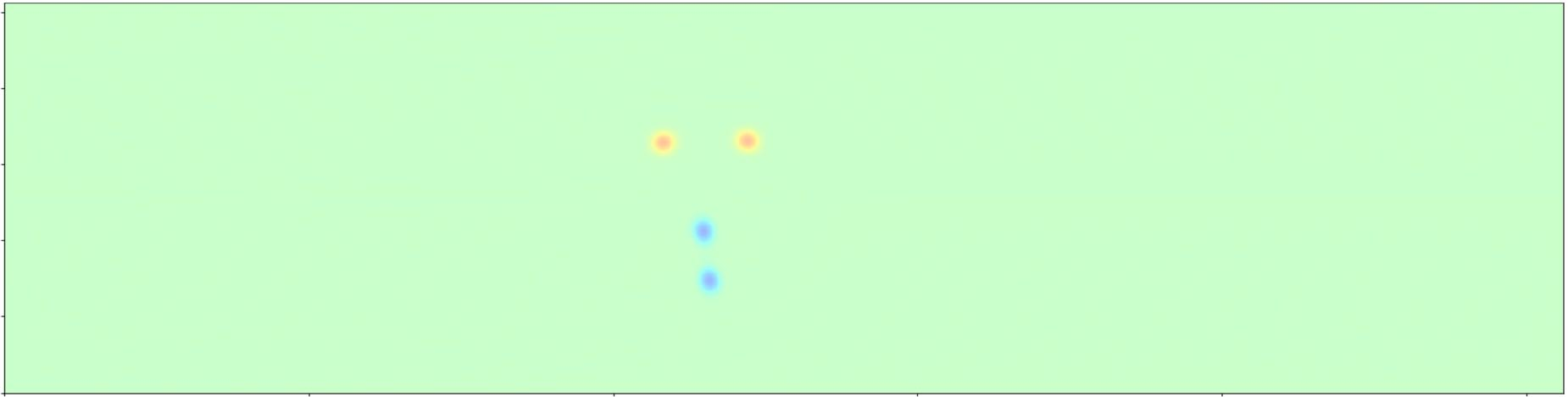}
 \caption{\junweirev{Two pairs of vortices in 2D leapfrogging scenario are asymmetric about $y$-axis.}}
 \label{fig:2D_leapfrog_vort_asymmetric}
\end{figure}
}

\begin{table}[t]
\caption{\junweirev{\revv{Simulation world time (frames $\times$ timestep)} of 2D leapfrogging vortices from the initial state to the end state. The observation reveals that \revv{APIC, impulse-modified APIC, CF, and CF+BiMocq} are unable to maintain the vortices over an extended period. On the other hand, the NFM exhibits improved performance, yet it remains less effective compared to the PFM.}}
\centering\small
\begin{tabularx}{0.5\textwidth}{ Y  Y }
\hlineB{3}
Method & Time (sec) \\
\hlineB{2.5}
APIC & 7.9 \\
\hlineB{1}
\revv{IM APIC} & 92.0 \\
\hlineB{1}
\revv{CF} & \revv{50.1} \\
\hlineB{1}
\revv{CF+BiMocq} & \revv{45.0} \\
\hlineB{1}
NFM & 234.0 \\
\hlineB{1}
Ours & 408.5 \\
\hlineB{1}
\revv{Ours+Hessian} & \revv{317.5} \\
\hlineB{3}

\end{tabularx}
\vspace{5pt}
\label{tab:2d_leapfrog_real_world_time}
\end{table}

\begin{figure}[t]
 \centering
 \includegraphics[width=.47\textwidth]{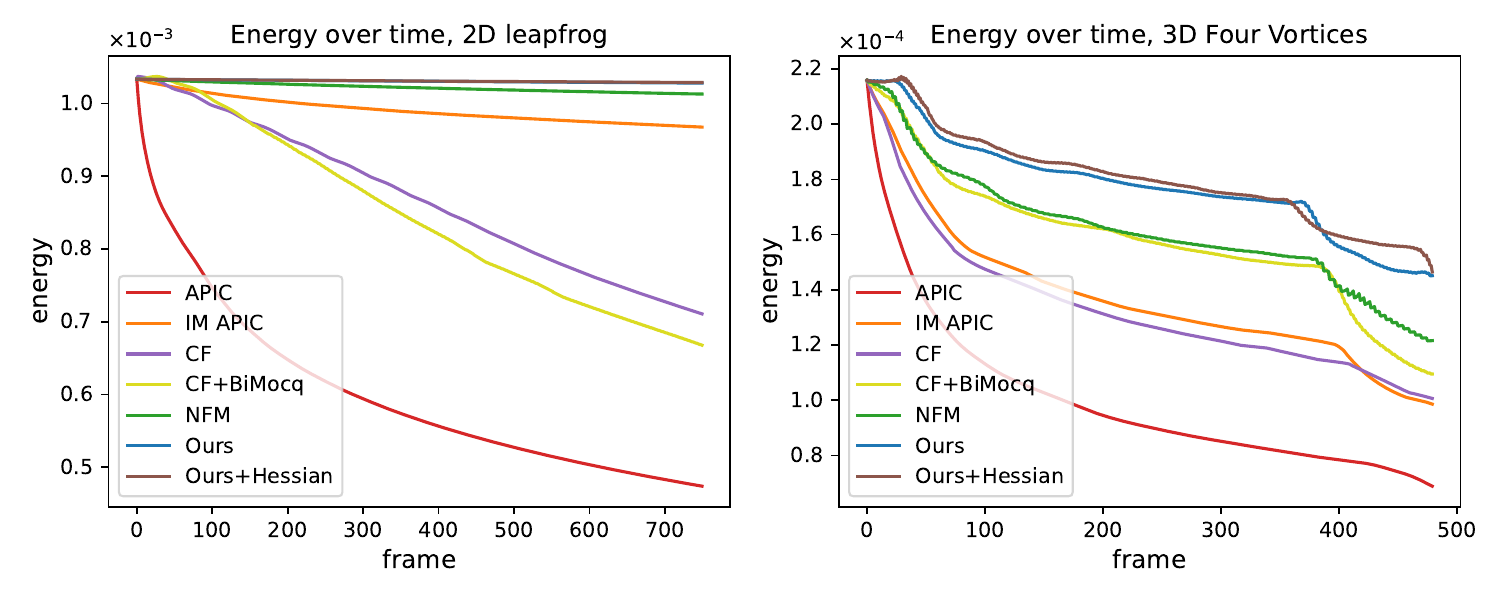}
 \caption{\junweirev{Time-varying energy of 2D leapfrogging vortices (left) and 3D four vortices collision (right). \revvdel{In both of these scenarios, the ranking of methods in terms of their ability to conserve energy, from the least effective to the most effective, is as follows: APIC, impulse-modified APIC, NFM, and PFM.}}}
 \label{fig:energy_compare}

\end{figure}

\paragraph{2D Analysis: Leapfrogging Vortices}

We establish the classic 2D leapfrogging vortex rings experiment. Initially, two negative and two positive vortices are placed on the left side of the domain. These vortices then move rightward and eventually return to their starting positions after colliding with the boundary. In an energy-conserving scenario, this cycle could continue indefinitely. However, in practical simulations, the vortices either merge into a single negative and a single positive vortex, or the two pairs become asymmetric along the $y$-axis. Both outcomes indicate numerical diffusion in the simulation. We timed the duration from the initial configuration to one of these end states, and the performance of various methods is summarized in Table~\ref{tab:2d_leapfrog_real_world_time}. The results reveal that the APIC method maintains the vortices for only a brief period before they dissipate, \revv{whereas impulse-modified APIC, CF, and CF+BiMocq perform slightly better than APIC.} The NFM method outperforms them but is still surpassed by PFM. This is consistent with the energy variation results presented in Figure~\ref{fig:energy_compare}, where PFM excels in energy preservation over the other methods.

Interestingly, although impulse-modified APIC also utilizes evolved impulse and impulse gradients on particles to reconstruct the velocity field, its effectiveness in maintaining vorticity is significantly inferior to that of PFM. This indicates that merely using impulse and impulse gradients is insufficient. For optimal vorticity retention, it is crucial to evolve these elements on particles using a flow map with adequate range, thereby enhancing numerical accuracy.

\setlength{\abovecaptionskip}{12pt}
\begin{figure}[t]
 \centering
 \includegraphics[width=.99\columnwidth]{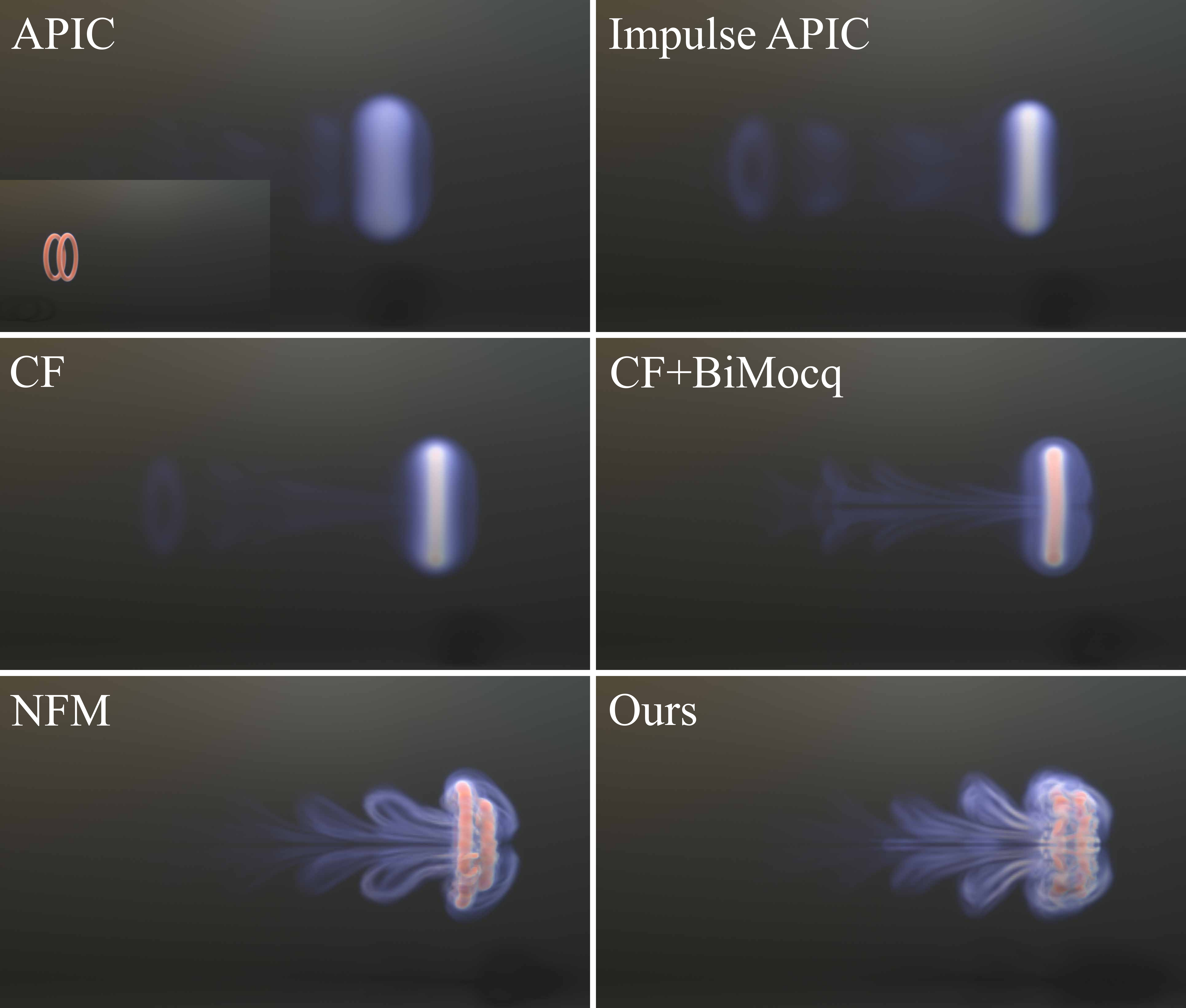}
 \caption{\revv{Comparison of 3D leapfrog vorticies. 
 Our method is able to maintain the separation of vortex rings for a range of time comparable with NFM, but other methods merge quickly. 
 The initial frame for all methods is located at the bottom-left corner of the APIC subfigure.}}
 \label{fig:3D_leapfrog_vort_compare}
\end{figure}

\paragraph{3D Analysis: Leapfrogging Vortices}

As illustrated in Figure~\ref{fig:3D_leapfrog_vort_compare}, we experiment with 3D leapfrogging vortex rings. Analogous to the 2D scenario, two rings in this experiment will perpetually leapfrog around each other in a conservative setting. \revv{In our findings, APIC, impulse-modified APIC, CF and CF+BiMocq manage to keep the two rings separate only up to the $3^\text{rd}$ leap. And both NFM and PFM successfully sustain to the $5^\text{th}$ leap.}

\setlength{\abovecaptionskip}{12pt}
\begin{figure*}[t]
 \centering
 \includegraphics[width=.985\textwidth]{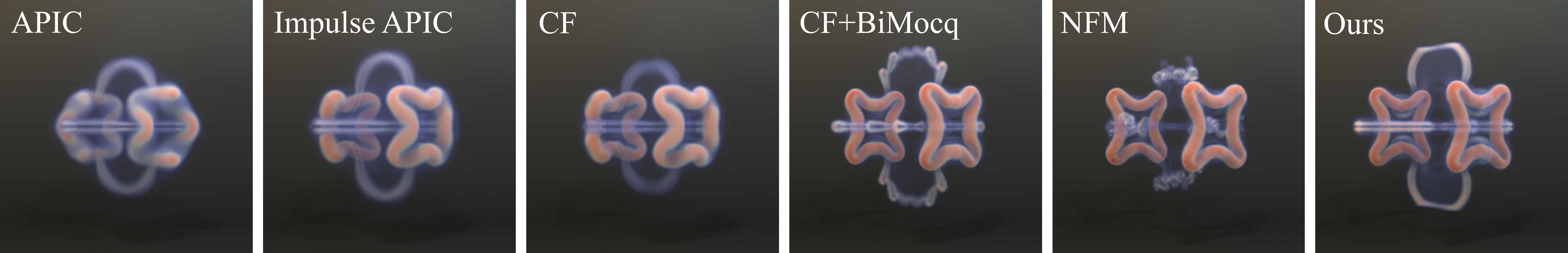}
 \caption{\revv{We show a time instance for the vorticity of four vortices colliding at a right angle. Our method maintains a comparable performance with NFM\revvdel{where the connecting bridge fades out as simulation goes on}.}}
 \label{fig:four_vort_compare}
\end{figure*}

\begin{figure*}
 \centering
 \includegraphics[width=.99\textwidth]{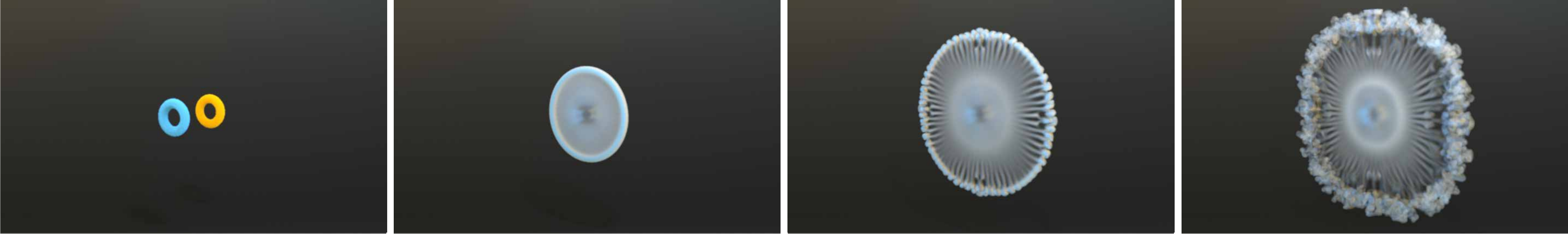}
 \caption{Simulation of two opposing vortex rings, apart on the $x$-axis. When they collide, they form a single ring that rapidly elongates in the $yz$-plane and contracts along the $x$-axis, leading to fragmentation into a circle of smaller secondary vortices.}
 \label{fig:headon}
\end{figure*}

\begin{figure*}
 \centering
 \includegraphics[width=.99\textwidth]{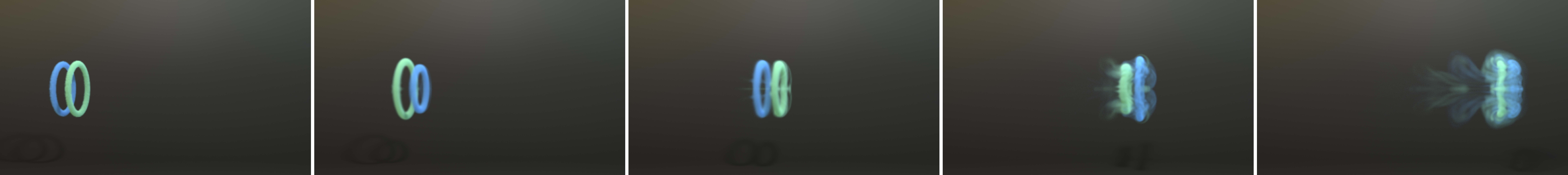}
 \caption{Smoke for 3D leapfrogging vortices. As shown in Section~\ref{sec:ablation_grad_rho}, our usage for $\nabla \rho_s$ with the correct advection equation provides the clearest result.}
 \label{fig:3D_leapfrog_smoke}
\end{figure*}

\paragraph{3D Analysis: Four Vortices Collision}
As depicted in Figure~\ref{fig:four_vorts}, we execute an experiment involving the collision of four vortices, during which the vortices merge and subsequently divide into two star-shaped vortices that drift in opposite directions until they encounter the boundary. We compare the simulation results of PFM to the other methods, as shown in Figure~\ref{fig:four_vort_compare}. The results reveal that all \revv{six} methods can generate the twin star-shaped vortices. However, post-collision, \revv{APIC, impulse-modified APIC, and CF} struggle to maintain vorticity, leading to dissipative, slower-moving vortices. In contrast, \revv{CF+BiMocq, NFM, and PFM} effectively preserve vorticity. This observation is consistent with the time-varying energy analysis presented in Figure~\ref{fig:energy_compare}, where \revv{APIC, impulse-modified APIC, and CF} exhibit rapid energy loss, whereas \revv{CF+BiMocq, NFM and PFM} demonstrate superior energy conservation, with PFM outperforming \revv{CF+BiMocq and NFM} in this regard.


\paragraph{3D Analysis: Eight Vortices Collision}
As illustrated in Figure~\ref{fig:eight_vorts}, we undertake an experiment involving the collision of eight vortices. The amalgamation of these vortices forms six star-shaped vortices, which proceed along the positive and negative axes of three orthogonal dimensions. Upon impact with the cube-shaped boundary, the vortices disperse into several vortex tubes, tracing the boundary surface before eventually merging to reconstruct six distinct vortices. 
As shown in Figure~\ref{fig:eight_vort_compare}, \revv{APIC, impulse-modified APIC, and CF} exhibit excessive dissipation, failing not only to preserve the vortical structures but also to regenerate the six vortices. While \revv{CF+BiMocq and NFM} manages to maintain the tubular formations, it falls short in preserving the overall symmetry of the system. In contrast, PFM preserves the entire system's symmetry and effectively reconstructs the six-vortex formation.

\begin{table}[t]
\caption{Average simulation time and maximum GPU memory cost of our 2D and 3D simulation examples.}
\centering\small
\begin{tabularx}{0.5\textwidth}{Y  Y  Y  Y }
\hlineB{2}
Name & Method & Time (sec / step) & GPU Mem. (GB) \\
\hlineB{1.5}
\textbf{2D Leapfrog} & APIC & 0.24 & 1.21 \\
\textbf{2D Leapfrog} & \revv{IM APIC} & 0.25 & 1.41 \\
\textbf{2D Leapfrog} & NFM & 23.11 & 2.00 \\
\textbf{2D Leapfrog} & Ours & 0.47 & 1.41 \\
\hlineB{1}
\textbf{3D Leapfrog} & APIC & 4.92 & 3.51 \\
\textbf{3D Leapfrog} & \revv{IM APIC} & 4.69 & 7.91 \\
\textbf{3D Leapfrog} & NFM & 52.05 & 13.33 \\
\textbf{3D Leapfrog} & Ours & 4.75 & 8.21 \\
\hlineB{1}
\textbf{3D Four Vort} & APIC & 4.60 & 3.51 \\
\textbf{3D Four Vort} & \revv{IM APIC} & 4.26 & 7.91 \\
\textbf{3D Four Vort} & NFM & 50.71 & 13.34 \\
\textbf{3D Four Vort} & Ours & 4.20 & 8.21 \\
\hlineB{1}
\textbf{3D Eight Vort} & APIC & 2.04 & 2.31 \\
\textbf{3D Eight Vort} & \revv{IM APIC} & 1.82 & 4.41 \\
\textbf{3D Eight Vort} & NFM & 47.98 & 7.91 \\
\textbf{3D Eight Vort} & Ours & 1.95 & 4.61 \\
\hlineB{2}
\end{tabularx}
\vspace{5pt}
\label{tab:time_memory}

\end{table}

\paragraph{Time and Memory Cost Analysis}
As depicted in Table~\ref{tab:time_memory}, PFM significantly outperforms NFM in terms of execution speed, delivering results comparable to or surpassing those achieved by NFM. Specifically, PFM is 49.1, 10.9, 12.0, and 24.6 times faster than NFM in scenarios involving 2D leapfrogging vortices, 3D leapfrogging vortices, 3D four vortices collision, and 3D eight vortices collision, respectively. This considerable increase in speed primarily stems from eliminating both the backtrack substeps and the neural buffer training process present in NFM. Moreover, by obviating the need for a neural buffer, PFM also realizes substantial memory savings, registering reductions of 29.5\%, 38.4\%, 38.4\%, and 41.7\% in memory usage compared to NFM for these respective scenarios.
Compared to APIC and impulse-modified APIC, PFM maintains a similar execution time and memory footprint, yet it significantly enhances the simulation outcomes for these examples, as previously mentioned.

\subsection{Ablation Study}

\paragraph{Redistribute Particles}
\label{sec:redistribute_particle}

We conduct a 2D leapfrogging vortices analysis using three different particle redistribution strategies: uniform particle redistribution, random particle redistribution, and no particle redistribution. These approaches are examined under conditions where ($n^L$, $n^S$) assumes values of (20, 5), (20, 8), (20, 9), (20, 10), and (20, 15). The \revv{simulation world time} from the initial state to the end state of the leapfrogging vortices, utilizing the three mentioned strategies, is illustrated in Figure~\ref{fig:redis_stragegy}. The results indicate that both the no redistribution and random redistribution methods fall short of achieving satisfactory results, with random redistribution slightly outperforming the former. However, it is the uniform particle redistribution that delivers the most efficient performance, an approach that is notably employed by PFM.

\begin{figure}[t]
 \centering
 \includegraphics[width=.47\textwidth]{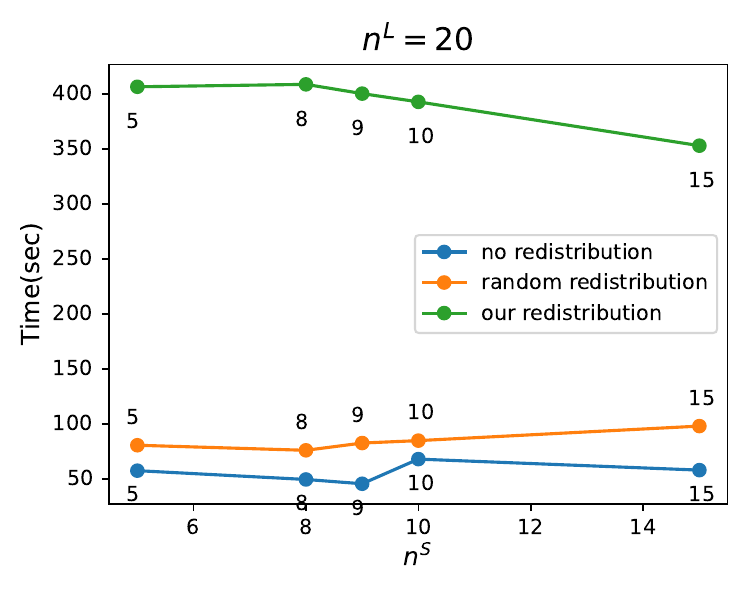}
 \caption{2D leapfrogging vortices' \revv{simulation world time} results of different particle redistribution strategies. The green line shows the results of the uniform redistribution strategy adopted in PFM, while the orange line and blue line show \revv{the} results of random redistribution and no \revv{redistribution}, respectively. The results show that uniform redistribution performs much better than the other two strategies.}
 \label{fig:redis_stragegy}
\end{figure}



\paragraph{Reinitialization Steps}
\label{sec:reinit_steps}

We examine the impact of reinitialization intervals $n^L$ and $n^S$ by conducting \revv{both 2D and 3D leapfrogging vortices experiments. Specifically, for the 2D scenario,} we set $n^S$ to values of 1, 4, 5, 6, 7, 8, 9, 10, 11, 12, 15, and 20 when $n^L$ is fixed at 20, and we alter $n^L$ to 8, 10, 15, 18, 20, 25, 30, 35, and 40 when $n^S$ is fixed at 8. \revv{For 3D scenario, we set $n^S$ to values of 1, 2, 3, 4, 5, 6 and 7 when $n^L$ is fixed at 20, and we alter $n^L$ to 5, 8, 12, 15, 18, 20, 22 and 25 when $n^S$ is fixed at 1.} The outcomes are depicted in Figure~\ref{fig:reinit_step} and~\ref{fig:reinit_step_3D} for 2D and 3D experiments, respectively. The results reveal that for a constant $n^L$, a smaller $n^S$ typically yields better results. However, as $n^S$ approaches the value of $n^L$, the efficiency of PFM diminishes rapidly. \revv{This confirms the importance of using both short-range and long-range maps in PFM instead of two flow maps of identical lengths.} Conversely, when $n^S$ is held constant, both excessively large and small values of $n^L$ lead to suboptimal performance. 


\begin{figure}
 \centering
 \includegraphics[width=.47\textwidth]{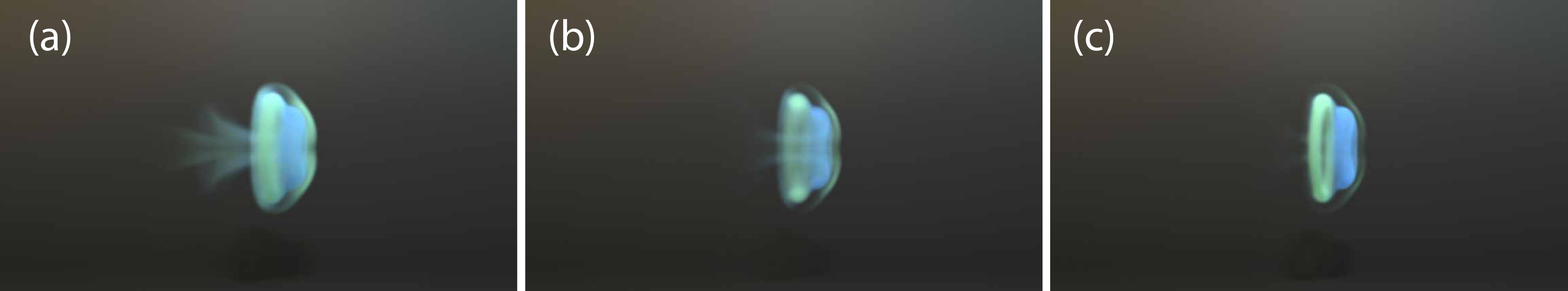}
 \caption{In this comparison, we show three leapfrog simulations of smoke but with different treatments for $\nabla \rho_s$. (a) show the simulation without using $\nabla \rho_s$; (b) shows the simulation if we just \revv{store} $\nabla \rho_s$ at every reinitialization step without evolving it during simulation; (c) shows our current implementation, which is using the evolution of $\nabla \rho_s$.}
 \label{fig:smoke_ablation}
\end{figure}

\paragraph{Evolution of Smoke Density Gradients}
\label{sec:ablation_grad_rho}
In our method, we employ a particle-to-grid strategy that not only facilitates the transport of vector quantities such as impulse, transferring both the quantity itself and its evolved gradients to the grid, but also extends this methodology to the transport of scalar quantities, like smoke density \revv{$\rho_s$}. 
This section highlights the \revvdel{pivotal}importance of accurately evolving $\nabla \rho_s$ with the flow map and effectively transferring it from the particles to the grid, especially in smoke advection.
In our approach, we store both the smoke density $\rho_s$ and its gradients $\nabla \rho_s$ on particles, with the gradients evolving using the flow map. Subsequently, we employ a strategy akin to that used for impulse to transfer these values to the grid. 
We show in Figure~\ref{fig:smoke_ablation} that not transferring $\nabla \rho_s$ to the grid or not evolving $\nabla \rho_s$ will cause smoothing behavior in smoke simulation. The first subfigure shows the smoke advected without $\nabla \rho_s$, and the smoke gets blurred quickly. The middle subfigure shows that if we keep $\nabla \rho_s$ for reinitialization without evolving it, \revv{smoke} also gets blurred (but slightly better). Only by evolving $\nabla \rho_s$ and transferring both $\rho_s$ and $\nabla \rho_s$ can we \revv{obtain} the optimal result.

\begin{figure}
 \centering
 \includegraphics[width=.47\textwidth]{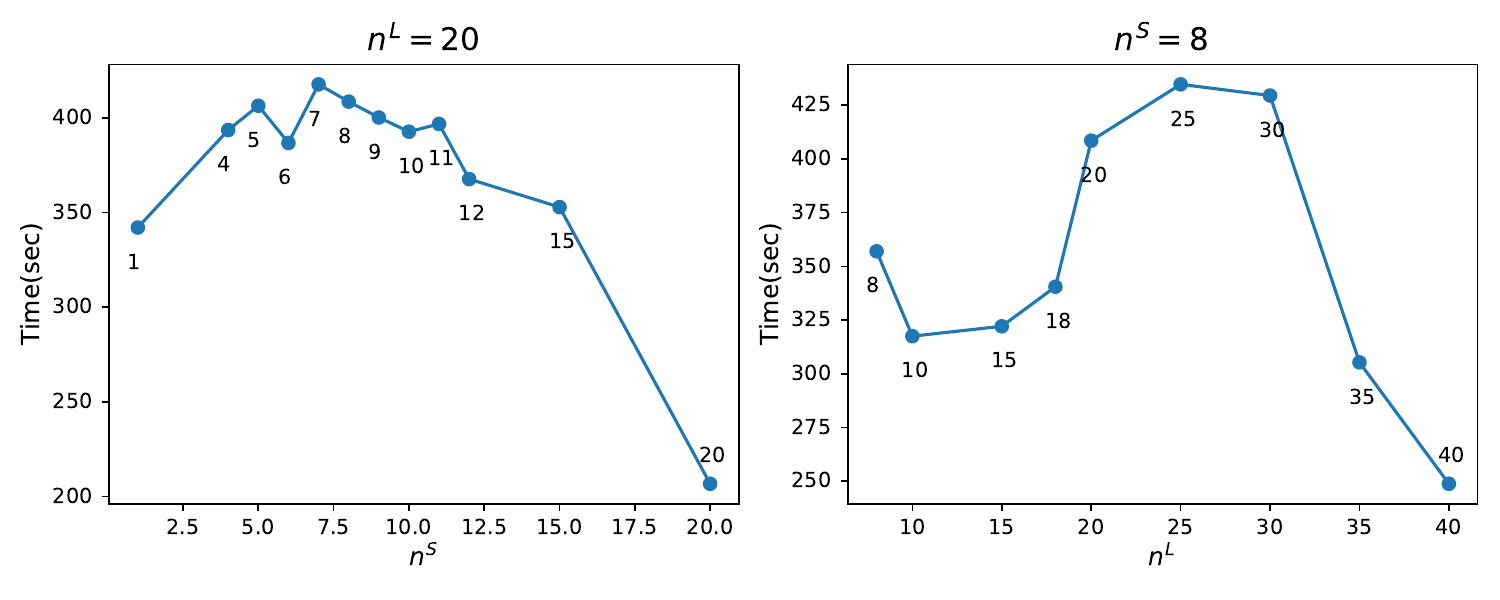}
 \caption{2D leapfrogging vortices' \revv{simulation world time} results of various reinitialization steps. The left side corresponds to a fixed $n^L=20$ and varying $n^S$, and the right side corresponds to a fixed $n^S=8$ and varying $n^L$.}
 \label{fig:reinit_step}
\end{figure}

\begin{figure}
 \centering
 \includegraphics[width=.47\textwidth]{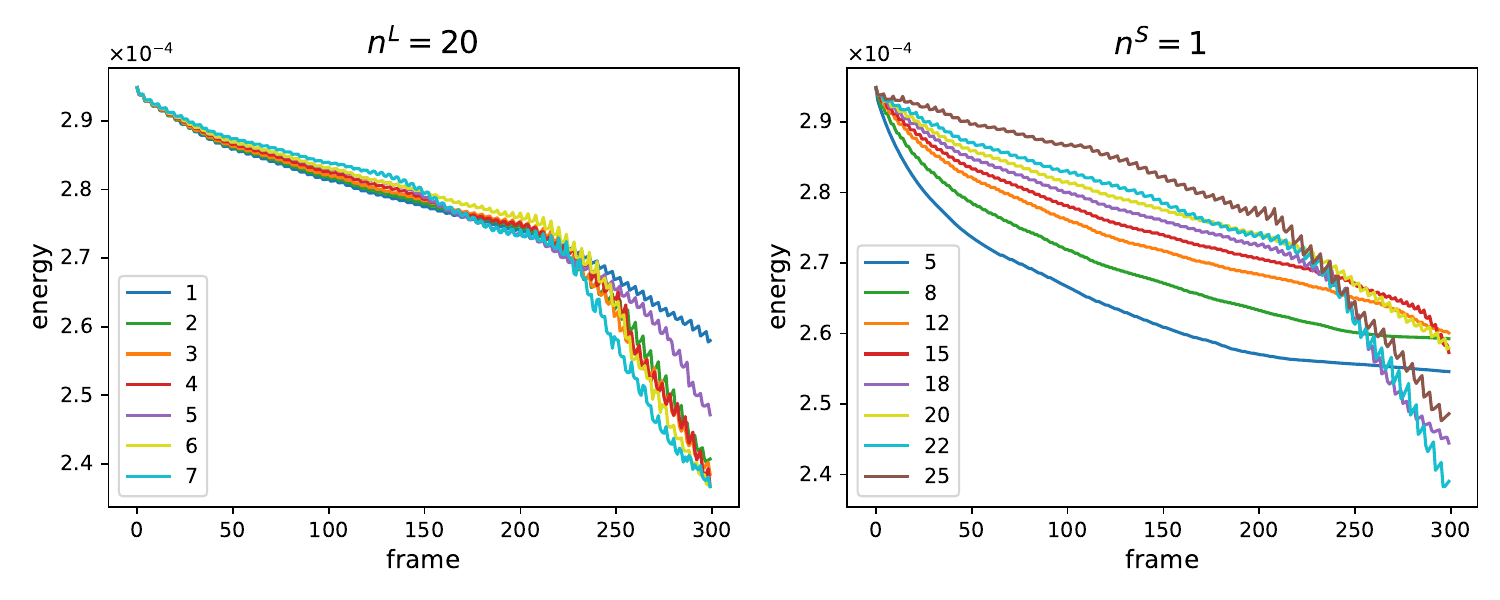}
 \caption{\revv{3D leapfrogging vortices' time-varying energy results of various reinitialization steps. The left side corresponds to a fixed $n^L=20$ and varying $n^S$, and the right side corresponds to a fixed $n^S=1$ and varying $n^L$.}}
 \label{fig:reinit_step_3D}
\end{figure}

\section{Examples}
In this section, we present various intricate simulation examples to demonstrate the efficacy of our PFM simulator. A comprehensive list of these examples and their respective configurations is available in Table~\ref{tab: examples_table}. It is assumed that the shortest edge of the simulation domain is of unit length. \revv{We refer readers to our supplemented video for detailed simulation results.} All simulations were performed on workstations equipped with an AMD Ryzen Threadripper 5990X and \revv{an} NVIDIA RTX 3080/A6000.

\subsection{Vortical Flow}
\revv{\paragraph{Taylor Vortex}
We adapt the setting from \cite{mckenzie2007hola} where the simulation domain is $[-\pi, \pi]$ and two Taylor vortices separated by $0.8$. We show the separation process in Figure~\ref{fig:taylor_karman}}.
\paragraph{Leapfrogging Vortices (2D)}
Table~\ref{tab:2d_leapfrog_real_world_time} shows our 2D leapfrog experiment where we position four point vortices with equal magnitudes of 0.005 at $x$ = 0.0625 and $y$ values of 0.26, 0.38, 0.62, and 0.74, with the upper two negative and the lower two positive. These vortices are modeled using a mollified Biot-Savart kernel with a support of 0.02. \revv{In the experiment, our method accurately maintains vortex structure for 408.5 seconds.}
\paragraph{Leapfrogging Vortices (3D)}
In Figure~\ref{fig:3D_leapfrog_smoke}, the initial setup involves two vortex rings aligned parallelly, positioned at $x$ = 0.16 and 0.29125. These rings have a major radius of 0.21 and a minor radius (the mollification support of the vortices) of 0.0168. Our approach maintains separation of the vortex rings beyond the $5^\text{th}$ leap.

\paragraph{Oblique Vortex Collision}
\revv{In Figure~\ref{fig:oblique}, we detail an experiment with two perpendicular vortex rings, offset by 0.3 units along the $x$-axis, each having a major radius of 0.13 and a minor radius of 0.02. }

\paragraph{Headon Vortex Collision}
\revv{Figure~\ref{fig:headon} shows an experiment with two opposing vortex rings, separated by 0.3 units on the $x$-axis, each with a major radius of 0.065 and a minor radius of 0.016. When these rings collide, they form a single ring that elongates in the $yz$-plane and contracts along the $x$-axis, leading to destabilization and fragmentation into secondary vortices \cite{lim1992instability}.}

\paragraph{Trefoil Knot}
\revv{Figure~\ref{fig:trefoil} replicates the trefoil knot, using an initialization file from \citet{nabizadeh2022covector}, based on \citet{kleckner2013creation}. Our simulation result matches the experiments.}

\setlength{\abovecaptionskip}{12pt}
\begin{figure}[t]
 \centering
 \includegraphics[width=.99\columnwidth]{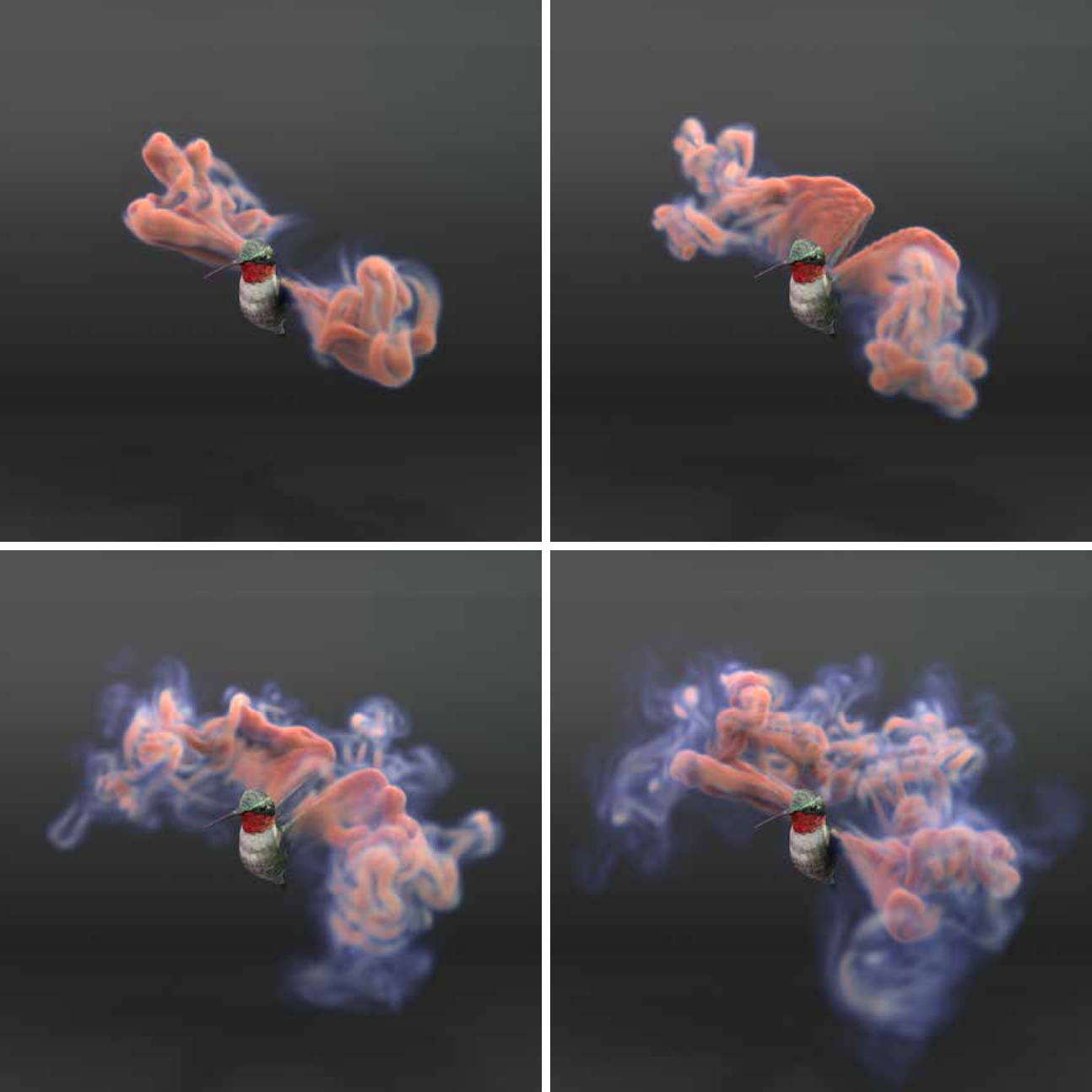}
 \caption{These images depict a bird in the act of flapping its wings, a behavior that generates visually stunning and complex vortices. These vortices interact with each other, forming intricate shapes.}
 \label{fig:bird}
\end{figure}

\paragraph{Four Vortices Collision}
\revv{Figure~\ref{fig:four_vorts} depicts an experiment based on \citet{Matsuzawa2022VideoTT} where four vortex rings, arranged in a square pattern, collide in the $yz$-plane. These rings, each with a major radius of 0.15 and a minor radius of 0.024 merge into two four-pointed star-shaped vortices until colliding with the boundaries. We compare our result with previous methods in Figure~\ref{fig:four_vort_compare}.}

\begin{figure}[t]
 \centering
 \includegraphics[width=.99\columnwidth]{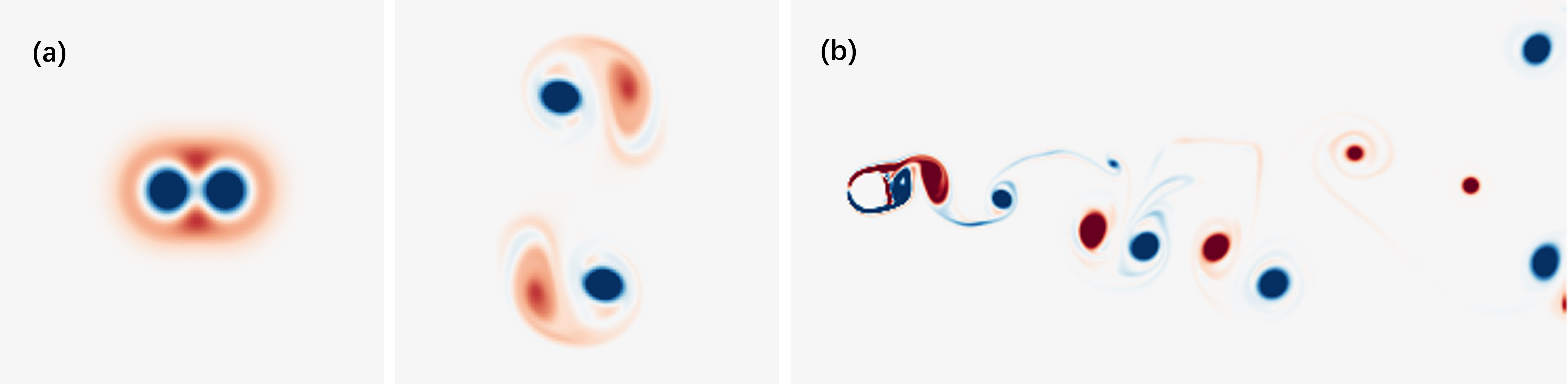}
 \caption{(a) shows taylor vortex splitting and (b) shows Karman Vortex Street at $t=60s$ with no viscosity.}
 \label{fig:taylor_karman}
\end{figure}

\paragraph{Eight Vortices Collision}
\revv{Figure~\ref{fig:eight_vorts}, based on \citet{Matsuzawa2022VideoTT}, shows an experiment with eight vortices forming a regular octahedron, each angled at 35.26° relative to the $z$-axis. These rings, with major radius of 0.08 and minor radius of 0.024, collide and reconnect into six rings within a cubic boundary. Figure~\ref{fig:eight_vort_compare} shows our method uniquely recreating this dynamic.}


\begin{figure*}
\centering
\begin{minipage}{.49\linewidth}
  \includegraphics[width=\linewidth]{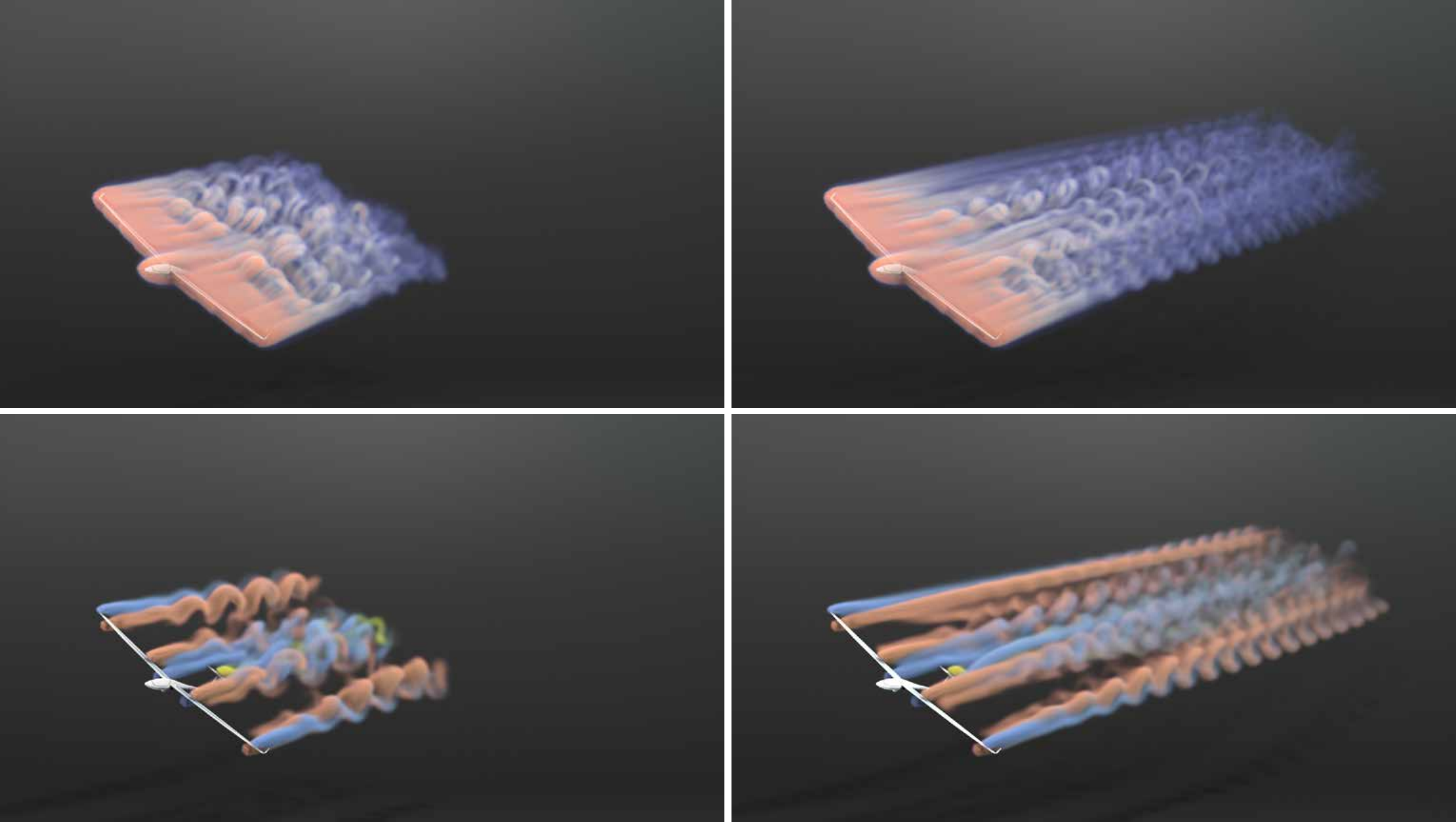}
  \captionof{figure}{These images show the formation of spiral-shaped vortices on the outer sides of a flat-wing aircraft under the influence of airflow coming at an angle with the plane. This phenomenon is identical to the contrails we observe in the sky.}
  \label{fig:plane_flat}
\end{minipage}
\hspace{.01\linewidth}
\begin{minipage}{.49\linewidth}
  \includegraphics[width=\linewidth]{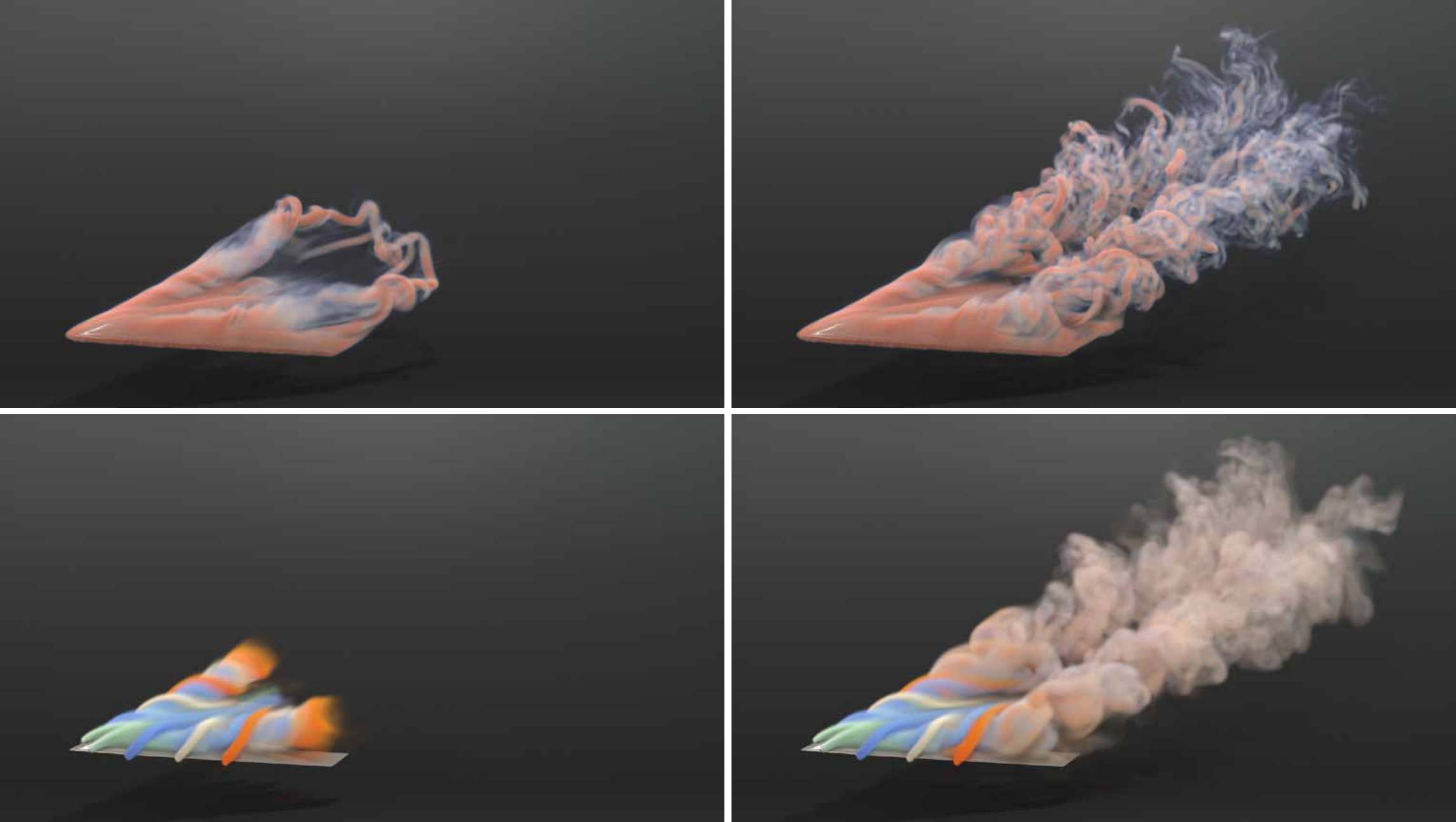}
  \captionof{figure}{The pictures depicts a delta wing with a sweep angle exposed to airflow at a angle of attack, and the "vortex lift" phenomenon becomes observable. This observation is consistent with the physical experiments reported in \cite{delery2001robert}.}
  \label{fig:plane_delta}
\end{minipage}
\end{figure*}

\subsection{Solid Boundaries}
We employ a triangle mesh for solid-driven flow applications, either static or animated by a preset skeleton. The forward difference between two consecutive timesteps determines the velocity at each mesh vertex. We adopt an interpolation method, as described in \cite{robinson2008two}, to transfer velocity from mesh vertices to a grid. The weight of this transfer is based on the intersection area between the grid cell and the triangle patch. This velocity is then used as a solid boundary condition to drive the fluid simulation.
\revv{\paragraph{Karman Vortex Street}
In Figure~\ref{fig:taylor_karman}, we place a cylinder at [0.2, 0.5] with the radius of 0.05. Due to lack of viscosity, incoming flow creates more turbulent vortices, as shown in \cite{nabizadeh2022covector}.
}
\paragraph{Lifting Airplane}
\revv{In Figures \ref{fig:plane_flat} and \ref{fig:plane_delta}, we showcase planes with different lifting angles facing incoming flow from the front boundary. Both delta wing and flat wing configurations are depicted. The trajectories observed in our smoke visualization model closely mimic real-world airplane condensation trails.}
\paragraph{Flapping Bird, Swimming Fish, and Rotating Propeller} Figure~\ref{fig:bird}, \ref{fig:fish} and \ref{fig:propeller} illustrate our method coupling with the skeleton driven meshes. Skeletons are built in Blender in each example and set with periodic movements such as continuous rotations. Bezier interpolations between keyframes are performed to ensure smoothness.

\begin{table*}[t]
\caption{The catalog of all our 2D and 3D simulation examples. $n^L$ and $n^S$ are the reinitialization steps of long-range and short-range maps, respectively, and \#Particles is the particle count per cell at reinitialization step.}
\centering\small
\begin{tabularx}{\textwidth}{Y | Y | Y | Y | Y | Y | Y}
\hlineB{3}
Name & Figure\revv{/Table} & Resolution & CFL & $n^L$ & $n^S$ & \#Particles \\
\hlineB{2.5}
2D Leapfrog & Table~\ref{tab:2d_leapfrog_real_world_time} & 1024 $\times$ 256 & 1.0 & 20 & 8 & \revv{16} \\
\hlineB{2}
\revv{2D Taylor Vortex} & Figure~\ref{fig:taylor_karman} (a) & 128 $\times$ 128 & 1.0 & 20 & 8 & 16 \\
\hlineB{2}
\revv{2D Karman Vortex} & Figure~\ref{fig:taylor_karman} (b) & 512 $\times$ 256 & 0.5 & 20 & 8 & 16 \\
\hlineB{2}
3D Leapfrog & Figure~\ref{fig:3D_leapfrog_smoke} & 256 $\times$ 128 $\times$ 128 & 0.5 & \revv{20} & 1 & \revv{8} \\
\hlineB{2}
3D Oblique & Figure~\ref{fig:oblique} & 128 $\times$ 128 $\times$ 128 & 0.5 & 12 & 4 & \revv{8} \\
\hlineB{2}
3D Headon & Figure~\ref{fig:headon} & 128 $\times$ 256 $\times$ 256 & 0.5 & 12 & 4 & \revv{8} \\
\hlineB{2}
3D Trefoil & Figure~\ref{fig:trefoil} & 128 $\times$ 128 $\times$ 128 & 0.5 & 12 & 1 & \revv{8} \\
\hlineB{2}
3D Four Vortices & Figure~\ref{fig:four_vorts} & 128 $\times$ 128 $\times$ 256 & 0.5 & 12 & 4 & \revv{8} \\
\hlineB{2}
3D Eight Vortices & Figure~\ref{fig:eight_vorts} & 128 $\times$ 128 $\times$ 128 & 0.5 & 12 & 4 & \revv{8} \\
\hlineB{2}
Flat Wing & Figure~\ref{fig:plane_flat} & 384 $\times$ 128 $\times$ 128 & 0.5 & 12 & 4 & \revv{8} \\
\hlineB{2}
Delta Wing & Figure~\ref{fig:plane_delta} & 384 $\times$ 128 $\times$ 128 & 0.5 & 12 & 4 & \revv{8} \\
\hlineB{2}
Flapping Bird & Figure~\ref{fig:bird} & 128 $\times$ 128 $\times$ 128 & 0.5 & 12 & 4 & \revv{8} \\
\hlineB{2}
Swimming Fish & Figure~\ref{fig:fish} & 256 $\times$ 128 $\times$ 128 & 0.5 & 12 & 4 & \revv{8} \\
\hlineB{2}
Rotating Propeller & Figure~\ref{fig:propeller} & 256 $\times$ 128 $\times$ 128 & 0.5 & 12 & 4 & \revv{8} \\
\hlineB{3}
\end{tabularx}
\vspace{5pt}
\label{tab: examples_table}

\end{table*}

\section{Discussion and Conclusion}

\setlength{\abovecaptionskip}{12pt}
In this paper, we present a particle flow map method (PFM) that achieves state-of-the-art advection fidelity and simulation accuracy in terms of energy conservation, vorticity preservation, and visual complexity, through its novel bridging of impulse fluid mechanics, long-range flow map, and hybrid Eulerian-Lagrangian simulation. Compared to the accurate but expensive NFM method, our PFM method drastically reduces the computational cost without sacrificing the physical fidelity, as it leverages the mathematical and numerical insight that a set of forward-evolving equations can be derived and naturally solved on particles to replace the costly backtracing procedure. To realize this idea to its full potential, we carefully devise a two-scale flow map representation along with a novel particle-to-grid transfer scheme to devise a hybrid solver whose accuracy and versatility are thoroughly verified through a set of challenging numerical experiments, including leapfrogging vortices, vortex tube reconnections, and turbulent flows. 

\paragraph{Connection to NFM} The goal for both PFM and NFM is to accurately compute impulse $\bm m$ on the grid, but they present two different numerical perspectives: NFM traces ``virtual particles'' whose final positions coincide with the grid points, but their initial positions are inside grid cells. Hence, \textit{a grid-to-particle interpolation scheme is required at the initial time}. PFM maintains forward simulating particles whose final positions are inside cells, but their initial values are known from (re)initialization. Hence, \textit{a particle-to-grid transfer scheme is required at the final time}. As a result, both perspectives require exactly one potentially lossy communication between particles and grids for each simulation step. To this end, NFM uses a naive linear interpolation scheme and compensates for the interpolation error with BFECC \cite{kim2006advections}. PFM, on the other hand, devises an advanced particle-to-grid transfer scheme that largely reduces the error. Thus, for accuracy, both methods offer comparable levels of performance, which are validated in our experiments. However, from the efficiency standpoint, PFM is by far more desirable due to the elimination of the costly backtracing process, as it can be up to $49\times$ faster and $41\%$ more compact than NFM. 

\begin{figure}[t]
 \centering
 \includegraphics[width=.99\columnwidth]{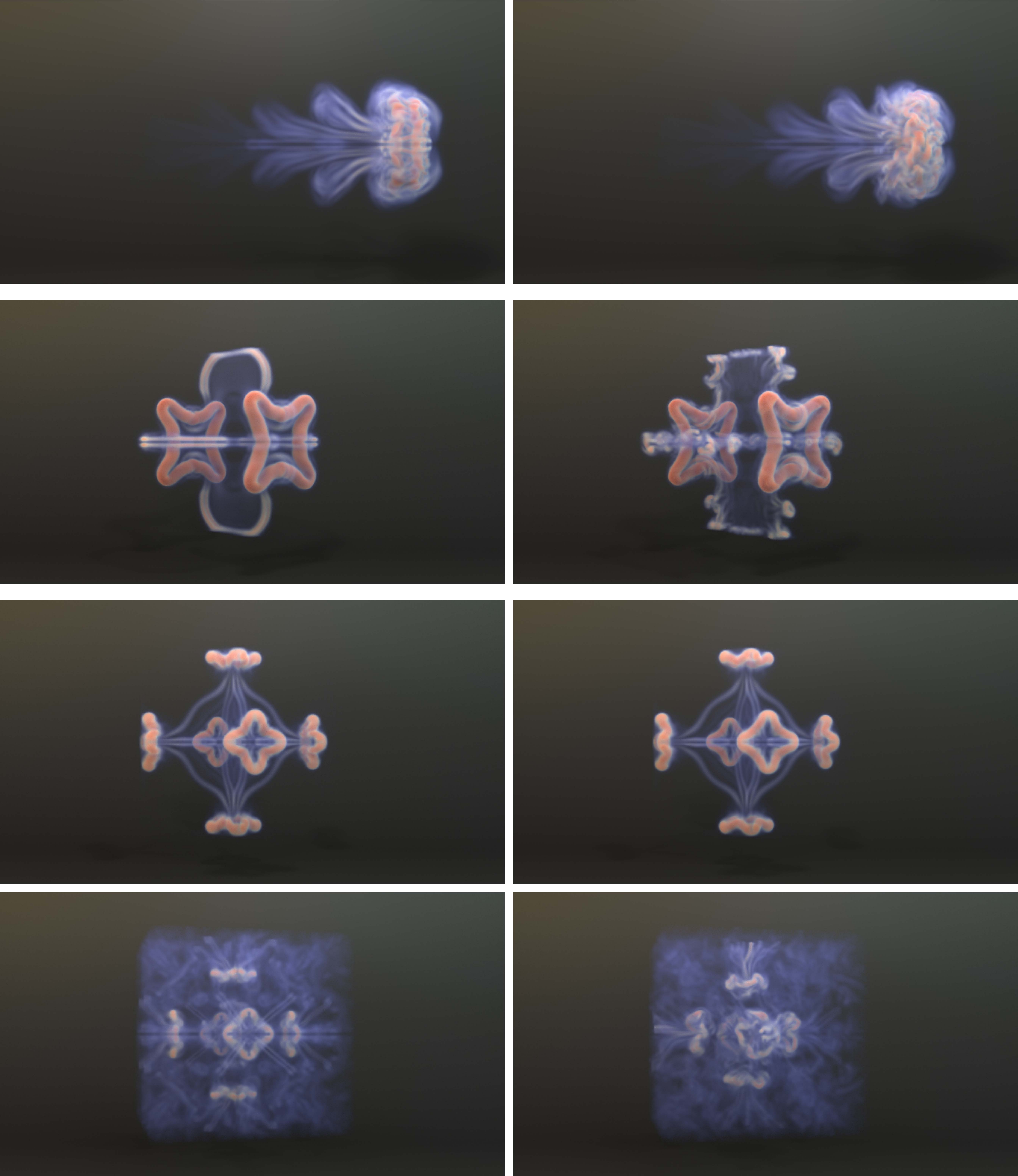}
 \caption{\revvv{Simulation results of our method (left column) and our method incorporating the Hessian term (right column). The figures on the first and second rows show the results of 3D leapfrogging vortices and four vortices collision, respectively. The bottom two rows depict the results of eight vortice collisions at $t=10s$ and $t=55s$, respectively.}}
 \label{fig:ours_use-hessian}
\end{figure}

\paragraph{Connection to Hybrid Eulerian-Lagrangian Methods} Our PFM method presents itself as a new type of hybrid Eulerian-Lagrangian method, which is in line with PIC \cite{harlow1962particle}, APIC \cite{jiang2015affine}, and similar methods, but one with significantly improved ability to preserve the vortical structure, which is enabled by its innovative exploitation of the particles' roles not only as samples of physical fields but as samples of high-quality and adaptive-range flow maps. With this new perspective, we devise novel mathematical formulations to endow particles with the crucial ability to evolve backward \revv{map} Jacobian $\mathcal{T}$ in a forward-simulating manner, which allows our hybrid framework to be readily bridged with the highly effective impulse-based fluid paradigm. Not only is combining Lagrangian particles with a flow map physically inspired and computationally efficient, but it also naturally empowers the development of our length-adaptive mechanism, which is shown to be key to successful simulation, as verified in our ablation tests. Inspired by the conservative transfer schemes proposed by APIC \cite{jiang2015affine} and Taylor-PIC \cite{nakamura2023taylor}, our method conservatively transfers $\bm m$ by leveraging its evolved gradients $\nabla \bm m$, and our scheme can be seen as a generalization of APIC as it allows $\nabla \bm m$ to be accurately evolved using adaptive-length flow maps rather than single-step ones. In our comparison tests with APIC, we \revv{showcase} that such an improved accuracy for $\nabla \bm m$ contributes to the simulation accuracy.

\paragraph{Further Ablation Tests on Hessian} 
\label{sec:discussion_hessian}
\revv{
We excluded the Hessian term in Equation~\ref{eq:evolve_grad_imp} in our proposed numerical implementation in Section~\ref{sec:impulse_transport}. In this section, we provide further ablation tests to evaluate the efficacy of Hessian in enhancing numerical accuracy.
We calculated backward map Hessian $\nabla \mathcal{T}_{[b, c]}^p$ for each particle $p$ by aggregating backward map Jacobian $\mathcal{T}_{[b, c]}^k$ of its neighboring particles, weighted by $\nabla w_{pk}$, where $k$ is the index of the neighboring particle.
We then included $(\nabla \mathcal{T}_{[b, c]}^p)^T \bm m_{b}$ to the right-hand side of Equation~\ref{eq:evolve_grad_imp_short} to align it with Equation~\ref{eq:evolve_grad_imp}.
Then, we conduct validation experiments with the same setup as detailed in Section~\ref{sec:compare_to_other_methods} and show the results in Table~\ref{tab:2d_leapfrog_real_world_time} and Figure~\ref{fig:ours_use-hessian} for 2D and 3D leapfrogging vortices, \revvv{four vortices collision}, and \revvv{eight vortices collision}.
Additionally, the time-varying energy for the 2D leapfrogging vortices and \revvv{four vortices collision} is depicted in Figure~\ref{fig:energy_compare}. 
From 2D/3D leapfrog experiments, we observed no significant improvement in preserving vortical structures compared to our proposed scheme. Similarly, the formula incorporating the Hessian term does not preserve symmetry structures as effectively in \revvv{four/eight vortices collision} experiments as our scheme does. We conjecture that this discrepancy in ability is due to potential inaccuracies in the backward map Hessian when directly calculated from particles, which leads to errors in mapping impulse gradients through flow maps.
}
%
%
%
\revv{Furthermore, from a theoretical perspective, we observe that our method corresponds to a specific case of transferring \(\mathcal{T}\) from particles to the grid. We demonstrate that using only the first term on the right-hand side of Equation~\ref{eq:evolve_grad_imp} is equivalent to this transfer in a PIC \cite{harlow1962particle} manner, as detailed in Appendix~\ref{sec:hessian_pic_apic}. Conversely, including the Hessian term modifies the transfer method from PIC to an APIC \cite{jiang2015affine} manner. Consequently, inspired by APIC's enhancements over PIC, we anticipate uncovering additional numerical benefits by incorporating the Hessian term towards a higher-order PFM in our future exploration.}


\paragraph{Uniform Particle Redistribution}
\revv{Uniform particle redistribution (or otherwise terms as remeshing or regridding) has been a widely used practice to address the particle distortion problem in the Vortex-In-Cell (VIC) literature \cite{koumoutsakos2008flow}. As consolidated in Figure~\ref{fig:redis_stragegy}, a uniform sampling strategy is necessary to reduce the interpolation error for the impulse grid-to-particle transfer. We conjure the reason to be the fact that we are transferring $\mathcal{T}$ from particles to the grid, as shown in the previous discussion, and when reinitialization happens, uniform redistributing particles gives a more structured reconstruction of its current space. 
}

\paragraph{Limitations \& Future Works}
Our method is currently subject to a few limitations. First, it currently considers Euler's equation for inviscid fluid flows, and the incorporation of viscosity and diverse external forces remains an open problem for our flow map-based framework. Secondly, 
handling interfacial phenomena still proves challenging for our impulse-based formulation as it introduces additional terms in the Poisson equation which brings about considerable numerical instability. Solving these technical challenges will allow free-surface liquids to be simulated which unlocks a new range of vortical phenomena like toroidal bubbles. Finally, we currently employ kinematic coupling for handling solid-fluid interactions, and the development of dynamic two-way coupling schemes in the PFM framework would be an exciting future challenge.

\section*{Acknowledgements}
\revv{
We express our gratitude to the anonymous reviewers for their insightful feedback. We especially appreciate Zhiqi Li for his insights on the connection between the Hessian term and the transfer of the backward map Jacobian. We also thank Jinyuan Liu and Taiyuan Zhang for their insightful discussion. Georgia Tech authors acknowledge NSF IIS \#2313075, ECCS \#2318814, CAREER \#2420319, IIS \#2106733, OISE \#2153560, and CNS \#1919647 for funding support. We credit the Houdini education license for video animations.
}


{
\bibliography{ref}
\bibliographystyle{ACM-Reference-Format}
}

\newpage
\appendix

\section{Code}
\revvv{Our code is available at \href{https://github.com/zjw49246/particle-flow-maps}{https://github.com/zjw49246/particle-flow-maps}. We implement our methods with Taichi Programming Language \cite{hu2019taichi}.}

\section{Evolution of the Jacobian matrix of the flow map}
\label{sec:Jacobian_prov}

\begin{theorem}

Assuming
\begin{equation}
\begin{dcases}
\mathcal{F}(\bm \phi,t) =\frac{\partial \bm \phi (\bm x,t)}{\partial \bm x}\\
\frac{\partial \bm \phi(\bm x,t)}{\partial t} = \bm u [\bm \phi(\bm x, t), t]
\end{dcases}
\label{eq:Fxt}
\end{equation}
then
\begin{equation}
\frac{\partial \mathcal{F}_{ij}(\bm x,t)}{\partial t} + u_k(\bm x,t)\frac{\partial \mathcal{F}_{ij}(\bm x,t)}{\partial x_k}=\frac{\partial  u_i (\bm x,t )}{\partial \bm x_k} \mathcal{F}_{kj}(\bm x, t).
\label{eq:DFphi}
\end{equation}
\end{theorem}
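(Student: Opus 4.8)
The plan is to derive \eqref{eq:DFphi} by first computing the rate of change of $\mathcal{F}$ along a trajectory in the Lagrangian frame and then re-expressing that rate as an Eulerian field equation. Following the notation of the main text, let $\bm X$ denote the initial material label and $\bm x = \bm\phi(\bm X, t)$ the current position, so that the forward-map Jacobian has entries $\mathcal{F}_{ij} = \partial \phi_i / \partial X_j$. The starting observation is that the left-hand side of \eqref{eq:DFphi} is exactly the material derivative $\tfrac{D\mathcal{F}_{ij}}{Dt} = \partial_t \mathcal{F}_{ij} + u_k\,\partial_{x_k}\mathcal{F}_{ij}$ once $\mathcal{F}$ is viewed as a field over the current position $\bm x$. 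It therefore suffices to compute $\tfrac{D\mathcal{F}_{ij}}{Dt}$, i.e.\ the time derivative at fixed label $\bm X$, and to match it against the claimed right-hand side.

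First I would differentiate $\mathcal{F}_{ij} = \partial\phi_i/\partial X_j$ in time holding $\bm X$ fixed. The crucial step is to interchange this time derivative with the spatial derivative with respect to the label, giving $\tfrac{d}{dt}\tfrac{\partial \phi_i}{\partial X_j} = \tfrac{\partial}{\partial X_j}\tfrac{\partial \phi_i}{\partial t}$. Substituting the flow equation $\partial_t\phi_i = u_i[\bm\phi(\bm X, t), t]$ from \eqref{eq:Fxt} and applying the chain rule, I obtain
\begin{equation*}
\frac{\partial}{\partial X_j}\, u_i[\bm\phi(\bm X, t), t] = \frac{\partial u_i}{\partial x_k}\,\frac{\partial \phi_k}{\partial X_j} = \frac{\partial u_i}{\partial x_k}\,\mathcal{F}_{kj},
\end{equation*}
where $\partial u_i/\partial x_k$ is the Eulerian velocity gradient evaluated at the current position $\bm x = \bm\phi(\bm X, t)$. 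This is precisely the right-hand side of \eqref{eq:DFphi}.

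To finish, I would invoke the definition of the material derivative to convert the Lagrangian rate just computed into Eulerian form: the derivative at fixed label equals $\partial_t(\cdot) + u_k\,\partial_{x_k}(\cdot)$ when the quantity is regarded as a field over the current position and time. Equating the two expressions gives \eqref{eq:DFphi}. The same three-step template — differentiate in time, commute with the label derivative, then apply the chain rule — yields the companion equation for $\mathcal{T}$ in \eqref{eq:T_mapping}, the only difference being an extra minus sign that arises from differentiating the round-trip identity $\mathcal{T}\mathcal{F} = \bm I$ of Remark~\ref{remark:roundtrip_FT} along the flow.

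I expect the principal difficulty to be bookkeeping rather than analysis: one must keep scrupulous track of which symbol is the frozen Lagrangian label and which is the live Eulerian coordinate, since the statement reuses $\bm x$ for both roles, and one must justify the interchange of the time and label derivatives (equality of mixed partials), which is licensed provided $\bm\phi$ is $C^2$ in $(\bm X, t)$. Once the material derivative is correctly recognized on the left-hand side, the remainder is a one-line chain-rule computation.
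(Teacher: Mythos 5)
Your proposal is correct and follows essentially the same route as the paper's proof: both identify the left-hand side as the material derivative of $\mathcal{F}_{ij}$, compute that derivative along a trajectory by commuting the time derivative with the label derivative and applying the chain rule to $\partial_t\phi_i = u_i[\bm\phi,t]$ to obtain $\frac{\partial u_i}{\partial x_k}\mathcal{F}_{kj}$, and then equate the Eulerian expansion of the material derivative with this Lagrangian computation. Your explicit separation of the frozen label $\bm X$ from the current position $\bm x$ is in fact cleaner than the paper's reuse of $\bm x$ for both roles, but the argument is the same.
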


\begin{proof}

It can be deduced from the chain rule of differentiation and the second equation in Equation~\ref{eq:Fxt} that:
\begin{equation}
\begin{aligned}
\frac{D \mathcal{F}_{ij}(\bm \phi,t)}{D t} &= \frac{\partial \mathcal{F}_{ij}(\bm \phi,t)}{\partial t} + \frac{\partial \mathcal{F}_{ij}(\bm \phi,t)}{\partial \phi_k} \frac{\partial \phi_k(\bm x,t)}{\partial t} \\
&= \frac{\partial \mathcal{F}_{ij}(\bm \phi,t)}{\partial t} + u_k(\bm \phi,t)\frac{\partial \mathcal{F}_{ij}(\bm \phi,t)}{\partial \phi_k} .
\end{aligned}
\label{eq:dFphi2}
\end{equation}

From the expression in Equation~\ref{eq:Fxt}, we also obtain:
\begin{equation}
\begin{aligned}
\frac{D \mathcal{F}_{ij}(\bm \phi,t)}{D t} 
&=\frac{\partial \phi_i(\bm x,t)}{\partial t \partial x_j}
=\frac{\partial u_i (\bm \phi,t )}{\partial x_j}\\
&=\frac{\partial  u_i (\bm \phi,t )}{\partial \bm \phi_k}\frac{\partial \phi_k(\bm x,t)}{\partial x_j}
=\frac{\partial  u_i (\bm \phi,t )}{\partial \bm \phi_k} \mathcal{F}_{kj}(\bm \phi, t).
\end{aligned}
\label{eq:dFphi3}
\end{equation}

Combining Equations~\ref{eq:dFphi2} and \ref{eq:dFphi3} gives \ref{eq:DFphi}.
\end{proof}

\begin{theorem}
Assuming
\begin{equation}
\begin{dcases}
\mathcal{T}(\bm x,t) =\frac{\partial \bm \psi(\bm x,t) }{\partial \bm x}\\
\frac{\partial \bm \psi(\bm x,t)}{\partial t} = -\frac{\partial \bm \psi(\bm x,t)}{\partial x_i} u_i (\bm x, t)
\end{dcases}
\label{eq:Txt}
\end{equation}
then
\begin{equation}
\frac{\partial  \mathcal{T}_{ij}(\bm x,t)}{\partial t}+u_k(\bm x,t)\frac{\partial \mathcal{T}_{ij}(\bm x,t)}{\partial x_k} =- \mathcal{T}_{ik}(\bm x,t)\frac{\partial u_{k}(\bm x,t)}{\partial x_j}.
\end{equation}
\end{theorem}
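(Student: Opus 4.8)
The plan is to mirror the structure of the preceding theorem for $\mathcal{F}$, but to exploit the fact that the defining evolution for $\bm\psi$ is, after rearrangement, nothing but a pure advection (material-invariance) statement. First I would rewrite the second line of Equation~\ref{eq:Txt} by moving the advection term to the left-hand side, so that
\begin{equation}
\frac{\partial \psi_i(\bm x,t)}{\partial t} + u_k(\bm x,t)\frac{\partial \psi_i(\bm x,t)}{\partial x_k} = 0,
\end{equation}
that is, $D\bm\psi/Dt = 0$. This simply encodes that the backward map returns each material point to its time-invariant initial label, so its components are constant along particle trajectories.

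The key step is then to differentiate this scalar advection identity (for the $i$-th component) with respect to $x_j$ and to recognize the emerging objects as entries of $\mathcal{T}$. Applying $\partial/\partial x_j$ and using the product rule on the advection term yields three contributions, namely $\partial^2\psi_i/(\partial t\,\partial x_j)$, $(\partial u_k/\partial x_j)(\partial\psi_i/\partial x_k)$, and $u_k\,\partial^2\psi_i/(\partial x_j\,\partial x_k)$. Invoking the equality of mixed partials, the first term becomes $\partial\mathcal{T}_{ij}/\partial t$ and the third becomes $u_k\,\partial\mathcal{T}_{ij}/\partial x_k$, while in the middle term I use $\partial\psi_i/\partial x_k = \mathcal{T}_{ik}$ directly from the definition in Equation~\ref{eq:Txt}.

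Collecting these and moving the term carrying the velocity gradient to the right-hand side produces exactly the claimed equation. I expect no serious obstacle: the computation is a single differentiation followed by identification of terms. The only points requiring care are the index bookkeeping, so that the contracted index $k$ pairs correctly between $\mathcal{T}_{ik}$ and $\partial u_k/\partial x_j$, and the justification for commuting the spatial and temporal derivatives, which is legitimate under the smoothness assumed for $\bm\psi$. Unlike the $\mathcal{F}$ proof, no two-way evaluation of a material derivative is needed here, since the advection form of the $\bm\psi$ equation already isolates the transport operator $\partial_t + u_k\partial_{x_k}$ on the left, making the derivation more direct.
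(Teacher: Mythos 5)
Your proposal is correct and follows essentially the same route as the paper: both differentiate the evolution equation for $\bm\psi$ with respect to $x_j$, commute the mixed partials, apply the product rule to the $u_k\,\partial\psi_i/\partial x_k$ term, and identify the resulting derivatives of $\psi_i$ as entries of $\mathcal{T}$. The only cosmetic difference is that you first move the advection term to the left to read the equation as $D\bm\psi/Dt=0$ before differentiating, whereas the paper differentiates the right-hand side directly and rearranges afterward.
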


\begin{proof}

It can be deduced from Equation~\ref{eq:Txt} that:
\begin{equation}
\begin{aligned}
\frac{\partial  \mathcal{T}_{ij}(\bm x,t)}{\partial t}&=
\frac{\partial^2 \psi_i(\bm x,t)}{\partial  t \partial x_j}
=-\frac{\partial }{\partial x_j}\left[\frac{\partial \psi_i(\bm x,t)}{\partial x_k} u_k (\bm x, t)\right]\\
&=-u_k(\bm x,t)\frac{\partial \mathcal{T}_{ij}(\bm x,t)}{\partial x_k} - \mathcal{T}_{ik}(\bm x,t)\frac{\partial u_{k}(\bm x,t)}{\partial x_j} .
\end{aligned}
\end{equation}
\end{proof}

\section{Evolution of Impulse and Its Gradients}
\label{sec:mdm}

\begin{theorem}
\begin{equation}
\bm m(\bm x ,t)= m_i(\bm \psi ,0)\frac{\partial \psi_i(\bm x,t)}{\partial \bm x}.
\label{eq:mphit}
\end{equation}
\end{theorem}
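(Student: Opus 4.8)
The plan is to verify that the right-hand side of Equation~\ref{eq:mphit}, regarded as a function of $(\bm x, t)$, satisfies exactly the impulse transport equation (the first line of Equation~\ref{eq:impulse_euler_eq}) together with the correct data at $t=0$; uniqueness of the characteristic evolution then forces it to coincide with the impulse $\bm m(\bm x,t)$. First I would fix notation by writing the candidate in components,
\begin{equation}
\tilde m_j(\bm x,t) := m_i(\bm\psi(\bm x,t),0)\,\mathcal{T}_{ij}(\bm x,t), \qquad \mathcal{T}_{ij} = \frac{\partial \psi_i}{\partial x_j},
\end{equation}
which is precisely the $j$-th component of the matrix expression $\mathcal{T}_t^{T}\,\bm m(\bm\psi,0)$ appearing in Equation~\ref{eq:evolve_imp}. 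Checking the initial condition is immediate: at $t=0$ we have $\bm\psi(\bm x,0)=\bm x$ and $\mathcal{T}_{ij}(\bm x,0)=\delta_{ij}$, so $\tilde m_j(\bm x,0)=m_j(\bm x,0)$.

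Next I would compute the material derivative $\frac{D\tilde m_j}{Dt}$ by the product rule, splitting it into a term that differentiates the pulled-back initial impulse $m_i(\bm\psi,0)$ and a term that differentiates $\mathcal{T}_{ij}$. The observation that kills the first term is that $\bm\psi(\bm x,t)$ is a \emph{material label}: from the defining PDE of the backward map (the second line of Equation~\ref{eq:Txt}) one computes $\frac{D\psi_i}{Dt} = \partial_t\psi_i + u_k\,\partial_{x_k}\psi_i = -\frac{\partial\psi_i}{\partial x_k}u_k + u_k\frac{\partial\psi_i}{\partial x_k} = 0$, whence $\frac{D}{Dt}\bigl[m_i(\bm\psi,0)\bigr]=0$ by the chain rule. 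For the second term I would substitute the backward-Jacobian evolution law $\frac{D\mathcal{T}}{Dt}=-\mathcal{T}\,\nabla\bm u$ from Equation~\ref{eq:T_mapping} (equivalently, the second theorem of Appendix~\ref{sec:Jacobian_prov}), i.e. $\frac{D\mathcal{T}_{ij}}{Dt}=-\mathcal{T}_{ik}\frac{\partial u_k}{\partial x_j}$.

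Combining the two contributions yields
\begin{equation}
\frac{D\tilde m_j}{Dt} = -\,\frac{\partial u_k}{\partial x_j}\, m_i(\bm\psi,0)\,\mathcal{T}_{ik} = -\,\frac{\partial u_k}{\partial x_j}\,\tilde m_k,
\end{equation}
which is exactly the componentwise form of $\frac{D\bm m}{Dt}=-(\nabla\bm u)^T\bm m$. Since $\tilde{\bm m}$ and $\bm m$ solve the same first-order transport equation with identical data at $t=0$, they agree for all $t$, establishing Equation~\ref{eq:mphit}.

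The main obstacle—indeed the only nontrivial step—is recognizing and justifying that the pulled-back initial impulse is advected as a passive label with zero material derivative; once that is in hand, the rest is bookkeeping with the chain rule and the already-proven Jacobian evolution law. A minor care point is keeping the transpose and index placement consistent between the convention $\mathcal{T}_{ij}=\partial\psi_i/\partial x_j$ and the $(\nabla\bm u)^T$ on the right-hand side of the impulse equation, so that the final contraction lands on the correct free index.
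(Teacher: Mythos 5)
Your proof is correct and follows essentially the same route as the paper: the paper forms the difference $\bm g(\bm x,t) = \bm m(\bm\phi,t) - m_i(\bm x,0)\,\partial\psi_i(\bm\phi,t)/\partial\bm\phi$ and shows it solves a homogeneous linear ODE with zero initial data, which is just the Lagrangian-coordinate phrasing of your argument that the candidate expression satisfies the impulse transport equation with the correct data at $t=0$ and must therefore coincide with $\bm m$ by uniqueness. Both arguments hinge on exactly the two facts you identify, namely $D\bm\psi/Dt = \bm 0$ and $D\mathcal{T}/Dt = -\mathcal{T}\,\bm\nabla\bm u$.
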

\begin{proof}
Let us posit
\begin{equation}
\bm g(\bm x,t) \equiv \bm m(\bm \phi ,t)- m_i(\bm x ,0)\frac{\partial \psi_i(\bm \phi,t)}{\partial \bm \phi}.
\end{equation}
Clearly, $\bm g(\bm x,0) = \bm 0$. Furthermore,
\begin{equation}
\begin{aligned}
    &\frac{\partial \bm g(\bm x,t)}{\partial t }\\
    =&\frac{\partial \bm m(\bm \phi ,t)}{\partial t}+\frac{\partial \bm m(\bm \phi ,t)}{\partial\phi_i}\frac{\partial\phi_i}{\partial t}\\
   & - m_i(\bm x ,0)\left[\frac{\partial^2 \psi_i(\bm \phi,t)}{\partial \bm \phi \partial t}+\frac{\partial^2 \psi_i(\bm \phi,t)}{\partial \bm \phi \partial \phi_j}\frac{\partial \phi_j(\bm x,t)}{\partial t}\right]\\
    =& -\frac{\partial u_i(\bm \phi,t)}{\partial \bm \phi} m_i(\bm \phi,t)\\
   & -m_i(\bm x ,0)\left\{\frac{\partial}{\partial \bm \phi} \left[-\frac{\partial \psi_i(\bm \phi,t)}{\partial \bm \phi_j}u_j(\bm \phi,t)\right]+\frac{\partial^2 \psi_i(\bm \phi,t)}{\partial \bm \phi \partial \phi_j}u_j(\bm \phi,t)\right\}\\
    =& -\frac{\partial u_j(\bm \phi,t)}{\partial \bm \phi} m_j(\bm \phi,t)+ m_i(\bm x ,0)\frac{\partial u_j(\bm \phi,t)}{\partial \bm \phi} \frac{\partial \psi_i(\bm \phi,t)}{\partial \bm \phi_j}\\
    =&-g_i(\bm x,t)\frac{\partial  u_i(\bm \phi,t)}{\partial \bm \phi }.
\end{aligned}
\end{equation}
Thus, $\bm g$ satisfies
\begin{equation}
    \begin{dcases}
    \frac{\partial \bm g(\bm x,t)}{\partial t }=-g_i(\bm x,t)\frac{\partial  u_i(\bm \phi,t)}{\partial \bm \phi },\\
    \bm g(\bm x,0) = \bm 0
    \end{dcases}
\end{equation}
The unique solution to this equation is $\bm g(\bm x,t )\equiv \bm 0$.
\end{proof}

\begin{theorem}
    \begin{equation}
    \begin{aligned}
    \label{eq:evolve_grad_imp_i}
    &\frac{\partial m_j(\bm x,t)}{\partial x_k} \\
    =& 
    \frac{\partial m_i(\bm \psi,0)}{\partial \psi_l}\frac{\partial \psi_l(\bm x, t)}{\partial x_k} \frac{\partial \psi_i(\bm x, t)}{\partial x_j} + m_i(\bm \psi,0) \frac{\partial^2 \psi_i(\bm x, t)}{\partial x_j \partial x_k}.
    \end{aligned}
\end{equation}
\end{theorem}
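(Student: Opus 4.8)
The plan is to obtain the impulse-gradient evolution simply by differentiating the impulse-evolution formula established in the preceding theorem (Equation~\ref{eq:mphit}). Written in index form, that result states
\begin{equation}
m_j(\bm x, t) = m_i(\bm \psi, 0)\, \frac{\partial \psi_i(\bm x, t)}{\partial x_j},
\end{equation}
so the target identity is nothing more than $\partial/\partial x_k$ applied to both sides, with the understanding that $\bm \psi = \bm \psi(\bm x, t)$ carries all of the spatial dependence of the initial-time impulse $m_i(\bm \psi, 0)$.

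First I would apply the product rule to the right-hand side, splitting the $x_k$-derivative into a term acting on the scalar factor $m_i(\bm \psi, 0)$ and a term acting on the Jacobian factor $\partial \psi_i / \partial x_j$. The second of these immediately produces the Hessian contribution $m_i(\bm \psi, 0)\, \partial^2 \psi_i(\bm x, t)/(\partial x_j \partial x_k)$, matching the final term of the claimed formula.

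Next I would handle the remaining term by the chain rule: since $m_i(\bm \psi, 0)$ depends on $\bm x$ only through the backward map $\bm \psi$, its $x_k$-derivative is $(\partial m_i(\bm \psi, 0)/\partial \psi_l)(\partial \psi_l(\bm x, t)/\partial x_k)$, summed over the intermediate index $l$. Multiplying this by the surviving factor $\partial \psi_i(\bm x, t)/\partial x_j$ reproduces the first term of the target equation, completing the derivation.

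The computation is entirely routine; the only point demanding care is index bookkeeping—in particular, introducing a fresh summation index $l$ for the chain-rule contraction so that it is not conflated with $i$, $j$, or $k$, and keeping clear which derivatives are taken with respect to $\bm x$ versus $\bm \psi$. There is no genuine analytic obstacle: the identity is a direct consequence of the product and chain rules applied to Equation~\ref{eq:mphit}. Note finally that this index-level statement is precisely what underlies the matrix form in Equation~\ref{eq:evolve_grad_imp}, where the two contributions appear respectively as the conjugated gradient term $\mathcal{T}_t^{T}\,\bm \nabla_{\bm \psi}\bm m\,\mathcal{T}_t$ and the Hessian term $\bm \nabla \mathcal{T}_t^{T}\,\bm m$.
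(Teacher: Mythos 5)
Your derivation is correct and is exactly the paper's own proof: the paper likewise obtains Equation~\ref{eq:evolve_grad_imp_i} by differentiating Equation~\ref{eq:mphit} with respect to $x_k$, with the product rule yielding the Hessian term and the chain rule through $\bm\psi$ yielding the conjugated gradient term. Your closing remark on the equivalence with the matrix form in Equation~\ref{eq:evolve_grad_imp} also matches the paper's concluding observation.
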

Taking the derivative of Equation~\ref{eq:mphit} with respect to \revv{$x_k$} directly yields Equation~\ref{eq:evolve_grad_imp_i}.
We remark that, due to Equation~\ref{eq:FT}, Equations~\ref{eq:evolve_grad_imp_i} and \ref{eq:evolve_grad_imp} are equivalent.

\section{Equivalence of forward evolution and backward evolution of $\mathcal{T}$}
\label{sec:proof_T_forward_backward}
Consider a virtual particle that moves from $\bm{x}_0$ to $\bm{x}_n$ in $n$ timesteps. In timestep $i$, the particle occupies a position $\bm{x}_i$, moves with velocity $\bm{u}_i$ and is associated with the backward map Jacobian, denoted as $\mathcal{T}_i$. Initially, $\mathcal{T}_0$ is initialized as the identity matrix.

At timestep $n$, to get the impulse at $\bm{x}_n$, $\mathcal{T}_n$ has to be calculated. In the process described by \citet{deng2023fluid}, which utilizes an Eulerian grid and a neural velocity buffer, the calculation of $\mathcal{T}$ involves backtracing the virtual particle's trajectory from $\bm{x}_n$ to $\bm{x}_0$. This process progresses in reverse chronological order, therefore we refer to it as "backward evolution of $\mathcal{T}$". Conversely, in our method, we adopt a "forward evolution of $\mathcal{T}$," where Lagrangian particles progress from $\bm{x}_0$ to $\bm{x}_n$, and $\mathcal{T}$ is updated sequentially at each timestep. During backward evolution, the update of $\mathcal{T}$ in timestep $i$ involves left-multiplying the existing result with $\left(\textbf{I} - \nabla \bm{u}_i\Delta t\right)$, proceeding from step $n$ to $1$. In forward evolution, however, the update of $\mathcal{T}$ in timestep $i$ entails right-multiplying the existing result with $\left(\textbf{I} - \nabla \bm{u}_i\Delta t\right)$, progressing from step $1$ to $n$. Despite these differing approaches, both backward and forward evolution start from the identity matrix and ultimately converge to the same result:
\begin{equation}
    \mathcal{T}_n = \Pi_{i=1}^n\left(\textbf{I}-\nabla\bm{u}_i\Delta t\right).
\end{equation}


\section{Additional Pseudocode}
\label{sec:additional_pseudocode}
In this section, we provide additional pseudocodes to supplement Section~\ref{sec:Eulerian_Lagrangian_Framework}.

Algorithm~\ref{alg:midpoint} outlines the second-order, midpoint method, and Algorithm~\ref{alg:RK4} details the procedure for our custom RK4 integration scheme.

\begin{algorithm}
\caption{Midpoint Method}
\label{alg:midpoint}
\begin{flushleft}
        \textbf{Input:} $\bm{u}$~\\
        \textbf{Output:} $\bm{u}^\text{mid}$
\end{flushleft}
\begin{algorithmic}[1]
\State Reset $\bm\psi, \mathcal{T}$ to identity;
\State March $\bm\psi, \mathcal{T}$ with $\bm{u}$ and $-0.5\Delta t$ using Alg. 2 in \citet{deng2023fluid};
\State $\bm{m}^\text{mid} \gets \mathcal{T}^T\bm{u}(\bm\psi)$;
\State $\bm{u}^\text{mid} \gets \textbf{Poisson}({\bm{m}}^\text{mid} )$;
\end{algorithmic}
\end{algorithm}

\begin{algorithm}
\caption{Interleaved RK4 for $\bm x$, $\mathcal{T}$}
\label{alg:RK4}
\begin{flushleft}
        \textbf{Input:} $\bm{u}$, $\bm x$, $\mathcal{T}$, $\Delta t$~\\
        \textbf{Output:} $\bm x_\text{next}$, $\mathcal{T}_\text{next}$
\end{flushleft}
\begin{algorithmic}[1]
\State $(\bm{u}_1, \nabla \bm{u}\vert_1) \gets \textbf{Interpolate}(\bm{u}, \bm x)$;
\State $\frac{\partial \mathcal{T}}{\partial t}\vert_1 \gets (\nabla \bm{u}\vert_1)^T \mathcal{T}$;
\State $\bm x_1 \gets \bm x + 0.5 \cdot \Delta t \cdot \bm{u}_1$;
\State $\mathcal{T}_1 \gets \mathcal{T} - 0.5 \cdot \Delta t \cdot \frac{\partial \mathcal{T}}{\partial t}\vert_1$;
\State $(\bm{u}_2, \nabla \bm{u}\vert_2)\gets \textbf{Interpolate}(\bm{u}, \bm x_1)$;
\State $\frac{\partial \mathcal{T}}{\partial t}\vert_2 \gets (\nabla \bm{u}\vert_2)^T \mathcal{T}_1$;
\State $\bm x_2 \gets \bm x + 0.5 \cdot \Delta t \cdot \bm{u}_2$;
\State $\mathcal{T}_2 \gets \mathcal{T} - 0.5 \cdot \Delta t \cdot \frac{\partial \mathcal{T}}{\partial t}\vert_2$;
\State $(\bm{u}_3, \nabla \bm{u}\vert_3)\gets \textbf{Interpolate}(\bm{u}, \bm x_2)$;
\State $\frac{\partial \mathcal{T}}{\partial t}\vert_3 \gets (\nabla \bm{u}\vert_3)^T \mathcal{T}_2$;
\State $\bm x_3 \gets \bm x + \Delta t \cdot \bm{u}_3$;
\State $\mathcal{T}_3 \gets \mathcal{T} - \Delta t \cdot \frac{\partial \mathcal{T}}{\partial t}\vert_3$;
\State $(\bm{u}_4, \nabla \bm{u}\vert_4)\gets \textbf{Interpolate}(\bm{u}, \bm x_3)$;
\State $\frac{\partial \mathcal{T}}{\partial t}\vert_4 \gets (\nabla \bm{u}\vert_4)^T \mathcal{T}_3$;
\State $\bm x_\text{next} \gets \bm x + \Delta t \cdot \frac{1}{6}\cdot (\bm{u}_1 + 2 \cdot \bm{u}_2 + 2 \cdot \bm{u}_3 + \bm{u}_4)$;
\State $\mathcal{T}_\text{next} \gets \mathcal{T} - \Delta t \cdot  \frac{1}{6} \cdot (\frac{\partial \mathcal{T}}{\partial t}\vert_1 
 + 2 \cdot \frac{\partial \mathcal{T}}{\partial t}\vert_2 + 2 \cdot\frac{\partial \mathcal{T}}{\partial t}\vert_3 + \frac{\partial \mathcal{T}}{\partial t}\vert_4)$;
\end{algorithmic}
\end{algorithm}

\section{Backward Map Hessian}
\label{sec:hessian_pic_apic}
\revv{
Excluding the Hessian term from Equation~\ref{eq:evolve_grad_imp} is equivalent to substituting Equation~\ref{eq:evolve_grad_imp_short} into Equation~\ref{eq:imp_P2G}. This leads to the following equations:
\begin{equation}
    \begin{aligned}
        \bm m_i &= \sum_p w_{ip}(\bm m_c^p + \nabla \bm m_c^p (\bm x_i - \bm x_p)) \,/\, \sum_p w_{ip} \\
        &= \sum_p w_{ip}(\bm m_c^p + (\mathcal{T}_{[b, c]}^p)^T \nabla\bm m_b^p \mathcal{T}_{[b, c]}^p (\bm x_i - \bm x_p)) \,/\, \sum_p w_{ip} \\
        &= \sum_p w_{ip}(\bm m_c^p + (\mathcal{T}_{[b, c]}^p)^T \nabla\bm m_b^p (\bm\psi(\bm x_i) - \bm\psi(\bm x_p))) \,/\, \sum_p w_{ip} \\
        &= \sum_p w_{ip}(\bm m_c^p + (\mathcal{T}_{[b, c]}^p)^T (\bm m_b^p[\bm\psi(\bm x_i)] - \bm m_b^p[\bm\psi(\bm x_p)])) \,/\, \sum_p w_{ip} \\
        &= \sum_p w_{ip}(\bm m_c^p + ((\mathcal{T}_{[b, c]}^p)^T\bm m_b^p[\bm\psi(\bm x_i)] \\
        &\quad\quad\quad\quad\quad\quad - (\mathcal{T}_{[b, c]}^p)^T\bm m_b^p[\bm\psi(\bm x_p)])) \,/\, \sum_p w_{ip} \\
        &= \sum_p w_{ip}(\bm m_c^p + ((\mathcal{T}_{[b, c]}^p)^T\bm m_b^p[\bm\psi(\bm x_i)] - \bm m_c^p)) \,/\, \sum_p w_{ip} \\
        &= (\sum_p w_{ip}(\mathcal{T}_{[b, c]}^p)^T  \,/\, \sum_p w_{ip})\bm m_b^p[\bm\psi(\bm x_i)] \\
    \end{aligned}
\end{equation}
where the third and fourth equality marks rely on the assumption that $\bm x_p$ is very close to $\bm x_i$. This formulation implies interpolating $\mathcal{T}$ from particles to the grid in a PIC manner. Subsequently, this interpolated $\mathcal{T}$ is utilized to evolve the impulse at $\bm\psi(\bm x_i)$ from time $b$ to time $c$. Here, $\bm\psi(\bm x_i)$ represents the backtracked location at time $b$ for the particle currently at $\bm x_i$.
}

\revv{
If we incorporate the Hessian term, we have the following equations:
\begin{equation}
    \begin{aligned}
        \bm m_i &= \sum_p w_{ip}(\bm m_c^p + \nabla \bm m_c^p (\bm x_i - \bm x_p)) \,/\, \sum_p w_{ip} \\
        &= \sum_p w_{ip}(\bm m_c^p + ((\mathcal{T}_{[b, c]}^p)^T \nabla\bm m_b \mathcal{T}_{[b, c]}^p \\
        &\quad\quad\quad\quad\quad\quad+ (\nabla\mathcal{T}_{[b, c]}^p)^T\bm m_b) (\bm x_i - \bm x_p)) \,/\, \sum_p w_{ip} \\
        &= \sum_p w_{ip}(\bm m_c^p + (\mathcal{T}_{[b, c]}^p)^T \nabla\bm m_b \mathcal{T}_{[b, c]}^p (\bm x_i - \bm x_p)) \,/\, \sum_p w_{ip} \\
        &\quad\quad\quad\quad\quad\quad+ \sum_p w_{ip}(\nabla\mathcal{T}_{[b, c]}^p)^T\bm m_b (\bm x_i - \bm x_p)) \,/\, \sum_p w_{ip} \\
        &= (\sum_p w_{ip}(\mathcal{T}_{[b, c]}^p)^T  \,/\, \sum_p w_{ip})\bm m_b^p[\bm\psi(\bm x_i)] \\
        &\quad\quad\quad\quad\quad\quad+ (\sum_p w_{ip}(\nabla \mathcal{T}_{[b, c]}^p)^T(\bm x_i - \bm x_p)  \,/\, \sum_p w_{ip})\bm m_b^p[\bm\psi(\bm x_i)] \\
        &= (\sum_p w_{ip}((\mathcal{T}_{[b, c]}^p)^T + (\nabla \mathcal{T}_{[b, c]}^p)^T(\bm x_i - \bm x_p))  \,/\, \sum_p w_{ip})\bm m_b^p[\bm\psi(\bm x_i)] \\
    \end{aligned}
\end{equation}
where the fourth equality mark relies on the assumption that $\bm x_p$ is very close to $\bm x_i$. This formula implies that incorporating the Hessian term is equivalent to shifting the interpolation of $\mathcal{T}$ to APIC manner.
}

\section{Interpolation Details}
\label{sec:interp_kernel}
When interpolating values between $x_k$ and $x_l$, we apply the MPM \cite{jiang2016material} interpolation scheme using the quadratic kernel:
\begin{equation}
w_{kl}=
\begin{cases}
    \frac{3}{4} - \vert x_k - x_l \vert^2 & 0 \leq \vert x_k - x_l \vert < \frac{1}{2},\\
    \frac{1}{2}(\frac{3}{2}-\vert x_k - x_l \vert)^2 & \frac{1}{2} \leq \vert x_k - x_l \vert < \frac{3}{2},\\
    0    & \frac{3}{2} \leq \vert x_k - x_l \vert.
\end{cases}
\end{equation}

\end{document}